\documentclass{article}

\usepackage{amssymb,mathrsfs,xcolor,bbold,comment,enumerate,graphicx,float}
\usepackage{amsthm}
\usepackage{amsmath}

\usepackage{tikz}
\usepackage{tikz-cd}
\usepackage{enumerate}
\usepackage[left=3cm, right=3cm]{geometry}
\usepackage{hyperref}
\usepackage{cancel}

\newcommand{\PWp}{{\mathbb P}}
\newcommand{\PWq}{{\mathbb Q}}

\newcommand{\abs}[1]{\lvert#1\rvert}
\newcommand{\norm}[1]{\lVert#1\rVert}

\newcommand{\R}{\mathbb{R}} 

\newcommand{\QW}{\mathbb{Q}}

\newcommand{\E}{\mathbb{E}}

\newcommand{\ind}{\mathbf{1}}

\newcommand{\Space}{(\Omega,\mathcal{F},\mathbb{P})}

\newcommand{\PW}{\mathbb{P}}

\newcommand{\N}{\mathbb{N}}

\newcommand{\Bd}{\mathcal{L}^{\infty}(\Omega,\mathcal{F})}

\newcommand{\Borel}{\mathbb{B}}

\newcommand{\Om}{\Omega}

\newcommand{\om}{\omega}

\newcommand{\mbu}{\mathbb{u}}

\newcommand{\brv}{\mathcal{L}^{\infty}(\Omega,\mathcal{F})}
\newcommand{\brvG}{\mathcal{L}^{\infty}(\Omega,\mathcal{G})}

\newcommand{\Ep}[1]{\E_\PW\left[#1\right]}
\newcommand{\Eq}[1]{\E_\QW\left[#1\right]}
\newcommand{\Acal}{{\mathcal A}}

\newcommand{\Ecal}{{\mathcal E}}
\newcommand{\Fcal}{{\mathcal F}}
\newcommand{\Gcal}{{\mathcal G}}

\newcommand{\Lcal}{{\mathcal L}}

\newcommand{\Ncal}{{\mathcal N}}

\newcommand{\cm}[1]{\mathfrak{m}\left(#1\right)}
\newcommand{\snorm}[1]{\left\|#1\right\|_{\infty}}

\newcommand{\Linf}{L^{\infty}(\Omega,\mathcal{F},\mathbb{P})}
\newcommand{\LinfG}{L^{\infty}(\Omega,\mathcal{G},\mathbb{P})}

\newcommand{\sigmalg}{$\sigma$-algebra }
\newcommand{\ug}{\mathbb{u}_\mathcal{G} }
\newcommand{\extR}{\left[-\infty, +\infty\right]}

\newcommand{\ima}[1]{\text{Im}_\omega(#1)}
\newcommand{\st}{\middle |}
\newcommand{\dP}{\mathrm{d}\mathbb{P}(\om)}

\newcommand{\nullset}{\Ncal_\succeq}

\newtheorem{theorem}{Theorem}
\newtheorem{assumption}[theorem]{Assumption}

\newtheorem{definition}{Definition}
\newtheorem{lemma}[theorem]{Lemma}
\newtheorem{corollary}[theorem]{Corollary}
\newtheorem{remark}[theorem]{Remark}
\newtheorem{proposition}[theorem]{Proposition}
\newtheorem{example}{Example}

\theoremstyle{remark}

\begin{document}

	\title{On conditioning and consistency for nonlinear functionals}

	\author{Edoardo Berton \thanks{Università Cattolica del Sacro Cuore, Milano, edoardo.berton@unicatt.it} \and Alessandro Doldi \thanks{Università degli Studi di  Firenze, alessandro.doldi@unifi.it} \and Marco Maggis \thanks{ Università degli Studi di  Milano, marco.maggis@unimi.it}}
	
	\date{}

	\maketitle
	
	\begin{abstract}
		\noindent We consider a family of conditional nonlinear expectations  defined on the space of bounded random variables and indexed by the class of all the sub-sigma-algebras of a given underlying sigma-algebra. We show that if this family satisfies a natural consistency property, then it collapses to a conditional certainty equivalent defined in terms of a state-dependent utility function.  This result is obtained by embedding our problem in a decision theoretical framework and providing a new characterization of the Sure-Thing Principle. In particular we prove that this principle characterizes those preference relations which admit consistent backward conditional projections. We build our analysis on state-dependent preferences for a general state space as in Wakker and Zank (1999) and show that their numerical representation admits a continuous version of the state-dependent utility. In this way, we also answer positively to a conjecture posed in the aforementioned paper. 
	\end{abstract}
	
	\noindent \textbf{Keywords}: Sure-Thing Principle, pointwise continuity, state-dependent expected utility, conditional certainty equivalent, nonlinear expectation, time consistency, Chisini mean.    
	

	\section{Introduction}
	In the last decades an intensive interplay between Decision Theory and Financial Mathematics led to the development of stochastic portfolio optimization in financial markets, started in the seminal contribution of Merton \cite{Me73} and later followed by a flourishing stream of literature. The maximization of an economic agent's utility can be formulated as a stochastic control problem adopting a preference ordering $\succeq$ on the space of random variables. Such a preference is usually represented as an expected utility and therefore automatically satisfies the so-called Sure-Thing Principle.
	\\ The Sure-Thing Principle (commonly referred to as Axiom P2) began to enjoy popularity after the celebrated book by L.J. Savage \cite{Savage54}, where it was illustrated by the following (to some extent tautological) example
	\begin{center}
		\emph{A businessman contemplates buying a certain piece of property. He considers the outcome of the next presidential election relevant. So, to clarify the matter to himself, he asks whether he would buy if he knew that the Democratic candidate were going to win, and decides that he would. Similarly, he considers whether he would buy if he knew that the Republican candidate were going to win, and again finds that he would. Seeing that he would buy in either event, he decides that he should buy, even though he does not know which event obtains, or will obtain, as we would ordinarily say.} \cite[page 21]{Savage54}.
	\end{center}
	We can rephrase the example as follows: if a decision maker prefers a position $f$ to $g$, either knowing that the event $B$ or $B^C$ (the complement) has occurred, then she should prefer $f$ to $g$ even if she knows nothing about $B$.  As depicted in \cite[page 98]{Gilboa09} ``P2 is akin to requiring that you have conditional preferences, namely, preferences between $f$ and $g$ conditional on $A$ occurring, and that these conditional preferences determine your choice between $f$ and $g$ if they are equal in case $A$ does not occur''. In this paper we explore how this principle is a strongly characterizing property for the existence of conditional (time consistent) backward projections of the initial ordering relation. 
	The validity of the Sure-Thing Principle has been questioned and highly debated as a consequence of the Ellsberg Paradox, (see e.g. \cite[Section 12.3]{Gilboa09}). From this perspective our findings might be seen as arguing  the somewhat excessive strength of this axiom in dynamic decision making.     
	\\We work with a preference $\succeq$ on the space of bounded random variables $\brv$ which satisfies monotonicity and pointwise continuity. We show that for this class of preferences the Sure-Thing Principle is equivalent to the existence of a $\Gcal$-measurable $\succeq$-equivalent (which we shall refer to as conditional Chisini mean \cite{DM23}) for any $f\in\brv$ and $\Gcal$ being a $\sigma$-algebra contained in $\Fcal$ (see Theorem \ref{characterization} below).  The conditional Chisini mean is a nonlinear extension of the standard concept of conditional expectation, which takes inspiration from the original Chisini's intuition \cite{Ch29} that the notion of ``expected value'' should be introduced as the solution of a functional equation. 
	\\Additionally we provide in Theorem \ref{CCE} a representation of the conditional Chisini mean in the form $\ug^{-1}(\Ep{\mbu(f)\middle|\Gcal})$, where $\Ep{\mbu(f)\middle|\Gcal}$ is the conditional expectation of the random outcome $\omega\mapsto \mbu(\omega,f(\omega))$ and $\ug(\omega,x)=\Ep{\mbu(x)\middle|\Gcal}(\omega)$ is the $\Gcal$-measurable projection of the state-dependent utility $\mbu$. This representation can be easily compared to the general formulation proved in \cite{CVMM}, Lemma 5.2, under additional law invariance assumptions, which traces back to the celebrated Nagumo, de Finetti, Kolmogorov Theorem (see \cite{HLP52} for a review). It is also worth mentioning that Theorem \ref{CCE} leads to a theoretical foundation of \cite{FM11cce}, where nonlinear maps are \emph{a priori} assumed to be in the form of conditional certainty equivalents.
	\\This result is obtained by solving an open question posed in \cite{WZ99} regarding the continuity (and joint measurability) of the state-dependent utility appearing in the representing functional of the preference relation (see the statement of Theorem \ref{formaintegrale} in Section \ref{open:question}). From a mathematical viewpoint Theorem \ref{thm:general_thm} is the most triggering result, as it requires a quite structured and involved proof. Nevertheless such a result has the immediate benefit of clarifying the applicability of state-dependent preferences in optimization problems such as the maximization over a set of strategies.
	
	\subsection{Representation of time consistent nonlinear expectations}
	
	The representation of the conditional Chisini mean depicted so far allows to embed our theoretical results in the framework of nonlinear expectations. This deepens the understanding of the role of the Sure-Thing Principle in time consistent conditioning. 
	\\ Kupper and Schachermayer \cite{KuSch09} proved the noteworthy property that, under suitable structural assumptions, every time consistent family of strictly monotone, continuous and law invariant nonlinear maps take the form of a conditional certainty equivalent
	\begin{equation}\label{KuScha} u^{-1}\left(\Ep{u(X)\mid \Fcal_t}\right),\end{equation}
	where $u$ is a strictly increasing function on a real interval. This result can be seen as a dynamic version of the aforementioned Nagumo-de Finetti-Kolmogorov Theorem.  
	\\ A similar perspective can be observed in the recent stream of literature regarding the phenomenon of ``collapse to the mean'' for certain classes of functionals as in \cite{BKMS21} for the law invariant case. Indeed the findings of our paper relate significantly to Theorem 6.1 in \cite{Delbaen21},  where the assumption of law invariance is replaced with commonotonicity. 
	In the realm of nonlinear expectations which are possibly not law invariant, the most important contribution comes from the theory of $g$-expectations related to Backward Stochastic Differential Equations, initiated and developed in \cite{Peng97, CoquetPeng02, Peng04}. The main drawback of such theory is that the notion of ``conditionability'' is constrained to the use of the Brownian filtration. Therefore in general it is not possible to find a conditional nonlinear expectation for finitely generated $\sigma$-algebras, that is also consistent with the considered $g$-expectation.  
	\\ We overcome this issue by taking into consideration a non atomic\footnote{Actually, requiring the existence of three disjoint sets in $\Fcal$ with positive probability is sufficient to prove Theorem \ref{repr:nonlinear}} probability space $(\Omega,\Fcal,\PW)$, the collection of sub-$\sigma$-algebras of $\Fcal$
	$$
	\Sigma := \left\{ \Gcal \middle| \Gcal \text{ is a $\sigma$-algebra},\; \Gcal \subseteq \Fcal \right\}, 
	$$
	and a family of conditional nonlinear maps $\{\Ecal_{\Gcal}\}_{\Gcal\in\Sigma}$, where $\Ecal_{\Gcal}:\Linf\to \LinfG$,
	with $\Linf$ (resp. $\LinfG$) the quotient space of bounded random variables with respect to $\PW$-almost sure equality. We denote with $\Ecal_{0}$ the map obtained choosing $\Gcal=\sigma(\emptyset)$, the trivial $\sigma$-algebra.
	Moreover we indicate by capital letters the elements in $X\in\Linf$, and by $\ind_A$ the indicator functions\footnote{Here we intend the equivalence class of such indicators. Later on we shall work also without quotienting adopting the same symbol to avoid heavy notations.}.
	\begin{definition}\label{def:nonlinear}
		We say that $\Ecal_{\Gcal}:\Linf\to\LinfG$ is a conditional nonlinear expectation if for any $X\in\Linf$, $A\in\Gcal$ we have $\Ecal_{\Gcal}(X\ind_A)=\Ecal_{\Gcal}(X)\ind_A$. 
		\\A family of conditional nonlinear expectations $\{\Ecal_{\Gcal}\}_{\Gcal\in\Sigma}$ is time consistent if
		\begin{equation}
			\label{tower_prop} \Ecal_0(\Ecal_{\Gcal}(X))=\Ecal_0(X) \text{ for any } \Gcal\in\Sigma \text{ and } X\in\Linf.
		\end{equation}
		
	\end{definition}
	
	\begin{assumption}\label{ass:E0}
		We assume that $\Ecal_0$ is strictly monotone (on dicotomic random variables), i.e. for any $x,y\in\R$, $Z \in \Linf$ and $A\in \Fcal$ with $\PW(A)>0$
		\begin{equation} 
			\label{strict:mon} x<y \text{ implies } \Ecal_{0}(x\ind_A+Z\ind_{A^c})<\Ecal_{0}(y\ind_A+Z\ind_{A^c}),
		\end{equation}
		and pointwise continuous, i.e. for any bounded sequence $X_n$ converging $\PW$-almost surely to $X$
		\begin{equation}\label{point:cont} \lim_{n\to\infty}\Ecal_0(X_n)=\Ecal_0(X).
		\end{equation}
	\end{assumption}

	
	The following representation result can be seen as an enhanced version of  \eqref{KuScha} dropping the law invariance. Similarly to \cite{KuSch09} and differently from Lemma 5.2 in \cite{CVMM}, one of the most interesting aspects of our result is that the Sure-Thing Principle needs not to be explicitly assumed, but is rather deduced from the time consistency of the maps involved.
	
	\begin{theorem}\label{repr:nonlinear}
		$\{\Ecal_{\Gcal}\}_{\Gcal\in\Sigma}$ is a family of time consistent conditional nonlinear expectations such that $\Ecal_0$ satisfies Assumption \ref{ass:E0} if and only if \footnote{See the beginning of Section \ref{proof:thm2} for precise definitions}  
		\[\Ecal_{\Gcal}(X)=\ug^{-1}(\Ep{\mbu(X)\st\Gcal})\quad \text{ for every }X\in\Linf,\]
		for some  $\mbu:\Omega\times\R\to \R$ which is jointly measurable, with $\mbu(\omega,0)=0$ and $\mbu(\omega,\cdot)$  continuous and strictly increasing for every $\omega\in\Omega$, $\mbu(\cdot,x)$ integrable for any $x\in \R$ and $\ug(\cdot,x)=\Ep{\mbu(x)\st\Gcal}$. 
	\end{theorem}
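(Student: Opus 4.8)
The plan is to prove the two implications separately, viewing the representation as three facts chained together: that the scalar functional $\Ecal_0$ induces a monotone, pointwise continuous preference obeying the Sure-Thing Principle; that such a preference is represented by a conditional Chisini mean via Theorems \ref{characterization} and \ref{CCE}; and that the family $\{\Ecal_\Gcal\}_{\Gcal\in\Sigma}$ is forced to coincide with that Chisini mean. For necessity, I would first define $\succeq$ on $\Linf$ by $X\succeq Y\iff\Ecal_0(X)\ge\Ecal_0(Y)$. Pointwise continuity of $\succeq$ is read off directly from \eqref{point:cont}, while monotonicity ($X\ge Y$ a.s. $\Rightarrow X\succeq Y$) follows by approximating with finitely valued random variables and iterating the dichotomic strict monotonicity \eqref{strict:mon} one level set at a time. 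The decisive and genuinely new step is to \emph{deduce} the Sure-Thing Principle from time consistency. Fixing $A\in\Fcal$ and taking $\Gcal=\sigma(A)=\{\emptyset,A,A^c,\Omega\}$, the localization of Definition \ref{def:nonlinear} gives $\Ecal_{\sigma(A)}(X\ind_A)=\Ecal_{\sigma(A)}(X)\ind_A$ and $\Ecal_{\sigma(A)}(0)=0$ (the latter from the localization with $A=\emptyset$), so that $\Ecal_{\sigma(A)}(X\ind_A+h\ind_{A^c})$ equals a constant $a(X\ind_A)$ on $A$ depending only on $X\ind_A$ and a constant $b(h\ind_{A^c})$ on $A^c$ depending only on $h\ind_{A^c}$. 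Time consistency \eqref{tower_prop} then yields
\[\Ecal_0(X\ind_A+h\ind_{A^c})=\Ecal_0\bigl(a(X\ind_A)\ind_A+b(h\ind_{A^c})\ind_{A^c}\bigr),\]
and because \eqref{strict:mon} makes $x\mapsto\Ecal_0(x\ind_A+b\,\ind_{A^c})$ strictly increasing, comparing this with the same expression for $Y$ reduces to comparing $a(X\ind_A)$ with $a(Y\ind_A)$, independently of $h$ — which is exactly the Sure-Thing Principle.

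With monotonicity, pointwise continuity and the Sure-Thing Principle established, Theorem \ref{characterization} supplies the conditional Chisini mean $\cm{X\st\Gcal}$ for every $\Gcal\in\Sigma$, and Theorem \ref{CCE} gives its representation $\cm{X\st\Gcal}=\ug^{-1}(\Ep{\mbu(X)\st\Gcal})$ together with all the stated regularity of $\mbu$ (the continuity and joint measurability being imported from Theorem \ref{formaintegrale}, not reproved here). It then remains to identify $\Ecal_\Gcal(X)$ with $\cm{X\st\Gcal}$. Since $\Ecal_\Gcal(X)$ is $\Gcal$-measurable, the indicator localization gives $\Ecal_\Gcal(X\ind_A+Z\ind_{A^c})=\Ecal_\Gcal(X)\ind_A+\Ecal_\Gcal(Z)\ind_{A^c}$ for $A\in\Gcal$; using the idempotence $\Ecal_\Gcal(Z)=Z$ on $\Gcal$-measurable $Z$ (which I would prove from $\Ecal_\Gcal(0)=0$, time consistency and \eqref{strict:mon} for constants, then extend by localization and pointwise continuity) and applying $\Ecal_0$ with \eqref{tower_prop} produces
\[\Ecal_0(X\ind_A+Z\ind_{A^c})=\Ecal_0\bigl(\Ecal_\Gcal(X)\ind_A+Z\ind_{A^c}\bigr)\qquad\text{for all }\Gcal\text{-measurable }Z,\ A\in\Gcal.\]
This is precisely the conditional indifference characterizing $\cm{X\st\Gcal}$, and the strict monotonicity \eqref{strict:mon} yields its uniqueness, so $\Ecal_\Gcal(X)=\cm{X\st\Gcal}=\ug^{-1}(\Ep{\mbu(X)\st\Gcal})$.

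For sufficiency I would verify the three defining features directly from the representation. The property $\Ecal_\Gcal(X\ind_A)=\Ecal_\Gcal(X)\ind_A$ for $A\in\Gcal$ follows from $\mbu(\omega,0)=0$, which forces $\mbu(X\ind_A)=\mbu(X)\ind_A$ and hence localizes the (linear) conditional expectation before $\ug^{-1}$ is applied, using $\ug(\cdot,0)=0$. Assumption \ref{ass:E0} transfers from the strict monotonicity and continuity of $x\mapsto\mbu(\omega,x)$ to $\Ecal_0$ (with dominated convergence giving \eqref{point:cont}). Time consistency is the one computation worth isolating: for $\Gcal$-measurable $W$ the freezing identity $\Ep{\mbu(W)\st\Gcal}=\ug(\cdot,W)$ gives $\Ep{\mbu(\Ecal_\Gcal(X))\st\Gcal}=\Ep{\mbu(X)\st\Gcal}$, and taking full expectations and inverting the deterministic strictly increasing map $x\mapsto\Ep{\mbu(x)}$ yields $\Ecal_0(\Ecal_\Gcal(X))=\Ecal_0(X)$.

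The hard part will be the two points that couple the purely algebraic indicator localization with the scalar strict monotonicity of $\Ecal_0$: extracting the Sure-Thing Principle from time consistency, and the subsequent identification of $\Ecal_\Gcal$ with the Chisini mean. The latter rests on the idempotence of $\Ecal_\Gcal$ on $\Gcal$-measurable arguments, which for a general (not finitely generated) $\Gcal$ is not immediate — it must be obtained from the axioms by first settling finitely generated $\Gcal$ atom by atom, where the $\sigma(A)$-argument above applies directly, and then passing to the limit via pointwise continuity, rather than being assumed.
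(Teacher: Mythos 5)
Your overall architecture matches the paper's: pass from $\{\Ecal_\Gcal\}_{\Gcal\in\Sigma}$ to the preference $X\succeq Y\iff\Ecal_0(X)\ge\Ecal_0(Y)$, extract (ST), invoke Theorems \ref{characterization}, \ref{thm:general_thm} and \ref{CCE}, identify $\Ecal_\Gcal$ with the conditional Chisini mean, and verify the converse directly (your time-consistency computation via the freezing identity is exactly the paper's). Your $\sigma(A)$-derivation of (ST) is sound; it essentially inlines the implication (\ref{prop:condA})$\Rightarrow$(\ref{prop:ST}) of Theorem \ref{characterization}, which the paper instead just cites. But there is a genuine gap in what you flag as the hard part. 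The identification of $\Ecal_\Gcal(X)$ with $\cm{X\st\Gcal}$ neither needs, nor should be routed through, the idempotence $\Ecal_\Gcal(Z)=Z$: time consistency plus localization give directly
\[
\Ecal_0(X\ind_A)=\Ecal_0\bigl(\Ecal_\Gcal(X\ind_A)\bigr)=\Ecal_0\bigl(\Ecal_\Gcal(X)\ind_A\bigr)\qquad\text{for all }A\in\Gcal,
\]
which is literally the system \eqref{eq:sol_chisinimean} defining $\cm{X\st\Gcal}$ for the representing functional $T=\Ecal_0$ (no test variable $Z$ on $A^c$ appears in that definition). Your detour is not only unnecessary but, as sketched, unjustified: you propose to get idempotence for general $\Gcal$ by ``passing to the limit via pointwise continuity'', yet Assumption \ref{ass:E0} imposes pointwise continuity only on $\Ecal_0$, not on $\Ecal_\Gcal$. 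The paper states idempotence only in a remark \emph{after} the proof, via a sup-norm approximation, precisely because it is not needed for the theorem.

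The second gap is the measure. You apply Theorem \ref{CCE} as if it produced the representation under the given probability $\PW$; it does not. Theorems \ref{formaintegrale}/\ref{thm:general_thm} construct their own probability $\QW$ from the preference (the one whose null sets are $\Ncal_\succeq$), and Theorem \ref{CCE} is stated for that resulting pair. To reach the statement of Theorem \ref{repr:nonlinear}, with conditional expectations under $\PW$, the paper first proves $\Ncal_\succeq=\left\{A\in\Fcal : \PW(A)=0\right\}$ (one inclusion from the quotient structure of $\Linf$, the other from \eqref{strict:mon}), concludes $\QW\sim\PW$, and then replaces $\tilde{\mbu}$ by $\mbu=\frac{d\QW}{d\PW}\tilde{\mbu}$, checking that $\Fcal$-regularity survives this rescaling. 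This step, and the null-set identification it rests on --- which is also what turns the uniqueness of the Chisini mean up to $\succeq$-null events into equality in $\LinfG$, and what makes your appeal to (SM) for ``non-null'' events legitimate --- is entirely absent from your proposal.
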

	
	
	\subsection{Notations and preliminaries} \label{preliminaries}
	
	Throughout this paper $(\Om, \Fcal)$ denotes a measurable space, with $\Om$ being a general state space and $\Fcal$ a $\sigma$-algebra of subsets of $\Omega$, which we refer to as events. We shall denote by $\R$ the set of real numbers, which will always be endowed with the Borel $\sigma$-algebra $\Borel_{\R}$. By $\overline{\R}$ we indicate the extended real line $\extR$ and with $\Borel_{\overline{\R}}$ the associated Borel $\sigma$-algebra. We use $<, >$ for strict inequalities between real numbers. Similarly, $\subset, \supset$ are used to indicate proper subsets. In case equalities are not excluded, we use $\leq, \geq$ and $\subseteq, \supseteq$, respectively. We consider, as conventional, $\inf \varnothing = +\infty$. We denote by $\mathcal{L}(\Omega,\Fcal)$ the space of
	$\Fcal$-measurable functions taking values in $\R$. We call random variables the elements $f \in \mathcal{L}(\Omega,\Fcal)$ and we denote with $\Bd$ the collection random variables $f$ such that for some $k \geq 0$ $|f(\omega)|\leq k$ for every $\omega\in\Omega$. We consider on $\mathcal{L}(\Omega,\Fcal)$ and $\Bd$ the natural pointwise order $f \geq g$ if and only if $f(\omega) \geq g(\omega)$ for all $\omega \in \Om$. The space $\Bd$ is a Banach space with respect to the usual sup norm $\|f\|_{\infty}=\sup_{\om \in \Om}|f(\om)|$.
	Given any $A\in \Fcal$, we define $\ind_A$ as the element of $\Bd$ that takes value 1 if $\omega \in A$ and 0 otherwise and denote by $\sigma(A)$ the $\sigma$-algebra generated by $A$. For any pair of random variables $f,g \in \Bd$ we denote with $f\ind_A + g\ind_{A^c}$ the random variable that agrees with $f$ whenever the event $A$ occurs and with $g$ otherwise.
	Working with the product space $\Omega\times \R$, we shall make use of the product  $\sigma$-algebra $\Fcal\otimes \Borel_{\R}$ i.e. the \sigmalg generated by elements of the form $F\times B$ with $F\in\Fcal$ and $B\in \Borel_{\R}$. Whenever we fix a probability measure $\PW$ on $\Fcal$, we call $\Space$ a probability space.
	A (finite) signed measure $\nu$ on $(\Omega,\Fcal)$ is said to be dominated by $\PW$ ($\nu \ll \PW$) if, for any $A \in \Fcal$, $\PW(A) = 0$ implies $\abs{\nu}(A) = 0$, where $\abs{\nu}$ is the total variation measure of $\nu$. By the Hahn decomposition theorem, $\nu\ll\PW$ is equivalent to: $\PW(A) = 0$ implies ${\nu}(A) = 0$. For a given probability $\PW$ on $(\Omega,\Fcal)$ we shall denote the space of integrable functions by $\Lcal^1(\Om, \Fcal,\PW)=\left\{f\in \Lcal(\Om,\Fcal) : \Ep{|f|}<\infty \right\}$. \\For any $f\in \Lcal^1(\Om, \Fcal,\PW)$ and a sub $\sigma$-algebra $\Gcal$, $\Ep{f\middle | \Gcal}$ is the conditional expectation i.e.
	\begin{equation}\label{cond:exp}
		\Ep{f \middle| \Gcal}=\{g\in \Lcal^1(\Om, \Gcal,\PW) : \Ep{g\ind_A}=\Ep{f\ind_A}\;\forall\,A\in\Gcal \}.\end{equation}
	In the following we shall assume familiarity with the properties of conditional expectation.
	
	\medskip
	
	A preference relation is a binary relation $\succeq$ on $\Bd$ which satisfies completeness and transitivity and induces a (strict) preference order $\succ$ by $f\succ g$ if and only if $f \succeq g$ but $g\nsucceq f$. In addition we set $g\sim f$ whenever $g\succeq f$ and $f\succeq g$. 
	We denote by
	\begin{equation}
		\label{defnulls}
		\Ncal_\succeq := \left\{ A \in \Fcal : f\ind_A + g\ind_{A^c} \sim g \;\; \forall f,g \in \brv \right\}
	\end{equation}
	the set of null events (induced by the preference $\succeq$). 
	\\We present a few additional properties that the preference relation might enjoy.
	
	\begin{description}
		\item[(SM)]\label{item:strictmon} $\succeq$ is strictly monotone if $x\ind_A +
		f\ind_{A^c} \succ y\ind_A + f\ind_{A^c}$, for
		all non null events $A\in \Fcal$, for all $f \in \brv$ and
		constant outcomes $x,y\in\R$ with $x > y$;
		\item [(ST)] $\succeq$ satisfies the Sure-Thing Principle whenever, considering arbitrary $f, g, h \in
		\brv$ and $A\in \Fcal$ such that $f\ind_A +
		h\ind_{A^c} \succeq g\ind_A + h\ind_{A^c}$, we have $f\ind_A + \tilde{h}\ind_{A^c}
		\succeq g\ind_A + \tilde{h} \ind_{A^c}$ for every $\tilde{h} \in \brv$;
		\item[(PC)] $\succeq$ is pointwise continuous if, for any uniformly bounded
		sequence $(f_n)_n\subseteq \brv$, such that
		$f_n(\omega)\rightarrow f(\omega)$ for any $\omega\in\Omega$ then
		for all $g \in \brv$ such that $g \succ f$ (resp. $f \succ g$)
		there exists $N \in \mathbb{N}$ such that $g \succ f_n$ (resp. $f_n \succ
		g$) for every $n > N$.
	\end{description}
	
	We say that $\succeq$ admits a numerical representation if there exists a functional $T: \Bd \to \R$ such that for any given $f, g \in \Bd$ 
	\begin{equation}\label{repr:T}
		T(f) \geq T(g) \quad \text{ if and only if } \quad f \succeq g.
	\end{equation}
	

	\section{Conditionable functionals and preferences}
	In \cite{DM23} the notion of conditional Chisini mean was introduced as the solution of an infinite dimensional system of equations as we now recall in the following definition. Such a notion is intrinsically related to the structural properties of the functional $T$ involved, which in this paper will be the representing functional of $\succeq$. 
	
	\begin{definition}\label{conditionable} For a given functional $T:\brv\to\R$, we say that  $T$ is conditionable on a sub $\sigma$-algebra $\Gcal\subseteq \Fcal$  if for any $f\in\brv$ there exists $g\in\brvG$ such that 
		\begin{equation}\label{eq:conditioning_family}
			T(f\ind_A)= T(g\ind_A) \quad \forall\,A\in \Gcal. 
		\end{equation}
		We say that $T$ is conditionable if it is conditionable on any $\Gcal\in\Sigma$ and we call conditional Chisini mean $\cm{f\middle|\Gcal}$ the family of solutions $g$ of Eq. \eqref{eq:conditioning_family}.  
	\end{definition}
	
	If compared with \eqref{cond:exp}, it evidently appears that $\cm{f\middle|\Gcal}$ is the nonlinear counterpart of the standard conditional expectation. Although linearity is lost, nonlinear expectation still enjoys few useful properties of its linear counterpart such as, for instance, the ``tower'' and ``taking out what is known'' properties. Before the abstract formulation in \cite{DM23}, the problem of existence of nonlinear conditional expectation was already addressed in \cite{CoquetPeng02}, yet the provided solution relies on a Brownian framework which in turn demands for more technical assumptions. 
	\\ Embedding such a  problem in a decision theoretical setup, we could rephrase the issue as follows: given a preference order $\succeq$ on $\brv$ then the certainty equivalent of $f\in\brv$ is $a\in\R$ such that $a\sim f$. Similarly if the agent is provided with additional information, represented by the $\sigma$-algebra $\Gcal \subset \Fcal$, then the $\Gcal$-\emph{conditional certainty equivalent}  of $f\in\brv$ is any $g\in \brvG$ such that    
	\begin{equation}\label{eq:chisini_problem}
		f\ind_A\sim g\ind_A  \quad \forall \;A\in\Gcal.
	\end{equation}
	Loosely speaking,  $g\in\brvG$ can be considered as a $\Gcal$-measurable counterpart of the $\Fcal$-measurable function $f\in \brv$ (see also \cite{FM11cce} and \cite{MM21} for applications of the conditional certainty equivalent).
	\\ Since this paper deals with state-dependent preferences we shall directly refer to the conditional Chisini mean as the class
	\begin{equation}\label{eq:sol_chisinimean}
		\cm{f\middle|\Gcal}=\left\{g\in\brvG\mid f\ind_A\sim g\ind_A\;\forall\,A\in\Gcal\right\},
	\end{equation}
	keeping in mind that Eq. \eqref{eq:sol_chisinimean} translates automatically to the framework in \cite{DM23} as soon as we characterize a preference $\succeq$ through a representing functional (as defined in \eqref{repr:T}).  
	
	\begin{remark}(SM) and (PC) of the preference $\succeq$ are sufficient to show the existence of a preference representing functional $T$ (see the proof of Proposition \ref{prop:exist_unique} for the further details).
	\end{remark}

	We are now ready to state the first main result of this work, which links the Sure-Thing Principle to the conditionability of preferences and their representing functionals. The proof is postponed to Section \ref{sec:proof_characterization}.
	
	\begin{theorem}\label{characterization}
		Let $\succeq$ be a preference order satisfying (SM) and (PC). Then the following are equivalent
		\begin{enumerate}[(i)]
			\item \label{prop:ST} $\succeq$ satisfies (ST);
			\item \label{prop:cond} any representing functional $T$ for $\succeq$ is conditionable;
			\item \label{prop:condA} any representing functional $T$ for $\succeq$ is conditionable on\footnote{Where $\sigma(A) = \left\{\Omega, \varnothing, A , A^C\right\}$} $\sigma(A)$ for any $A\in \Fcal$.
		\end{enumerate}
	\end{theorem}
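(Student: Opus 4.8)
The plan is to prove the cycle of implications (i) $\Rightarrow$ (ii) $\Rightarrow$ (iii) $\Rightarrow$ (i). Since $T$ represents $\succeq$, the identity $T(f\ind_A)=T(g\ind_A)$ is equivalent to $f\ind_A\sim g\ind_A$, so conditionability is intrinsically a property of $\succeq$ and does not depend on the chosen representing functional (cf. the Remark preceding the statement); I will therefore argue throughout with $\sim$ rather than with $T$. With this reduction the implication (ii) $\Rightarrow$ (iii) is immediate, as $\sigma(A)\in\Sigma$ for every $A\in\Fcal$.

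For (iii) $\Rightarrow$ (i), fix $f,g,h,\tilde h\in\brv$ and $A\in\Fcal$ with $f\ind_A+h\ind_{A^c}\succeq g\ind_A+h\ind_{A^c}$; the cases where $A$ or $A^c$ is null follow at once from the definition of $\nullset$, so assume both are non-null. Applying conditionability on $\sigma(A)=\{\Omega,\varnothing,A,A^c\}$ to $F:=f\ind_A+h\ind_{A^c}$ produces a $\sigma(A)$-measurable $a\ind_A+b\ind_{A^c}$ with $F\ind_A\sim a\ind_A$, $F\ind_{A^c}\sim b\ind_{A^c}$ and $F\sim a\ind_A+b\ind_{A^c}$ (read off from $B=A$, $B=A^c$, $B=\Omega$). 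Since $F\ind_A=f\ind_A$ and $F\ind_{A^c}=h\ind_{A^c}$, and since (SM) forces such equivalent constants to be unique on a non-null event, the value $a=a(f)$ depends only on $f|_A$ and $b=b(h)$ only on $h|_{A^c}$. Doing the same for $g\ind_A+h\ind_{A^c}$ yields the same $b(h)$ and a constant $a(g)$. The hypothesis then reads $a(f)\ind_A+b(h)\ind_{A^c}\succeq a(g)\ind_A+b(h)\ind_{A^c}$, and because $x\mapsto x\ind_A+b(h)\ind_{A^c}$ is strictly increasing by (SM), this is equivalent to the purely numerical statement $a(f)\ge a(g)$. As this inequality no longer involves the common tail, (SM) applied once more with $b(\tilde h)$ in place of $b(h)$ gives $a(f)\ind_A+b(\tilde h)\ind_{A^c}\succeq a(g)\ind_A+b(\tilde h)\ind_{A^c}$, which by the same conditioning identities (applied now to $f\ind_A+\tilde h\ind_{A^c}$ and $g\ind_A+\tilde h\ind_{A^c}$) reads $f\ind_A+\tilde h\ind_{A^c}\succeq g\ind_A+\tilde h\ind_{A^c}$, i.e. (ST).

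The substantial direction is (i) $\Rightarrow$ (ii). First I would establish the scalar building block: for every non-null $A\in\Fcal$ and every $f\in\brv$ there is a unique constant $\kappa(f,A)$ with $f\ind_A\sim\kappa(f,A)\ind_A$. Existence follows by setting $\kappa(f,A)=\inf\{c: c\ind_A\succeq f\ind_A\}$, a set bracketed by $\inf_A f\le\kappa(f,A)\le\sup_A f$ thanks to monotonicity (a standing hypothesis, see the Introduction), and checking with (PC) along $c_n\downarrow\kappa(f,A)$ and $c_n\uparrow\kappa(f,A)$ that neither $\kappa(f,A)\ind_A\succ f\ind_A$ nor $f\ind_A\succ\kappa(f,A)\ind_A$ can hold; uniqueness is (SM). Next, for $\Gcal$ generated by a finite partition $\{A_1,\dots,A_n\}$ I would set $g=\sum_i\kappa(f,A_i)\ind_{A_i}$ and verify $f\ind_A\sim g\ind_A$ for every $A\in\Gcal$. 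Because such $A$ is a finite union of atoms, this reduces by induction to the two-atom gluing: from $f\ind_{A_i}\sim\kappa(f,A_i)\ind_{A_i}$, (ST) lets me swap the complementary tail freely, so $f\ind_{A_1}+f\ind_{A_2}\sim \kappa(f,A_1)\ind_{A_1}+f\ind_{A_2}\sim\kappa(f,A_1)\ind_{A_1}+\kappa(f,A_2)\ind_{A_2}$, and transitivity closes the induction. This is exactly where the Sure-Thing Principle is indispensable.

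It remains to pass from finite to arbitrary $\Gcal$, and this is the step I expect to be the main obstacle, the more so because Theorem \ref{characterization} is set in a measure-free framework, so conditional-expectation machinery is unavailable. The plan is to view the finite sub-$\sigma$-algebras of $\Gcal$ as a family directed under refinement, to solve the problem on each as above, and to pass to the limit of the resulting net $(g_{\Gcal'})$ via (PC); the bracketing $\inf_A f\le\kappa(f,A)\le\sup_A f$ together with the averaging property that $\kappa(f,A\cup B)$ lies between $\kappa(f,A)$ and $\kappa(f,B)$ for disjoint $A,B$ (itself a consequence of (ST) and (SM)) provides the monotone, martingale-type control needed to force convergence to a bounded $\Gcal$-measurable limit $g$ and to preserve each identity $f\ind_A\sim g\ind_A$ in the limit. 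The delicate point is that for a non-atomic or not countably generated $\Gcal$ this convergence cannot rest on a single sequence; making it rigorous requires exploiting the order-completeness of the quotient Boolean algebra $\Gcal/\nullset$ induced by (PC) (equivalently, a conditional essential-supremum construction) to extract a cofinal refining chain along which the limit exists and is independent of the chain. Establishing this order-theoretic convergence is the heart of the argument; the remaining verifications are routine consequences of (SM), (PC) and the finite case.
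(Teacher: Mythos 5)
Your implications (ii)$\Rightarrow$(iii) and (iii)$\Rightarrow$(i) are sound and essentially coincide with the paper's argument: the paper also reduces to the $\sigma(A)$-measurable Chisini mean $x\ind_A+y\ind_{A^c}$, uses its ``taking out what is known'' property (its Lemma on $\cm{f\ind_A\middle|\Gcal}=\cm{f\middle|\Gcal}\ind_A$, which you replicate inline), and concludes by (SM) exactly as you do. The problem is the direction you yourself call substantial, (i)$\Rightarrow$(ii). Your scalar step (existence and uniqueness of $\kappa(f,A)$) and the finite-partition gluing via (ST) are correct --- in fact your two-atom gluing is essentially the paper's verification of the pasting property ($\Gcal$-PS) --- but the passage from finitely generated $\Gcal$ to arbitrary $\Gcal$ is not proved: you present a plan (a net of solutions over finite sub-$\sigma$-algebras, convergence forced by (PC) and a ``martingale-type'' bracketing, order-completeness of $\Gcal/\Ncal_\succeq$) and explicitly acknowledge that making it rigorous is ``the heart of the argument''. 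That heart is missing, and the obstacles are real: (PC) is a \emph{sequential} continuity property along uniformly bounded pointwise-convergent sequences, so it gives no control over nets indexed by the refinement-directed family of finite sub-$\sigma$-algebras; in the measure-free setting of the theorem there is no martingale convergence theorem, no essential supremum, and your claim that (PC) induces order-completeness of the quotient Boolean algebra $\Gcal/\Ncal_\succeq$ is asserted, not established; nor do you show that a limit along one cofinal chain is $\Gcal$-measurable, satisfies $f\ind_A\sim g\ind_A$ for \emph{all} $A\in\Gcal$ (not just those in the chain), or is independent of the chain.

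The paper avoids exactly this difficulty by outsourcing it: it verifies that the representing functional $T$ satisfies the hypotheses (($\Gcal$-Mo), ($\Gcal$-PC), ($\Gcal$-QL), ($\Gcal$-NB), ($\Gcal$-PS)) of Theorem 3.2 in the cited work of Doldi and Maggis \cite{DM23}, and that external theorem delivers $\cm{f\middle|\Gcal}\neq\varnothing$ together with uniqueness up to $\Ncal_\succeq$ for an \emph{arbitrary} sub-$\sigma$-algebra $\Gcal$. So the infinite-dimensional existence argument you sketch is precisely the content of the result the paper invokes; a blind proof must either cite such a result or actually carry out that construction. As it stands, your proposal proves the equivalence only when $\Fcal$ (hence every $\Gcal$) is finite, and the general case remains a genuine gap.
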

	
	We propose the following definition of regularity that is well suited for our purposes. In particular, it encompasses both the notion of continuity and that of joint-measurability that we seek to obtain for the state-dependent utility.
	
	\begin{definition}\label{repr:CCE} For any $\Gcal\in\Sigma$ we say that $\varphi:\Om \times \R \to \R$ is a $\Gcal$-regular function if   
		\begin{itemize}
			\item $\varphi(\omega, \cdot)$ is continuous and strictly increasing on $\R$ for every $\omega \in \Om$;
			\item $\varphi(\om, 0) = 0$ for every $\om \in \Om$;
			\item $\varphi$ is $\Gcal\otimes \Borel_{\R}$ measurable.
		\end{itemize}
	\end{definition}
	
	\begin{definition}
		We say that representability holds for conditional Chisini means if there exist an $\Fcal$-regular function $\mbu$, a probability measure $\PW$ on $\Fcal$ and, for every $\sigma$-algebra $\Gcal \subseteq \Fcal$, a $\Gcal$-regular function $\ug$ such that $\mbu(\cdot, x)$ and $\ug(\cdot, x)$ are integrable for every $x \in \R$ and for any $f \in \brv$ and $g \in \cm{f\middle|\Gcal}$ we can write\footnote{More details can be found in the statement of Theorem \ref{CCE}.}
		\[g(\omega) = \ug^{-1}(\omega, h(\omega)) \quad \forall \,\omega\in\Omega,\] 
		up to a choice of a version $h$ of $\Ep{\mbu(\cdot,f)\middle| \Gcal}$ . 
	\end{definition}
	

	\begin{corollary}\label{cor:representability}
		Let $\succeq$ be a preference order satisfying (SM) and (PC). Whenever $\Fcal$ contains at least three disjoint non-null events, condition (\ref{prop:ST}) of Theorem \ref{characterization} is equivalent to
		\begin{enumerate}[(i)]
			\setcounter{enumi}{3}
			\item \label{prop:CCE} conditional Chisini means are representable.
		\end{enumerate}   
	\end{corollary}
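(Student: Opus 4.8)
The plan is to establish the two implications (\ref{prop:ST}) $\Rightarrow$ (\ref{prop:CCE}) and (\ref{prop:CCE}) $\Rightarrow$ (\ref{prop:ST}), using Theorem \ref{characterization} as the bridge. Recall that, under (SM) and (PC), Theorem \ref{characterization} makes the Sure-Thing Principle equivalent to the conditionability of any representing functional $T$; in view of \eqref{eq:sol_chisinimean} and Definition \ref{conditionable}, this is in turn equivalent to the requirement that $\cm{f\middle|\Gcal}\neq\varnothing$ for every $f\in\brv$ and every $\Gcal\in\Sigma$. The argument thus reduces to showing that representability and the non-emptiness of all the conditional Chisini means coincide, a representing functional being available by Proposition \ref{prop:exist_unique} under (SM) and (PC).

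For (\ref{prop:ST}) $\Rightarrow$ (\ref{prop:CCE}) I would simply invoke Theorem \ref{CCE}. Assuming (SM), (PC) and (ST), and using that $\Fcal$ carries at least three disjoint non-null events, Theorem \ref{CCE} furnishes the probability $\PW$, the $\Fcal$-regular utility $\mbu$ with $\mbu(\cdot,x)$ integrable, and for each $\Gcal$ the $\Gcal$-regular function $\ug$ with $\ug(\cdot,x)=\Ep{\mbu(x)\st\Gcal}$, together with the representation $g=\ug^{-1}(\cdot,h)$ for $g\in\cm{f\middle|\Gcal}$ and $h$ a version of $\Ep{\mbu(\cdot,f)\st\Gcal}$. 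This is precisely the content of representability. It is here that the richness hypothesis is spent: it is needed to apply the state-dependent expected-utility representation underlying Theorem \ref{CCE} (through Theorems \ref{formaintegrale} and \ref{thm:general_thm}), which is why the corollary strengthens the standing assumptions of Theorem \ref{characterization}.

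For the converse (\ref{prop:CCE}) $\Rightarrow$ (\ref{prop:ST}) the plan is to show that representability forces every $\cm{f\middle|\Gcal}$ to be non-empty, and then close the loop via Theorem \ref{characterization}. Given the regular functions $\mbu,\ug$ supplied by representability, I would first specialize to the trivial $\sigma$-algebra: there $\ug(\cdot,x)=\Ep{\mbu(x)}=:U(x)$ is a strictly increasing continuous function of $x$ alone, and representability identifies the certainty equivalent of $f$ (which exists under (SM) and (PC)) with $U^{-1}(\Ep{\mbu(\cdot,f)})$; since under (SM) constants are strictly ordered, this shows that $f\mapsto\Ep{\mbu(\cdot,f)}$ represents $\succeq$. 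Fixing $f\in\brv$ and $\Gcal\in\Sigma$, set $g^\ast:=\ug^{-1}(\cdot,h)$ with $h$ a version of $\Ep{\mbu(\cdot,f)\st\Gcal}$. If $|f|\le k$, then monotonicity of $\mbu(\omega,\cdot)$ gives $\ug(\cdot,-k)\le h\le \ug(\cdot,k)$ and hence $|g^\ast|\le k$, while the $\Gcal\otimes\Borel_{\R}$-measurability of $\ug$ makes $g^\ast$ a genuine element of $\brvG$. The crux is then to verify that $g^\ast\in\cm{f\middle|\Gcal}$, i.e. that $f\ind_A\sim g^\ast\ind_A$ for every $A\in\Gcal$.

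The engine of this verification is the \emph{freezing} identity $\Ep{\mbu(\cdot,c)\st\Gcal}=\ug(\cdot,c)$ valid for every bounded $\Gcal$-measurable $c$. It holds by definition for $\Gcal$-measurable simple $c$ (``taking out what is known'' applied to the indicators, together with $\ug(\cdot,x)=\Ep{\mbu(x)\st\Gcal}$), and extends to bounded $\Gcal$-measurable $c$ by approximating with simple functions and passing to the limit via the continuity of $\mbu(\omega,\cdot)$ and $\ug(\omega,\cdot)$, the integrable domination $\max(|\mbu(\cdot,k)|,|\mbu(\cdot,-k)|)$, and conditional dominated convergence. Applying it to $c=g^\ast$ gives $\Ep{\mbu(\cdot,g^\ast)\st\Gcal}=\ug(\cdot,g^\ast)=h=\Ep{\mbu(\cdot,f)\st\Gcal}$, whence $\Ep{\mbu(\cdot,f)\ind_A}=\Ep{\mbu(\cdot,g^\ast)\ind_A}$ for all $A\in\Gcal$; using $\mbu(\cdot,0)=0$ this reads $\Ep{\mbu(\cdot,f\ind_A)}=\Ep{\mbu(\cdot,g^\ast\ind_A)}$, so $f\ind_A\sim g^\ast\ind_A$ because $\Ep{\mbu(\cdot,\cdot)}$ represents $\succeq$. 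Thus $g^\ast\in\cm{f\middle|\Gcal}$, every conditional Chisini mean is non-empty, any representing functional is conditionable, and (ST) follows from Theorem \ref{characterization}. I expect this freezing identity — specifically the limit passage from simple to general bounded $\Gcal$-measurable integrands — to be the only genuinely technical point, the remaining steps being bookkeeping or delegated to Theorem \ref{CCE}.
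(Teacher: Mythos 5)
Your proof is correct, and its forward half coincides with the paper's: both obtain Item (\ref{prop:CCE}) by invoking Theorem \ref{thm:general_thm} and Theorem \ref{CCE}, and this is indeed the only place where the three-disjoint-non-null-events hypothesis is spent. Where you genuinely diverge is the converse. The paper disposes of it in one line: it reads off from representability that every representing functional is conditionable --- i.e.\ it takes the definition to assert, in particular, that $\ug^{-1}(\cdot,h(\cdot))$ \emph{is} an element of $\cm{f\middle|\Gcal}$, so all conditional Chisini means are non-empty --- and then applies the implication (\ref{prop:condA})$\Rightarrow$(\ref{prop:ST}) of Theorem \ref{characterization}. You instead \emph{prove} non-emptiness: you specialize to the trivial $\sigma$-algebra to show that $f\mapsto\Ep{\mbu(\cdot,f)}$ represents $\succeq$ (using that certainty equivalents exist under (SM) and (PC) alone, so that case is not vacuous), and then verify $\ug^{-1}(\cdot,h)\in\cm{f\middle|\Gcal}$ via your ``freezing'' identity, which is essentially Step 2 (property \eqref{almosttakeout}) and Step 4 of the paper's proof of Theorem \ref{CCE} redone without assuming (ST). This buys robustness: your argument survives the literal reading of the representability definition, under which the displayed formula quantifies over $g\in\cm{f\middle|\Gcal}$ and would be vacuous whenever the Chisini mean is empty, a gap the paper's one-liner silently closes by adopting the stronger reading. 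The costs are twofold: you must import from the definition's footnote (equivalently, item (ii) of Theorem \ref{CCE}) that $\ug(\cdot,x)$ is a version of $\Ep{\mbu(x)\st\Gcal}$, since the freezing identity is false without it; and the bounds $\ug(\cdot,-k)\le h\le\ug(\cdot,k)$ hold only $\PW$-a.s., so $h$ must first be modified on a null set (as the paper does with $\tilde{g}=g\ind_A+0\ind_{A^c}$ in Step 4) before $\ug^{-1}(\cdot,h)$ is everywhere finite, bounded and $\Gcal$-measurable --- your ``up to a choice of a version'' covers this, but it deserves an explicit step. Finally, your own trivial-$\sigma$-algebra argument contains a shortcut you did not exploit: once $f\mapsto\Ep{\mbu(\cdot,f)}$ is known to represent $\succeq$, (ST) follows immediately from the additivity of the integral across $A$ and $A^c$, with no need for the freezing identity, for conditionability at general $\Gcal$, or for Theorem \ref{characterization} at all.
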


	\section{Solution to an open question in Wakker and Zank (1999)}\label{open:question}
	
	The proofs of the results stated in the previous sections mostly rely on an enhanced version of the preference representation result due to Wakker and Zank \cite{WZ99}, which we are now going to present in detail.
	For a general state space $(\Omega,\Fcal)$, Wakker and Zank \cite{WZ99} provide a characterization of those preference relations $\succeq$ on the space of bounded random acts which admit a numerical representation through an additively decomposable functional (\cite{WZ99}, Theorem 11). This result extends to an infinite dimensional setup the theory developed by Debreu \cite{De60}. By adopting the stronger notion of pointwise continuity of $\succeq$, Theorem 11 in \cite{WZ99} can be specialized to the following integral representation form.
	
	\begin{theorem}[\cite{WZ99}, Theorem 12 and \cite{CL06}, Theorem 5]\label{formaintegrale}
		Assume that $\Fcal$ contains at least three disjoint non-null events and $\succeq$ satisfies (SM), (PC) and (ST). Then there exist:
		\begin{enumerate}[(i)]
			\item a probability measure $\PW$ on $(\Om, \Fcal)$ such that $\nullset = \left\{A \in \Fcal \st \PW(A) = 0 \right\}$,
			\item\label{wz_u} a function $\mbu:\Om \times \R \to \R$ with $\mbu(\om, \cdot)$ strictly increasing on $\R$ for every $\omega \in \Om$ and $\mbu(\cdot, x) \in \Lcal^1(\Om, \Fcal, \PW)$ for every $x \in \R$,
		\end{enumerate}
		such that given $f,g \in \Bd$ we have 
		\begin{equation}\label{repr:functional}
			f \succeq g \text{ if and only if } \int_\Om \mbu(\omega, f(\omega)) \dP \geq \int_\Om \mbu(\omega, g(\omega)) \dP
		\end{equation}
	\end{theorem}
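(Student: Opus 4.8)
The plan is to obtain the integral representation from the two cited theorems once their hypotheses have been matched to ours, and then to extract from the representation the strict monotonicity of $\mbu$, its integrability, and the identification of the null events. Because the analytic core is carried by \cite[Theorem 12]{WZ99} and \cite[Theorem 5]{CL06}, the content here is the verification of axioms and of the fine measure-theoretic claims.

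First I would check that (SM), (PC) and (ST), together with the existence of three disjoint non-null events, are exactly the hypotheses required by the cited results. Completeness and transitivity make $\succeq$ a weak order; (ST) is the coordinate-independence (Sure-Thing) axiom that yields additive separability of the representing functional in the sense of Debreu \cite{De60} and Wakker--Zank \cite{WZ99}; and the three disjoint non-null events play the role of the essentiality/non-degeneracy condition, ensuring that at least three coordinates are essential so that the additive representation exists and is unique up to a common affine rescaling. The one genuinely delicate translation is continuity: I would verify that (PC) -- sequential continuity of $\succeq$ along uniformly bounded pointwise-convergent sequences -- implies the continuity hypothesis used in \cite{WZ99} and \cite{CL06}. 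This is the strengthening announced in the text, and it is what lets one pass from the abstract additive decomposition of \cite[Theorem 11]{WZ99} to the integral form.

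With the axioms in place, \cite[Theorem 12]{WZ99} and \cite[Theorem 5]{CL06} produce a state-dependent utility $\mbu$, with $\mbu(\om,\cdot)$ strictly increasing for every $\om$ (this is precisely where (SM) enters), and a probability measure $\PW$ such that $\succeq$ is represented by $f\mapsto \int_\Om \mbu(\om,f(\om))\,\dP$, the well-posedness of the integral for general bounded $f$ being part of the cited representation (full joint measurability is the object of the later, more demanding result). Countable additivity of $\PW$ is exactly the place where pointwise continuity does its work: (PC) is a monotone-continuity-type condition, and along a decreasing sequence of events $A_n\downarrow\varnothing$ it prevents mass from escaping, forcing continuity from above at $\varnothing$ and thereby upgrading any merely finitely additive set function to a countably additive probability. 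Integrability $\mbu(\cdot,x)\in\Lcal^1(\Om,\Fcal,\PW)$ is then immediate, since the constant act $x\in\Bd$ must carry the finite representing value $\int_\Om \mbu(\om,x)\,\dP$.

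It remains to identify the null events. If $\PW(A)=0$ the contribution of $A$ to every integral vanishes, so $f\ind_A+g\ind_{A^c}\sim g$ for all $f,g\in\Bd$, whence $A\in\nullset$. Conversely, if $\PW(A)>0$ then, since $\mbu(\om,\cdot)$ is strictly increasing for every $\om$, the integrand $\mbu(\cdot,x)-\mbu(\cdot,y)$ is strictly positive for $x>y$, so $\int_A\bigl(\mbu(\om,x)-\mbu(\om,y)\bigr)\,\dP>0$ and hence $x\ind_A+y\ind_{A^c}\succ y$, which exhibits $A\notin\nullset$. This yields $\nullset=\left\{A\in\Fcal \st \PW(A)=0\right\}$. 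I expect the main obstacle to be the continuity translation and the attendant passage to countable additivity; once the integral representation is in hand, the strict monotonicity of $\mbu$, its integrability, and the null-set identification are short consequences of (SM) and of the explicit form of the representation.
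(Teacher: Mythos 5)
Your proposal is correct and follows essentially the same route as the paper: Theorem \ref{formaintegrale} is stated there as a direct import of \cite{WZ99}, Theorem 12 and \cite{CL06}, Theorem 5, so the entire content is exactly what you do---matching (SM), (PC), (ST) and the three-non-null-events condition to the cited hypotheses and then reading off strict monotonicity, integrability, and the identification $\nullset=\{A\in\Fcal : \PW(A)=0\}$ from the representation. Your closing derivations of the null-set characterization and of integrability from the constant acts are the same routine consequences the paper implicitly relies on, so there is nothing genuinely divergent to compare.
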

	
	
	The previous theorem leaves an interesting issue unsolved as declared by the authors
	\cite{WZ99}, page 29:  
	\begin{center}``[...] Pointwise continuity is not overly restrictive
		because continuity of each state-dependent utility implies pointwise continuity. Whether the reversed implication holds, i.e., whether continuity of the functions $\mbu$, in Statement (i) of Theorem\footnote{Referred to the original statement in \cite{WZ99}} 12  holds (after appropriate modification on null events), is an open question to us.''    
	\end{center}
	
	To the best of our knowledge this question did not find an answer in the subsequent literature.
	Among the principal aims of the present paper, Theorem \ref{thm:general_thm} stated below provides the construction of a pointwise continuous version of the state-dependent utility $\mbu$\footnote{Notice that in Appendix A of \cite{MM21} a somewhat cumbersome notion of continuity was introduced to overcome some technical hurdles in the proof of the main result. Guaranteeing the existence of a continuous $\mbu$ allows to easily circumvent issues of this kind.}.
	
	
	\begin{theorem}\label{thm:general_thm} Assume that $\Fcal$ contains at least three disjoint non-null events and $\succeq$ satisfies (SM), (PC) and (ST). Then $\mbu$ in Theorem \ref{formaintegrale} can be chosen to be $\Fcal$-regular.
	\end{theorem}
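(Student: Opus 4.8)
The plan is to take the state-dependent utility $\mbu$ produced by Theorem~\ref{formaintegrale}—already strictly increasing in its second argument with $\mbu(\cdot,x)$ integrable for each $x$—and to upgrade it to an $\Fcal$-regular function in the sense of Definition~\ref{repr:CCE}. The normalization $\mbu(\omega,0)=0$ is costless: subtracting the $\Fcal$-measurable integrable function $\omega\mapsto\mbu(\omega,0)$ only shifts the representing functional $T(f):=\Ep{\mbu(\cdot,f)}$ by the constant $-\Ep{\mbu(\cdot,0)}$, preserving strict monotonicity, integrability and \eqref{repr:functional}. The genuine content is continuity (and then joint measurability) of $\mbu$, and the whole argument is driven by converting the pointwise continuity (PC) of $\succeq$ into an analytic continuity of $T$. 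Concretely, using (SM) and (PC) every $f\in\brv$ admits a certainty equivalent $c_f\in\R$ with $T(c_f\ind_\Omega)=T(f)$, and $c\mapsto T(c\ind_\Omega)$ is strictly increasing; if $f_n\to f$ pointwise with $\sup_n\snorm{f_n}<\infty$ but $T(f_n)\not\to T(f)$, I would pass to a subsequence with $T(f_{n_k})\to L\neq T(f)$, insert a constant act whose value lies strictly between $T(f)$ and $L$, and reach a contradiction directly from (PC). Hence $T$ is continuous along uniformly bounded pointwise limits.

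\emph{Continuity of $\mbu(\omega,\cdot)$ at each prescribed value.} Fix $f\in\brv$ and set $\mbu^{\pm}_f(\omega):=\lim_{t\to f(\omega)^{\pm}}\mbu(\omega,t)$, the one-sided limits of the monotone map $\mbu(\omega,\cdot)$ at the point $f(\omega)$. Applying the previous continuity of $T$ to $f\pm\tfrac1n$ and invoking monotone convergence yields $\Ep{\mbu^{-}_f}=\Ep{\mbu(\cdot,f)}=\Ep{\mbu^{+}_f}$; since $\mbu^{-}_f\le\mbu(\cdot,f)\le\mbu^{+}_f$ holds pointwise, the three functions coincide $\PW$-a.s. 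Thus for every $f\in\brv$ the map $\mbu(\omega,\cdot)$ is continuous at the point $f(\omega)$ for $\PW$-almost every $\omega$.

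\emph{The main obstacle: from continuity at each value to almost sure global continuity.} Let $D(\omega)$ denote the countable discontinuity set of $\mbu(\omega,\cdot)$ and $N:=\{\omega:D(\omega)\neq\varnothing\}$. The difficulty is that $N$ is an \emph{uncountable} union of the null sets from the previous step, so its nullity is not automatic; indeed one can build monotone $\mbu$ whose jump locations sweep out a diagonal $x=\omega$, and checking that (PC) fails for it confirms that this is precisely the step where (PC) must be spent. I would break the symmetry by a measurable selection. The jump field $J(\omega,x):=\mbu(\omega,x^{+})-\mbu(\omega,x^{-})$, built from rational one-sided limits, is $\Fcal\otimes\Borel_{\R}$-measurable, and for fixed $k,m\in\N$ the set $\{x\in[-k,k]:J(\omega,x)>\tfrac1m\}$ is finite; hence the multifunction $\omega\rightrightarrows\{x\in[-k,k]:J(\omega,x)>\tfrac1m\}$ admits an $\Fcal$-measurable selector $d_{k,m}$ landing in genuine discontinuities on the measurable set where it is nonempty. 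If $\PW(N)>0$ then this set has positive measure for some $k,m$, and the bounded act $f:=d_{k,m}$ (completed by a constant off that set) satisfies $f(\omega)\in D(\omega)$ on a set of positive measure, contradicting the preceding step. Therefore $\PW(N)=0$.

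\emph{Conclusion.} Redefining $\mbu(\omega,\cdot):=\mbu_0$ on the null set $N$, for a fixed continuous strictly increasing $\mbu_0$ with $\mbu_0(0)=0$, alters no integral $\Ep{\mbu(\cdot,f)}$ and hence preserves \eqref{repr:functional}, while making $\mbu(\omega,\cdot)$ continuous and strictly increasing for \emph{every} $\omega$, with $\mbu(\omega,0)=0$. Finally $\mbu(\cdot,x)$ is $\Fcal$-measurable for each $x$ (an a.s.\ limit of the measurable $\mbu(\cdot,q)$, $q\in\QW$) and $\mbu(\omega,\cdot)$ is continuous, so $\mbu$ is a Carath\'eodory function and therefore jointly $\Fcal\otimes\Borel_{\R}$-measurable. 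This produces the desired $\Fcal$-regular version of $\mbu$, and the third step—upgrading pointwise-at-a-value continuity to a.s.\ global continuity via measurable selection of the jumps—is the crux of the argument.
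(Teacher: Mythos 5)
Your overall architecture is close in spirit to the paper's: your crux (measurably selecting a jump location of $\mbu(\omega,\cdot)$ and feeding the selector back into the representation as an act) parallels the paper's argument in Proposition \ref{propcontofuplus}, where the first jump time $\tau^\varepsilon$ of the RCLL process $t\mapsto u^+(\omega,t-M)$ is shown to be measurable via the stopping-time result of \cite{So13} and is then used to contradict continuity from below of the integral functional. The genuine gap sits one layer below, in the step on which everything else rests: your claim that (PC) forces pointwise continuity of the \emph{specific} functional $T(f)=\Ep{\mbu(\cdot,f)}$. This is exactly property (C\ref{item:PC}), which the paper does not re-derive but inherits from \cite{WZ99}, \cite{CL06} and \cite{DM23}; it is not free, because pointwise continuity is not preserved when a representing functional is composed with a strictly increasing (possibly discontinuous) map. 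Your sketch (pass to certainty equivalents $c_{f_{n_k}}$, insert a constant act with $T$-value strictly between $T(f)$ and $L$) fails precisely in the dangerous case $c_{f_{n_k}}\to c_f$: writing $\phi(c):=\Ep{\mbu(\cdot,c)}$, one then has $L=\lim_k\phi(c_{f_{n_k}})=\phi(c_f^+)$, and if $\phi$ jumps at $c_f$ the interval $\left(\phi(c_f),\phi(c_f^+)\right)$ contains the $T$-value of \emph{no} constant act, so testing (PC) against constants produces no contradiction. Worse, continuity of $\phi$ is itself an instance of the continuity of $\mbu$ you are trying to establish, so the argument is circular at this point. A repair exists but needs a new ingredient: use non-constant test acts, e.g.\ for a non-null $A$ with non-null complement (available since $\Fcal$ contains three disjoint non-null events) apply (PC) and monotone convergence to $t\ind_A+c_f\ind_{A^c}$ with $t\downarrow c_f$, concluding that the jump $\mbu(\cdot,c_f^+)-\mbu(\cdot,c_f)$ vanishes a.e.\ on $A^c$, and repeat with a second event to cover $\Omega$. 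As written, steps 2--4 of your proof (which all invoke $T(f\pm\tfrac1n)\to T(f)$) are unsupported.

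A secondary, fixable issue concerns the measurable selector $d_{k,m}$: joint measurability of the jump field $J$ gives a measurable \emph{graph}, but on a general (non-complete) measurable space $(\Omega,\Fcal)$ graph-measurability alone does not produce a measurable selector, since already the domain $\{\omega:\exists\,x\in[-k,k],\ J(\omega,x)>\tfrac1m\}$ is a projection of a measurable set. You need either an explicit construction (e.g.\ locate the leftmost jump of size $>\tfrac1m$ by dyadic bisection, using that the event ``some jump of size $>\tfrac1m$ in $[a,b]$'' can be written through suprema of $\mbu(\omega,q_2)-\mbu(\omega,q_1)$ over close rational pairs, and that such jumps are finitely many) or a citation playing the role that \cite{So13} plays in the paper. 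Unlike the first gap, this one requires only care, not a new idea.
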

	

	\section{Explicit representation of conditional Chisini means}\label{statement}
	
	In this section we apply Theorem \ref{thm:general_thm} to obtain an explicit formula for conditional Chisini means, which is the cornerstone of the proofs of Theorem \ref{repr:nonlinear} and of Corollary \ref{cor:representability}. 

	\begin{theorem}
		\label{CCE}
		Assume that the hypotheses of Theorem \ref{thm:general_thm} are satisfied and consider a resulting pair $(\mbu,\PW)$ with $\mbu$ being $\Fcal$-regular. Then, for any given a sub-sigma algebra $\Gcal\subseteq \Fcal$, $\cm{f\middle|\Gcal} \neq \varnothing$ and there exist
		\begin{itemize}
			\item a $\Gcal$-regular function $\ug:\Om \times \R \to \R$, with 
			\begin{enumerate}[(i)]
				\item $\ug(\cdot, x) \in \Lcal^1(\Om, \Gcal, \PW)$ for every $x \in \R$, 
				\item $\ug(\cdot,x)$ is a version of $\Ep{\mbu(\cdot,x)\middle|\Gcal}$ for every $x\in\R$.
			\end{enumerate}
			\item a version $g$ of $ \Ep{\mbu(\cdot, f(\cdot))\middle|\Gcal}$ such that
			\begin{equation}
				\label{eq:chisini}
				\Phi_\Gcal(\cdot, g(\cdot))\in\cm{f\middle|\Gcal},        
			\end{equation}
		\end{itemize}
		where $\Phi_\Gcal:\Omega\times \R\rightarrow [-\infty,+\infty]$ defined by
		\[\Phi_\Gcal(\omega,x):=\inf\{y\in\R\mid \ug(\omega,y)>x\}\]
		is $\Gcal\otimes\Borel_\R$ measurable.
	\end{theorem}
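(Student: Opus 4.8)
The plan is to translate the preference-theoretic definition of the conditional Chisini mean into an identity between ordinary conditional expectations, and then to invert in the spatial variable. First I would feed the integral representation of Theorem \ref{thm:general_thm} into the definition of $\cm{f\middle|\Gcal}$: since $\mbu(\om,0)=0$, for every $A\in\Gcal$ one has $\int_\Omega \mbu(\cdot, f\ind_A)\dP=\int_A \mbu(\cdot,f)\dP$, so that $f\ind_A\sim g\ind_A$ for all $A\in\Gcal$ is equivalent to $\int_A \mbu(\cdot,f)\dP=\int_A \mbu(\cdot,g)\dP$ for all $A\in\Gcal$, i.e.\ to $\Ep{\mbu(\cdot,f)\middle|\Gcal}=\Ep{\mbu(\cdot,g)\middle|\Gcal}$. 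Thus $g\in\cm{f\middle|\Gcal}$ if and only if $\mbu(\cdot,g)$ and $\mbu(\cdot,f)$ share the same $\Gcal$-conditional expectation, and the entire statement reduces to producing a bounded $\Gcal$-measurable $g$ with this property, together with the announced $\ug$ and $\Phi_\Gcal$.

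The core difficulty, and the step I expect to be the main obstacle, is the construction of the $\Gcal$-regular function $\ug$, specifically its pointwise continuity in $x$. Choosing for each rational $q$ a version of $\Ep{\mbu(\cdot,q)\middle|\Gcal}$ and passing to monotone limits produces, on a set of full measure, a jointly measurable version that is right-continuous and strictly increasing in $x$ (strictness for real $x<y$ follows by bracketing with rationals and the a.s.\ strict positivity of $\Ep{\mbu(\cdot,q')-\mbu(\cdot,q)\middle|\Gcal}$), with $\ug(\cdot,0)=0$. Continuity at each fixed $x$ holds a.s.\ by conditional dominated convergence, but upgrading this to continuity in $x$ simultaneously for a.e.\ $\om$ is delicate, since a monotone version may a priori retain jumps at uncountably many irrational abscissae and a Fubini argument against Lebesgue measure cannot exclude them. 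To overcome this I would decompose $(\Omega,\Gcal,\PW)$ into its atomic and non-atomic parts: on each $\Gcal$-atom $C$ the conditional expectation is the genuine average $\PW(C)^{-1}\int_C \mbu(\cdot,x)\dP$, whose continuity and strict monotonicity in $x$ follow directly from dominated convergence and the strict monotonicity of $\mbu(\om,\cdot)$; on the non-atomic part $\Gcal$ carries infinitely many disjoint non-null events, so Theorem \ref{thm:general_thm} applies to the restriction of $\succeq$ to $\brvG$ and yields a continuous state-dependent utility representing the conditional preference, which the uniqueness of the additive representation identifies with $\Ep{\mbu(\cdot,x)\middle|\Gcal}$ up to a positive $\Gcal$-measurable factor. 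Dividing by that factor preserves continuity, strict monotonicity and joint measurability, and redefining $\ug(\om,x)=x$ on the residual null set makes $\ug$ $\Gcal$-regular everywhere while keeping $\ug(\cdot,x)$ a version of $\Ep{\mbu(\cdot,x)\middle|\Gcal}$.

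With $\ug$ in hand, the next ingredient is the substitution (``freezing'') identity $\Ep{\mbu(\cdot,g)\middle|\Gcal}=\ug(\cdot,g)$ for every bounded $\Gcal$-measurable $g$. I would establish it first for simple $g=\sum_i x_i\ind_{A_i}$ with $A_i\in\Gcal$, where it is immediate from ``taking out what is known'', and then pass to the general case through a bounded simple approximation $g_n\to g$, using the pointwise continuity of $\ug(\om,\cdot)$ on the left-hand side and conditional dominated convergence on the right; this is precisely the step that makes the continuity of $\ug$ indispensable rather than cosmetic.

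Finally I would set $g:=\Phi_\Gcal(\cdot,h)$ for a version $h$ of $\Ep{\mbu(\cdot,f)\middle|\Gcal}$. Joint measurability of $\Phi_\Gcal$ is routine from the monotonicity and $\Gcal\otimes\Borel_\R$-measurability of $\ug$, since its superlevel sets in the first argument are described by level sets of $\ug$. If $\abs{f}\le k$ then $\ug(\cdot,-k)\le h\le\ug(\cdot,k)$ a.s., so $h(\om)$ lies in the range of the continuous strictly increasing map $\ug(\om,\cdot)$ and $g(\om)=\Phi_\Gcal(\om,h(\om))\in[-k,k]$; hence $g\in\brvG$ and $\ug(\cdot,g)=h$. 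Combining with the freezing identity gives $\Ep{\mbu(\cdot,g)\middle|\Gcal}=h=\Ep{\mbu(\cdot,f)\middle|\Gcal}$, so $g\in\cm{f\middle|\Gcal}$ by the first step, and in particular $\cm{f\middle|\Gcal}\neq\varnothing$. Everything outside the continuity upgrade of $\ug$ is bookkeeping around conditional expectations and monotone inverses.
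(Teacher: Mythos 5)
Your proposal reaches the statement and is correct in substance, but the way you construct $\ug$ --- the heart of the proof --- is genuinely different from the paper's. The paper restricts the additive functional $V$ satisfying (C\ref{item:signed_measure})--(C\ref{item:PC}) to $\Gcal \times \brvG$ and then applies Proposition \ref{lem:from_v_to_up} to the restricted functional $V^\Gcal$: the key point, stressed in Step 1 of the paper's proof, is that Proposition \ref{lem:from_v_to_up}, unlike Theorems \ref{formaintegrale} and \ref{thm:general_thm}, imposes no richness condition on the $\sigma$-algebra, so it applies verbatim even when $\Gcal$ is trivial or has only two atoms, and the continuity upgrade of Proposition \ref{propcontofuplus} is inherited for free. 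You instead treat Theorem \ref{thm:general_thm} as a black box and route around its three-disjoint-non-null-events hypothesis by decomposing $(\Omega,\Gcal,\PW)$ into atomic and non-atomic parts: explicit averages on atoms (essentially Example \ref{example:CC}), and on the non-atomic part an application of Theorem \ref{thm:general_thm} to the trace preference, followed by an identification of the two additive representations of that preference. This works --- the trace preference inherits (SM), (PC), (ST), and the identification can be justified by the uniqueness clause of the Wakker--Zank representation (equivalently, a Cauchy-equation argument exploiting additivity over $\Gcal$-sets and the interval ranges available on the non-atomic part), followed by the Radon--Nikodym rescaling that the paper itself performs in the proof of Theorem \ref{repr:nonlinear}. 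What each route buys: the paper's is shorter and entirely self-contained given its Proposition \ref{lem:from_v_to_up}, while yours avoids re-running the continuity machinery for $\Gcal$ but imports a uniqueness theorem from \cite{WZ99} that this paper never states; that invocation (together with the equivalence $\QW \sim \PW$ on the trace $\sigma$-algebra and the treatment of the scale constant) is the one place your sketch would need to be expanded to be fully rigorous. The remaining steps --- the reduction of \eqref{eq:sol_chisinimean} to equality of conditional expectations, the freezing identity proved on simple functions and extended by approximation, the measurability of $\Phi_\Gcal$, and the range/version bookkeeping giving $\Phi_\Gcal(\cdot,g)\in\brvG$ --- coincide with Steps 2--4 of the paper's proof; your direct description of the set $\left\{(\om,x) : \Phi_\Gcal(\om,x) > a\right\}$ through the level sets of $\ug(\cdot,a)$ is in fact slightly more elementary than the paper's appeal to Carath\'eodory functions.
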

	
	We write informally \eqref{eq:chisini} as 
	$$\cm{f\st\Gcal}=\ug^{-1}(\Ep{\mbu(f)\st\Gcal}),$$
	and recall that, under the assumptions of Theorem \ref{CCE}, $\cm{f\st\Gcal}$ is unique up to $\succeq$-null events (see \cite{DM23}, Theorem 3.2 and Proposition \ref{prop:exist_unique} in the following section).
	
	\medskip
	
	To clarify the statement of Theorem \ref{CCE}, it seems important at this stage to point out that the use of $\Phi_\Gcal$ in \eqref{eq:chisini} conceals some technical subtleties. To begin with, the dependence of the image of $\ug$ on $\om$ requires cautious handling of its domain. Additionally, for $f \in \brv$, the nature of $g\in\Ep{\mbu(f)\st\Gcal}$ implies that it may not be an element of $\brvG$ itself.  While the case of a finitely generated $\sigma$-algebra is rather simple (as we shall depict in Example \ref{example:CC} below), we shall dedicate a comprehensive treatment to the issue for an arbitrary $\sigma$-algebra $\Gcal$ in the proof postponed to Section \ref{proof:CCE}.
	
	\begin{example}\label{example:CC}
		Consider $\{A_1, ..., A_n\}$ being a finite partition of $\Omega$ such that $\PW(A_i) > 0$ for every $i \leq n$ and the $\sigma$-algebra $\Gcal$ generated by this partition. Then every $\Gcal$-measurable function $g$ can be written as a finite linear combination of indicator functions of the atoms $A_i$. In particular, for $\om \in A_i$, $\ug(\om, x)$ coincides with
		\begin{equation*}
			u_i(x)= \frac{1}{\PW(A_i)} \int_{A_i} \mbu(\om, x)\dP,
		\end{equation*} $u_i$ being finite since $\mbu(\cdot, x) \in \Lcal^1(\Om, \Fcal, \PW)$. Clearly $\ug(\cdot, x) \in \Lcal^1(\Om, \Gcal, \PW)$ for all $x \in \R$, $\ug(\om, x) = \Ep{\mbu(\cdot, x)|\Gcal}(\om)$ and $x\mapsto \ug(\om, x)$ is increasing and continuous for all $\om \in \Om$. 
		\\ For any fixed $A_i$ we notice that the image of $\ug(\om, \cdot)$, namely $R_i := \mathrm{Im}(\ug(\om, \cdot))$ does not depend on $\omega\in A_i$. Given the generalized inverse $\Phi_\Gcal: \Om \times \R \to \R$ defined in Theorem \ref{CCE}, it is crucial to observe that the map $x \mapsto \Phi_\Gcal(\om, x)=\ug^{-1}(\om, x)$ is well defined as a standard inverse function for $\om \in A_i$ and $x \in R_i$ (as a result of strict monotonicity of $\ug(\om, \cdot)$).
		
		In particular if we now choose any $f \in \brv$ and set $h:=\Ep{\mbu(\cdot, f(\cdot))|\Gcal}$ then $h(\omega)\in R_i$ for $\om \in A_i$ and indeed
		\begin{equation*}
			(\ug \circ \Phi_\Gcal)(\om, h(\om)) = h(\om).
		\end{equation*}	
		For $f \in \brv$ denote $\mathfrak{m}_i = u_i^{-1} \left(\frac{1}{\PW(A_i)} \int_{A_i} \mbu(\om, f(\om))\dP \right)$ so that for any $\omega\in A_i$ we have 
		$$\ug(\omega,\mathfrak{m}_i)= \Ep{\mbu(\cdot, f(\cdot))|\Gcal}(\omega).$$ 
		We conclude the example noting that the Chisini mean $\cm{f\st\Gcal}$ for the functional $T(f)=\int_\Om \mbu(\omega, f(\omega)) \dP$ consists, in the current setup, of a single random variable $\cm{f\st\Gcal}$ which assumes the value 
		\begin{equation*}
			u_i^{-1} \left(\frac{1}{\PW(A_i)} \int_{A_i} \mbu(\om, f(\om))\dP \right)
		\end{equation*}
		on the set $A_i$. This can be readily verified following the procedure in Section \ref{proof:CCE}, step 4.
	\end{example}
	
	\section{Proofs of main results}
	
	We stress that the order of the results as presented in the main body of the paper is chosen to put the right focus on the meaning of the results.This section instead is structured in such a way that the logical flow of the proofs is respected. 
	
	\subsection{Proof of Theorem \ref{characterization}}\label{sec:proof_characterization}
	
	This section is dedicated to the proof of Theorem \ref{characterization}. We first show that the conditional Chisini mean exists for a preference satisfying (SM), (PC) and (ST) and that it is unique up to $\succeq$-null events. Then we provide a ``taking out what is known'' property for the conditional Chisini mean before focusing on the proof of the main result.
	
	\begin{proposition}\label{prop:exist_unique}
		Let $\succeq$ be a preference order satisfying (SM), (PC), (ST) and consider a sub-$\sigma$-algebra $\Gcal \subseteq \Fcal$ and $f \in \brv$. Then $\cm{f\middle|\Gcal} \neq \varnothing$ and for any $g, \tilde{g} \in \cm{f\middle|\Gcal}$ it holds that $\left\{g \neq \tilde{g} \right\} \in \Ncal_\succeq$.
	\end{proposition}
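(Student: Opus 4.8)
The plan is to first manufacture a representing functional from (SM) and (PC), and then handle existence and uniqueness separately: existence by inverting the conditional expectation supplied by the integral representation, and uniqueness by a short strict-monotonicity argument. First I would build $T$ from (SM) and (PC) alone. Taking $A=\Om$ in (SM) orders the constants strictly, so for bounded $f$ with $m\le f\le M$ both sets $\{x\in\R: f\succeq x\}$ and $\{x\in\R : x\succeq f\}$ are nonempty, and I set $T(f):=\sup\{x\in\R: f\succeq x\}$. Completeness and transitivity give $T(f)=\inf\{x : x\succeq f\}$, and applying (PC) to the constant sequences $T(f)\pm 1/n\to T(f)$ rules out both $f\succ T(f)$ and $T(f)\succ f$, whence $f\sim T(f)$ and $T$ represents $\succeq$ in the sense of \eqref{repr:T}. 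In particular the defining relation $f\ind_A\sim g\ind_A$ of \eqref{eq:sol_chisinimean} reads $T(f\ind_A)=T(g\ind_A)$.

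For existence, when $\Fcal$ carries at least three disjoint non-null events I would invoke Theorem \ref{formaintegrale}, which upgrades (under (SM), (PC), (ST)) to $f\succeq g\iff \int_\Om\mbu(\omega,f(\omega))\dP\ge\int_\Om\mbu(\omega,g(\omega))\dP$; since $\mbu(\omega,0)=0$ this turns $f\ind_A\sim g\ind_A$ into $\int_A\mbu(\omega,f(\omega))\dP=\int_A\mbu(\omega,g(\omega))\dP$ for all $A\in\Gcal$. Thus $g\in\cm{f\middle|\Gcal}$ iff the $\Gcal$-measurable $g$ satisfies $\ug(\omega,g(\omega))=\Ep{\mbu(\cdot,f)\middle|\Gcal}(\omega)$ $\PW$-a.s., where $\ug(\omega,x)=\Ep{\mbu(\cdot,x)\middle|\Gcal}(\omega)$ is the $\Gcal$-projection and I have used the freezing identity $\Ep{\mbu(\cdot,g)\middle|\Gcal}=\ug(\cdot,g)$ for $\Gcal$-measurable $g$. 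Writing $h:=\Ep{\mbu(\cdot,f)\middle|\Gcal}$ and using $\mbu(\cdot,m)\le\mbu(\cdot,f)\le\mbu(\cdot,M)$, I get the sandwich $\ug(\cdot,m)\le h\le\ug(\cdot,M)$ a.s., so $h(\omega)$ lies in the range of $\ug(\omega,\cdot)$; the continuous, strictly increasing version of $\mbu$ from Theorem \ref{thm:general_thm} then makes the generalized inverse $g(\omega)=\Phi_\Gcal(\omega,h(\omega))$ an exact, $\Gcal$-measurable solution, exactly as carried out in Theorem \ref{CCE}. The degenerate cases with at most two non-null atoms are settled by the explicit one-dimensional construction of the certainty equivalent on each atom.

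The step I expect to be the main obstacle is precisely this construction of a single $\Gcal$-measurable $g$ solving the equation simultaneously for every $A\in\Gcal$: one must invert the $\omega$-dependent map $\ug(\omega,\cdot)$, whose range genuinely depends on $\omega$, in a jointly measurable fashion. For finitely generated $\Gcal$ this is immediate and explicit (see Example \ref{example:CC}), but for an arbitrary $\Gcal$ the exact solvability of $\ug(\omega,g(\omega))=h(\omega)$ hinges on continuity of $\mbu(\omega,\cdot)$ (so that $\ug(\omega,\cdot)$ has interval range and no jumps), which is what Theorem \ref{thm:general_thm} secures, and on the $\Gcal\otimes\Borel_\R$-measurability of $\Phi_\Gcal$; establishing these is the technical heart, which I would defer to the construction underlying Theorem \ref{CCE}.

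For uniqueness, let $g,\tilde g\in\cm{f\middle|\Gcal}$ and put $A:=\{g>\tilde g\}\in\Gcal$. From $f\ind_A\sim g\ind_A$ and $f\ind_A\sim\tilde g\ind_A$ I obtain $\int_A\mbu(\omega,g(\omega))\dP=\int_A\mbu(\omega,\tilde g(\omega))\dP$. Since $\mbu(\omega,\cdot)$ is strictly increasing, on $A$ the integrand $\mbu(\cdot,g)$ strictly exceeds $\mbu(\cdot,\tilde g)$, so equality of the integrals forces $\PW(A)=0$; the symmetric choice $\{g<\tilde g\}$ yields the reverse inclusion, whence $\{g\neq\tilde g\}$ is $\PW$-null. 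Finally, Theorem \ref{formaintegrale}(i) identifies $\Ncal_\succeq$ with the $\PW$-null sets, so $\{g\neq\tilde g\}\in\Ncal_\succeq$, as claimed. In the degenerate cases one first derives strict monotonicity for non-constant outcomes directly from (SM) and (PC) and then argues identically against $\Ncal_\succeq$.
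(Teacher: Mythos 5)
Your construction of $T$ and your treatment of the case where $\Fcal$ contains three disjoint non-null events are essentially sound, but they follow a genuinely different and much heavier route than the paper: you invoke Theorem \ref{formaintegrale}, Theorem \ref{thm:general_thm} and the inverse construction of Theorem \ref{CCE}, whereas the paper never touches the integral representation here --- it verifies that $T$ satisfies the hypotheses of \cite{DM23}, Theorem 3.2 (the only nontrivial one being ($\Gcal$-PS), which it derives from (ST) by a short patching argument with indicators) and quotes that theorem for both existence and uniqueness. Two remarks on your route. First, mind the circularity: in the paper, the proofs of Theorem \ref{thm:general_thm} and Theorem \ref{CCE} cite Proposition \ref{prop:exist_unique}; this is repairable, because they only use the existence of the representing functional $T$, which you rebuild independently from (SM) and (PC), but you must state this explicitly. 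Second --- and this is the genuine gap --- the proposition carries no hypothesis that $\Fcal$ contains three disjoint non-null events, while everything you rely on (the pair $(\mbu,\PW)$, the identification of $\Ncal_\succeq$ with the $\PW$-null sets, hence both your existence and your uniqueness arguments) exists only under that hypothesis.

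Your one-sentence dismissal of the degenerate cases does not hold up. When at most two disjoint non-null events exist there is no probability $\PW$, no state-dependent utility $\mbu$ and no integral representation at all, so uniqueness cannot be ``argued identically against $\Ncal_\succeq$''; it must be proved from scratch, and it is not merely atom-wise: a $\Gcal$-measurable element of $\cm{f\middle|\Gcal}$ need not be constant on the non-null atoms, since $\Gcal$ may contain many $\succeq$-null sets, so one first has to show that $\Ncal_\succeq$ is stable under the relevant unions and that modification on null events preserves the defining relation \eqref{eq:sol_chisinimean}. Likewise, existence ``on each atom'' does not suffice: the atom-wise certainty equivalents must be patched into a single $g\in\brvG$ satisfying $f\ind_A\sim g\ind_A$ simultaneously for all $A\in\Gcal$ (including $A=\Omega$), and this patching step itself requires (ST). All of this is doable, but it amounts to redoing in miniature exactly what \cite{DM23}, Theorem 3.2 provides; avoiding any such case split is precisely the payoff of the paper's approach.
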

	\begin{proof}
		First, observe that for any preference order satisfying (SM) and (PC) there exists a representing functional $T:\brv \to \R$. Indeed, (SM) and (PC) impliy the pointwise monotonicity of $\succeq$ on $\brv$ (see Lemma \ref{lem:ptwise_mon}). Consequently, for any $f \in \brv$ the sets $\underline{F} := \{ x \in \R: f \succeq x\}$ and $\overline{F} := \{ x \in \R: x \succeq f\}$ are non empty closed halflines in $\R$ such that $\overline{F} \cup \underline{F} = \R$. Then $\overline{F} \cap \underline{F} = \left\{z\right\}$, with $z$ being the (unique) certainty equivalent of $f$, i.e. $f \sim z$, and we can define $T(f) := z$. \\
		We show that the hypotheses of \cite{DM23}, Theorem 3.2 hold for such a functional $T$. Indeed, (SM), (PC) and (ST) of $\succeq$ imply that $T$ satisfies\footnote{We adopt the notation of \cite{DM23}} ($\Gcal$-Mo), ($\Gcal$-PC), ($\Gcal$-QL) and ($\Gcal$-NB) for any $\sigma$-algebra $\Gcal \subseteq \Fcal$. It remains to show that $T$ satisfies ($\Gcal$-PS). Consider $f \in \brv$, $A_1, A_2 \in \Gcal$ with $A_1 \cap A_2 = \varnothing$ and let $x_1, x_2 \in \R$ be such that $T(f\ind_{A_i}) = T(x_i\ind_{A_i})$ for $i=1,2$. Define $g_1 := x_1\ind_{A_1}$ and $g_2 := f \ind_{A_2}$. By (ST) we obtain $T(f \ind_{A_1} + g_2 \ind_{A_1^c}) = T(x_1 \ind_{A_1} + g_2 \ind_{A_1^c})$ and similarly we find  $T(f \ind_{A_2} + g_1 \ind_{A_2^c}) = T(x_2 \ind_{A_2} + g_1 \ind_{A_2^c})$. Since $A_2^c \cap A_1 = A_1$ and $A_1^c \cap A_2 = A_2$, it follows that $T(f \ind_{A_1} + f \ind_{A_2}) = T(x_1 \ind_{A_1} + f \ind_{A_2})$ and $T(f \ind_{A_2} + x_1 \ind_{A_1}) = T(x_2 \ind_{A_2} + x_1 \ind_{A_1})$, whence $T(f \ind_{A_1} + f \ind_{A_2}) = T(x_1 \ind_{A_1} + x_2 \ind_{A_2})$.
		We can therefore apply \cite{DM23}, Theorem 3.2 which in turn provides $\cm{f\middle|\Gcal} \neq \varnothing$ and the desired uniqueness.
	\end{proof}
	
	\begin{lemma}\label{lem:taking_out}
		Let $\succeq$ be a preference order satisfying (SM), (PC), (ST) and consider a sub-$\sigma$-algebra $\Gcal \subseteq \Fcal$ and $f \in \brv$. For $A \in \Gcal$, we have
		\begin{equation}\label{eq:taking_out}
			\cm{f\ind_A\middle|\Gcal} = \cm{f\middle|\Gcal}\ind_A
		\end{equation}
		meaning that $g\in \cm{f\ind_A\middle|\Gcal}$ if and only if $g = \overline{g}\ind_A$ for some  $\overline{g}\in \cm{f\middle|\Gcal}$. 
	\end{lemma}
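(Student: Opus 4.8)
The plan is to establish the set identity \eqref{eq:taking_out} through its two inclusions, working directly with the equivalence $\sim$ and the Sure-Thing Principle, and using Proposition \ref{prop:exist_unique} both for the nonemptiness of $\cm{f\middle|\Gcal}$ and for the fact that conditional Chisini means are determined up to $\Ncal_\succeq$.

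For the inclusion $\cm{f\middle|\Gcal}\ind_A\subseteq\cm{f\ind_A\middle|\Gcal}$, take $\overline{g}\in\cm{f\middle|\Gcal}$ and fix $B\in\Gcal$. Since $A\in\Gcal$ we have $A\cap B\in\Gcal$, hence $(f\ind_A)\ind_B=f\ind_{A\cap B}\sim\overline{g}\ind_{A\cap B}=(\overline{g}\ind_A)\ind_B$, the central equivalence being the defining relation for $\overline{g}$ evaluated at the event $A\cap B$. As $B\in\Gcal$ was arbitrary, $\overline{g}\ind_A\in\cm{f\ind_A\middle|\Gcal}$.

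For the reverse inclusion, fix some $\overline{g}_0\in\cm{f\middle|\Gcal}$ (Proposition \ref{prop:exist_unique}) and, given $g\in\cm{f\ind_A\middle|\Gcal}$, define $\overline{g}:=g\ind_A+\overline{g}_0\ind_{A^c}\in\brvG$. Evaluating the membership of $g$ at the events $A\cap B$ and of $\overline{g}_0$ at the events $A^c\cap B$ gives, for every $B\in\Gcal$, $f\ind_{A\cap B}\sim g\ind_{A\cap B}$ and $f\ind_{A^c\cap B}\sim\overline{g}_0\ind_{A^c\cap B}$. I then glue these by two applications of (ST): replacing in the first the null outcome on $(A\cap B)^c$ by $f\ind_{A^c\cap B}$ yields $f\ind_B\sim g\ind_{A\cap B}+f\ind_{A^c\cap B}$, while replacing in the second the null outcome on $(A^c\cap B)^c$ by $g\ind_{A\cap B}$ yields $g\ind_{A\cap B}+f\ind_{A^c\cap B}\sim\overline{g}\ind_B$; transitivity then produces $f\ind_B\sim\overline{g}\ind_B$ for every $B\in\Gcal$, so $\overline{g}\in\cm{f\middle|\Gcal}$. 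This is precisely the pasting already carried out to verify ($\Gcal$-PS) in Proposition \ref{prop:exist_unique}, and it leaves us with $\overline{g}\ind_A=g\ind_A$.

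It remains to match $g$ with $\overline{g}\ind_A=g\ind_A$, for which I must show that $g$ vanishes on $A^c$ up to $\Ncal_\succeq$. Choosing $B'\in\Gcal$ with $B'\subseteq A^c$, the defining relation of $g$ gives $g\ind_{B'}\sim(f\ind_A)\ind_{B'}=0$. Setting $P:=\{g>0\}\cap A^c$ and $N:=\{g<0\}\cap A^c$ (both in $\Gcal$), I argue by contradiction: if $P\notin\Ncal_\succeq$ then some level set $\{g\ge\varepsilon\}\cap A^c\subseteq P$ is non-null, and comparing $g\ind_P$ with $\varepsilon\ind_{\{g\ge\varepsilon\}\cap A^c}$ via the pointwise monotonicity of $\succeq$ (Lemma \ref{lem:ptwise_mon}) and then applying (SM) to this non-null constant comparison yields $g\ind_P\succ0$, contradicting $g\ind_{P}\sim0$; the case $N\notin\Ncal_\succeq$ is symmetric. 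Hence $\{g\neq0\}\cap A^c=P\cup N\in\Ncal_\succeq$, so $g=g\ind_A=\overline{g}\ind_A$ up to $\Ncal_\succeq$, which is the sense in which \eqref{eq:taking_out} should be read in light of the uniqueness in Proposition \ref{prop:exist_unique}. I expect this final reconciliation — the passage from the $\succeq$-based identities to the set identity, which can hold only modulo $\Ncal_\succeq$ — to be the delicate point, whereas the (ST) gluing is routine once the decomposition along $A\cap B$ and $A^c\cap B$ is in place.
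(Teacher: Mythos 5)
Your proof is correct, and it handles the hard (``only if'') direction by a genuinely different, more self-contained route than the paper. The easy inclusion is identical in both: for $\overline{g}\in\cm{f\middle|\Gcal}$ and $B\in\Gcal$, the identity $(f\ind_A)\ind_B=f\ind_{A\cap B}$ with $A\cap B\in\Gcal$ does all the work. For the converse, the paper never leaves the representing functional: it writes $T(g\ind_B)=T(f\ind_{A\cap B})=T(\overline{g}\ind_{A\cap B})=T((\overline{g}\ind_A)\ind_B)$ for an \emph{arbitrary} $\overline{g}\in\cm{f\middle|\Gcal}$ and then asserts $g=\overline{g}\ind_A$; that assertion really rests on the uniqueness part of Proposition \ref{prop:exist_unique}, since $g$ and $\overline{g}\ind_A$ are two solutions of the same Chisini system for $f\ind_A$ and hence agree only up to $\Ncal_\succeq$. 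You instead build the specific $\overline{g}=g\ind_A+\overline{g}_0\ind_{A^c}$, verify $\overline{g}\in\cm{f\middle|\Gcal}$ by the (ST)-pasting (the same pasting used for ($\Gcal$-PS) in Proposition \ref{prop:exist_unique}), and then show directly that $g$ vanishes on $A^c$ up to $\Ncal_\succeq$. What your route buys is precision: you make explicit what the paper glosses over, namely that \eqref{eq:taking_out} can only hold modulo $\Ncal_\succeq$ (any solution may be altered on a null set, so literal equality is impossible), and your $\overline{g}$ agrees with $g$ exactly on $A$; the paper's route buys brevity by outsourcing the identification to uniqueness. One step of yours deserves justification: the claim that $P:=\{g>0\}\cap A^c\notin\Ncal_\succeq$ forces some level set $P_{1/n}:=\{g\ge 1/n\}\cap A^c$ to be non-null uses that $\Ncal_\succeq$ is closed under countable increasing unions. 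This is true but not free: it follows from (PC), since if every $P_{1/n}$ were null then $h\ind_{P_{1/n}}+k\ind_{P_{1/n}^c}\sim k$ for all $h,k\in\brv$, and pointwise convergence together with (PC) would force $h\ind_P+k\ind_{P^c}\sim k$, i.e.\ $P\in\Ncal_\succeq$. Alternatively, you can delete the level-set argument altogether: $g\ind_A=\overline{g}\ind_A$ belongs to $\cm{f\ind_A\middle|\Gcal}$ by the easy inclusion, so $g$ and $g\ind_A$ solve the same system, and the uniqueness in Proposition \ref{prop:exist_unique} yields $\{g\neq 0\}\cap A^c=\{g\neq g\ind_A\}\in\Ncal_\succeq$ in one stroke.
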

	
	\begin{proof}
		Consider $g \in \cm{f\ind_A\middle|\Gcal}$. By definition we have that $T(g\ind_B) = T((f\ind_A) \ind_B)$ for all $B \in \Gcal$. As $\ind_A \ind_B = \ind_{A \cap B}$ and $A \cap B \in \Gcal$, we can write $T((f\ind_A) \ind_B) = T(f\ind_{A\cap B}) = T(\overline{g}\ind_{A\cap B}) = T((\overline{g}\ind_A)\ind_B)$, with $\overline{g} \in \cm{f\middle|\Gcal}$, and hence it follows that $g = \overline{g}\ind_A$. \\
		Now let $\overline{g} \in \cm{f\middle|\Gcal}$ and fix $A \in \Gcal$. For all $G \in \Gcal$ it holds that $A \cap G \in \Gcal$, so that in particular $T(f\ind_{A \cap G}) = T(\overline{g}\ind_{A \cap G})$. It then follows for all $G \in \Gcal$ that $T(\overline{g}\ind_{A \cap G}) = T(f\ind_{A \cap G}) = T((f\ind_A) \ind_G) = T(g \ind_G),$
		with $g \in \cm{f\ind_A\middle|\Gcal}$. We conclude that $\overline{g}\ind_A = g$.
	\end{proof}
	
	\noindent We further observe that, as $\cm{0 \middle|\Gcal}= 0$ by definition, for any $A \in \Gcal$ property \eqref{eq:taking_out} is equivalent to  
	\[\cm{f\ind_A + h \ind_{A^c}\middle|\Gcal} = \cm{f\middle|\Gcal}\ind_A + \cm{h\middle|\Gcal}\ind_{A^c}.\]
	
	\begin{proof}[Proof of Theorem \ref{characterization}] We begin by showing that Item (\ref{prop:ST}) implies (\ref{prop:cond}). To this aim, note that the hypotheses of Proposition \ref{prop:exist_unique} are satisfied and hence for any $\Gcal \subseteq \Fcal$ there exists a function $g \in \brvG$ that satisfies Eq. \eqref{eq:chisini_problem}.\\
		Item (\ref{prop:condA}) follows directly from Item (\ref{prop:cond}).\\ 
		We conclude the proof showing that (\ref{prop:condA}) implies (\ref{prop:ST}). Consider $f_1, f_2, f, h \in \brv$ and observe that, by definition of $\nullset$, whenever $A \in \nullset$ then $f\ind_A + h\ind_{A^c} \sim h$, so that $f_1\ind_A + h\ind_{A^c}  \succeq f_2\ind_A + h\ind_{A^c}$ implies $f_1\ind_A + \overline{h}\ind_{A^c} \succeq f_2\ind_A + \overline{h}\ind_A^c$ for any $\overline{h} \in \brv$. Similarly whenever $A^c \in \nullset$. Suppose $A, A^c \in \Fcal \setminus\Ncal_\succeq$ and consider $\Gcal:= \sigma(A)$. First notice that, by construction of $\Gcal$ and uniqueness, for any $f \in \brv$ the conditional Chisini mean $\cm{f\middle|\Gcal}$ collapses to a single $\Gcal$-measurable random variable. In particular, $\cm{f\middle|\Gcal} = x \ind_A + y \ind_{A^c}$ for some $x,y \in \R$. \\
		Let $T:\brv \to \R$ be a representing functional for $\succeq$ conditionable on $\Gcal = \sigma(A)$ and take $f_1, f_2,h \in \brv$ such that $T(f_1 \ind_A + h\ind_{A^c}) \geq T(f_2 \ind_A + h\ind_{A^c})$. Conditionability of $T$ ensures that there exists $g_i \in \brvG$ such that $T(g_i \ind_G) = T((f_i \ind_A + h \ind_{A^c})\ind_G)$ for any $G \in \Gcal$ and $i = 1,2$. As we previously pointed out, for this choice of $\Gcal$ the conditional Chisini mean consists of a single random variable. Hence we denote with $\cm{f_i \ind_A + h \ind_{A^c}\middle|\Gcal}$ the solution $g_i$ to the previous equation. By Lemma \ref{lem:taking_out}, we have that for $i = 1,2$
		$$\cm{f_i \ind_A + h \ind_{A^c}\middle|\Gcal} = \cm{f_i\middle|\Gcal}\ind_A + \cm{h \middle|\Gcal}\ind_{A^c},$$
		and recalling that any $\Gcal$-measurable function is of the form $x\ind_A + y \ind_{A^c}$ for some $x,y \in \R$, we obtain that $\cm{f_i\middle|\Gcal} = x_i$ on $A$ and analogously $\cm{h\middle|\Gcal} = \overline{y}$ on $A^c$ for some $x_i, \overline{y} \in \R$. Now observe that $T(f_1 \ind_A + h \ind_{A^c}) \geq T(f_2 \ind_A + h \ind_{A^c})$ implies $x_1 \geq x_2$. Assume, by way of contradiction, that $x_1 < x_2$, then (SM) yields $T(x_1 \ind_A  + \overline{y}\ind_{A^c}) < T(x_2 \ind_A  + \overline{y}\ind_{A^c})$. However, 
		$$T(f_1 \ind_A + h \ind_{A^c}) = T(x_1 \ind_A  + \overline{y}\ind_{A^c}) < T(x_2 \ind_A  + \overline{y}\ind_{A^c}) = T(f_2 \ind_A + h \ind_{A^c}),$$ 
		hence a contradiction. \\
		Keeping $f_1$ and $f_2$ constant and replacing $h$ with an arbitrary $k \in \brv$ we find $\cm{f_i \ind_A + k \ind_{A^c} \middle|\Gcal} = x_i \ind_A  + z \ind_{A^c}$ for $i=1,2$, with $z = \mathfrak{m}(k|\Gcal)$ on $A^c$. As $x_1 \geq x_2$, (SM) implies
		$$
		T(f_1 \ind_A + k \ind_{A^c}) = T(x_1 \ind_A  + z\ind_{A^c}) \geq T(x_2 \ind_A  + z\ind_{A^c}) =  T(f_2 \ind_A + k \ind_{A^c}).
		$$
		Since $k$ is arbitrary, we get in fact that $T(f_1 \ind_A + h \ind_{A^c}) \geq T(f_2 \ind_A + h \ind_{A^c})$ for one $h\in\brv$ entails $T(f_1 \ind_A + \overline{h} \ind_{A^c}) \geq   T(f_2 \ind_A + \overline{h} \ind_{A^c})$ for all $\overline{h} \in \brv$, which in turn implies that $\succeq$ satisfies (ST).
	\end{proof}
	
	\subsection{Proof of Theorem  \ref{thm:general_thm}}
	\label{sec:proofthm1}
	This section is devoted to the proof of Theorem \ref{thm:general_thm}, which is quite involved and requires some preliminary groundwork.
	\\ In the first part of the proof we lay the basis for the main argument by rephrasing some known results from \cite{WZ99,DM23}, obtaining in particular a functional that represents $\succeq$, which turns out to be additive on measurable sets. We provide some auxiliary lemmata: Lemma \ref{claimpropertiesuplus} proves that it is possible to construct an integrand function with the desired properties; this is based on Lemma \ref{lem:AqBq} which firstly ensures the suitable measurability by a construction on rational numbers; Lemma \ref{claim:pointwise_limit} states that the integral representation is well defined and enjoys the desired properties; finally Lemma \ref{claimvisint} shows continuity of $\mbu$. The concluding argument leads to Proposition \ref{lem:from_v_to_up} and illustrates how ultimately Theorem \ref{thm:general_thm} follows. 
	
	\medskip
	Consider a preference order $\succeq$ on $\brv$ satisfying (SM), (PC) and (ST). Suppose additionally that $\Fcal$ contains at least three disjoint events $A_1, A_2, A_3$ such that $A_i \notin \Ncal_\succeq$ for $i=1,2,3$. As proved in Proposition \ref{prop:exist_unique} (see also \cite{WZ99}, Theorem 11), under these assumptions $\succeq$ admits a numerical representation, i.e. there exists a functional $T: \Bd \to \R$ such that for any given $f, g \in \Bd$ 
	\begin{equation*}
		T(f) \geq T(g) \quad \text{ if and only if } \quad f \succeq g.
	\end{equation*}
	In particular \cite{WZ99}, Theorem 11, or \cite{CL06}, Theorem 4, imply that the Assumptions in \cite{DM23}, Theorem 5.4\footnote{Adopting $\Fcal$ in place of $\Gcal$}, are met by the functional $T$ and we can therefore provide a further refinement of the representing functional. Specifically one can find a $V:\Fcal \times \brv \to \R, \; (A, f) \mapsto V_A(f)$ such that for $f_1, f_2 \in \brv$ and $A \in \Fcal$ we have $f_2 \ind_A \succeq f_1 \ind_A$ if and only if $V_A(f_1) \leq V_A(f_2)$. 
	Furthermore, $V$ can be taken to satisfy the following properties (we refer the reader to the aforementioned literature for further details): 
	\begin{enumerate}[(C1)]
		\item \label{item:signed_measure} for every $f \in \Bd$ the map $A \mapsto V_A(f)$ is a signed measure on $(\Om, \Fcal)$ with $V_A(0) = 0$ for every $A \in \Fcal$. Moreover, $A \mapsto V_A(\ind_\Om) =: \PW(A)$ is a probability measure on $(\Om, \Fcal)$;
		\item \label{item:ind_A} for every $f \in \Bd$, $A \in \Fcal$ we have $V_A(f) = V_{\Omega}(f \ind_A)$;
		\item \label{item:Vmon} the functional $V$ is strictly monotone in that for any $f,g\in \brv$ such that $f\leq g$ then $V_A(f)\leq V_A(g)$ for any $A\in \Fcal$, and if $A$ is such that $\PW(A) > 0 $ then $V_A(x)<V_A(y)$ for any $x,y\in\R$ with $x<y$;
		\item \label{item:PC} for every $A \in \Fcal$ the functional $f \mapsto V_A(f)$, defined on $\Bd$, is pointwise continuous, in that for any sequence $(f_n)_n\subset \brv$, $\sup_n\norm{f_n}_\infty<+\infty$, $f_n(\omega)\to f(\omega)$ for any $\omega\in\Omega$, then $V_A(f_n)\to V_A(f)$.
	\end{enumerate}
	
	The following proposition provides a general representation result which stretches somewhat beyond the scope of the present work. A similar problem is in fact addressed within some relevant contributions in the stream of literature of \cite{AZ90} and \cite[Ch. 2]{Buttazzo}, under the name of Nonlinear Superposition Operator and exploiting the theory of Carath\'eodory functions. Another application of these representation results to subjective expected utility is presented in \cite{STANCA20}.
	
	\begin{proposition} \label{lem:from_v_to_up}
		Let $V:\Fcal \times \brv \to \R$ be a functional satisfying (C\ref{item:signed_measure}), (C\ref{item:ind_A}), (C\ref{item:Vmon})and (C\ref{item:PC}). Then there exists a $\Fcal$-regular function $\mbu:\Om \times \R \to \R$ with $\mbu(\cdot, x) \in \Lcal^1(\Om, \Fcal, \PW)$ for every $x \in \R$,
		such that
		\begin{equation*}
			V_A(f) = \int_A \mbu(\omega, f(\omega)) \dP \quad \text{for any } f \in \brv, A\in\Fcal.
		\end{equation*}
	\end{proposition}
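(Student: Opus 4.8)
The plan is to realize $\mbu$ as a Radon--Nikodym density, to obtain its joint measurability through a construction indexed by the rationals, and finally to upgrade it to an $\Fcal$-regular function by exploiting the full strength of pointwise continuity (C\ref{item:PC}). The starting point is the observation that each set function $A\mapsto V_A(x)$ is, by (C\ref{item:signed_measure}), a finite signed measure that is absolutely continuous with respect to $\PW$. This absolute continuity is exactly where the null-set structure enters: by the characterization $\nullset=\{A\in\Fcal : \PW(A)=0\}$ (Theorem \ref{formaintegrale}), if $\PW(A)=0$ then $A$ is $\succeq$-null, hence $V_A(f)=V_\Omega(f\ind_A)=0$ for every $f$ by (C\ref{item:ind_A}). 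Radon--Nikodym then supplies, for each rational $q$, a density $\mbu(\cdot,q)\in\Lcal^1(\Om,\Fcal,\PW)$ with $V_A(q)=\int_A\mbu(\omega,q)\dP$. For rationals $q<q'$, (C\ref{item:Vmon}) gives $V_A(q)<V_A(q')$ on every $A$ with $\PW(A)>0$, so $\mbu(\cdot,q)<\mbu(\cdot,q')$ $\PW$-a.e.; since $\mathbb{Q}^2$ is countable, a single $\PW$-null set $N$ carries all these relations, and off $N$ the map $q\mapsto\mbu(\omega,q)$ is strictly increasing on $\mathbb{Q}$. This is the content of Lemma \ref{lem:AqBq} and Lemma \ref{claimpropertiesuplus}. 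I then extend by the right limit $\mbu(\omega,x):=\lim_{q\downarrow x,\,q\in\mathbb{Q}}\mbu(\omega,q)$, which is nondecreasing and right-continuous in $x$ and $\Fcal\otimes\Borel_\R$-measurable, being built from countably many measurable functions via infima.

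Next I would establish the integral representation $V_A(f)=\int_A\mbu(\omega,f(\omega))\dP$ (Lemma \ref{claim:pointwise_limit}). For constant $x$, monotone convergence along $q\downarrow x$ together with continuity of $V_A$ along constants (a special case of (C\ref{item:PC})) yields $\int_A\mbu(\cdot,x)\dP=\lim_{q\downarrow x}V_A(q)=V_A(x)$. For simple $f$ this propagates by the finite additivity of $A\mapsto V_A(\cdot)$ from (C\ref{item:signed_measure}) and by (C\ref{item:ind_A}). For an arbitrary $f\in\brv$ I approximate it pointwise by uniformly bounded simple functions $f_n$: the left-hand sides converge by (C\ref{item:PC}), while the right-hand sides converge by dominated convergence once continuity of $x\mapsto\mbu(\omega,x)$ is known, the dominating function being $\abs{\mbu(\cdot,-M)}+\abs{\mbu(\cdot,M)}\in\Lcal^1$ for a bound $M$ on the $f_n$. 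This step therefore presupposes the continuity established next.

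The crux, and the step I expect to be the main obstacle, is Lemma \ref{claimvisint}: continuity of $x\mapsto\mbu(\omega,x)$ for $\PW$-a.e.\ $\omega$. For a single fixed $x_0$ this is easy, since monotone convergence forces $\int_A[\mbu(\omega,x_0+)-\mbu(\omega,x_0-)]\dP=0$ for all $A$, so the jump at $x_0$ vanishes a.e.; but the exceptional null set depends on $x_0$, and there are uncountably many candidate jump locations. The genuine difficulty is that jump locations may vary with $\omega$ without concentrating at any fixed point, and monotonicity alone cannot exclude this. The resolution is to invoke (C\ref{item:PC}) along non-constant sequences. Suppose the set of $\omega$ for which $\mbu(\omega,\cdot)$ has a discontinuity in $[-M,M]$ were non-null; then for some $k$ the set $B:=\{\omega : \mbu(\omega,\cdot) \text{ has a jump of size} >1/k \text{ in } [-M,M]\}$ has $\PW(B)>0$, and $B\in\Fcal$ by joint measurability. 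On $B$ one selects measurably the leftmost such jump location $\omega\mapsto x(\omega)\in[-M,M]$ (measurable thanks to completeness of $\PW$ and joint measurability of $\mbu$). Setting $f:=x(\cdot)\ind_B$ and $f_n:=(x(\cdot)-1/n)\ind_B$, we have $f_n\to f$ pointwise and uniformly bounded, so (C\ref{item:PC}) gives $V_B(f_n)\to V_B(f)$. On the other hand $\mbu(\omega,f_n(\omega))\to\mbu(\omega,x(\omega)-)$ while $\mbu(\omega,f(\omega))=\mbu(\omega,x(\omega)+)$, the two differing by more than $1/k$ on $B$, so dominated convergence forces $\lim_n V_B(f_n)=\int_B\mbu(\omega,x(\omega)-)\dP<\int_B\mbu(\omega,x(\omega)+)\dP=V_B(f)$, a contradiction. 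Hence discontinuities occur only on a $\PW$-null set. I expect both the measurable selection of $x(\cdot)$ and the careful choice of the side from which to approach the jump to be the delicate points here.

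Finally, combining a.e.\ continuity with the strict monotonicity on $\mathbb{Q}$ from (C\ref{item:Vmon}) shows that $\mbu(\omega,\cdot)$ is continuous and strictly increasing on all of $\R$ for every $\omega$ outside a $\PW$-null set $N'$. Redefining $\mbu(\omega,\cdot):=\mathrm{id}$ on $N'$ and subtracting $\mbu(\omega,0)$ (which equals $0$ $\PW$-a.e.\ since $V_A(0)=0$) enforces $\mbu(\omega,0)=0$ and $\Fcal$-regularity for \emph{every} $\omega$, without altering any integral $\int_A\mbu(\omega,f(\omega))\dP$ because $N'$ is $\PW$-null. This delivers the representation $V_A(f)=\int_A\mbu(\omega,f(\omega))\dP$ with $\mbu$ an $\Fcal$-regular function, proving Proposition \ref{lem:from_v_to_up}.
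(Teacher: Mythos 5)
Your overall architecture is the same as the paper's (Radon--Nikodym densities at rational points, a common full-measure set carrying strict monotonicity, a right-continuous extension, the integral representation via pointwise continuity, and a contradiction argument at jump locations to force continuity), but there are two genuine gaps. The first is a circularity in the ordering of the steps. You establish $V_A(f)=\int_A\mbu(\omega,f(\omega))\dP$ for general $f$ by dominated convergence and explicitly state that this ``presupposes the continuity established next''; yet your continuity argument applies this very representation to the non-simple, non-constant functions $f=x(\cdot)\ind_B$ and $f_n=(x(\cdot)-1/n)\ind_B$ (you write $\lim_n V_B(f_n)=\int_B\mbu(\omega,x(\omega)-)\dP$ and $V_B(f)=\int_B\mbu(\omega,x(\omega)+)\dP$). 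As written, each of the two steps presupposes the other. The paper avoids this in Lemma \ref{claim:pointwise_limit}: the representation is proved for all bounded $f$ \emph{before} any continuity is known, by squeezing $f$ between rational-valued simple functions $\underline{f}_n\uparrow f$ and $\overline{f}_n\downarrow f$ and using only monotonicity of the RCLL extension $u^+$, its agreement with the densities at rationals, and (C\ref{item:PC}); alternatively, one-sided approximation from above together with right-continuity would do. You must reorder your proof accordingly (representation first, continuity second); with that reordering your contradiction argument becomes legitimate.

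The second gap is the measurable selection of the ``leftmost jump location''. You justify measurability of $B$ and of $\omega\mapsto x(\omega)$ by ``completeness of $\PW$ and joint measurability of $\mbu$'', but $(\Om,\Fcal)$ is a general measurable space with no completeness assumed, and — more importantly — $V$ and (C\ref{item:PC}) are defined only on $\brv=\Lcal^{\infty}(\Om,\Fcal)$: a selection that is merely measurable with respect to the $\PW$-completion (which is what projection/selection theorems deliver) does not even allow you to form $V_B(x(\cdot)\ind_B)$. This is exactly the point where the paper resorts to stochastic-process machinery: in Proposition \ref{propcontofuplus} the map $t\mapsto u^+(\omega,t-M)$ is viewed as an RCLL process adapted to the constant (hence right-continuous) filtration $\Fcal_t\equiv\Fcal$, and the first jump time of size $>1/n$ — precisely your leftmost jump — is a stopping time by the elementary result of \cite{So13}, hence genuinely $\Fcal$-measurable without any completeness hypothesis; the contradiction with continuity from below of $T_u$ is then run against that stopping time. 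Your claim is salvageable without stopping times (for a nondecreasing function, the presence of a jump of size $\geq 1/k$ in $[-M,c]$ can be detected through oscillations along rationals, making the leftmost such location an $\Fcal$-measurable infimum over rational $c$; or, once domination of all $\mu_f$ by $\PW$ is in hand, one may pass to an $\Fcal$-measurable a.e.\ version of the selection), but the appeal to completeness as stated is not available under the hypotheses of the proposition.
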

	
	
	Consider a functional $V: \Fcal \times \Bd \to \R$ satisfying (C\ref{item:signed_measure}), (C\ref{item:ind_A}), (C\ref{item:Vmon}) and (C\ref{item:PC}). For any $A \in \Fcal, f \in \Bd$ we denote
	$$
	\mu_f(A) := V_{\Om \cap A}(f) = V_A(f),
	$$
	and observe that, for all $f \in \brv$, $\mu_f$ is dominated by the probability $\PW$ defined as in (C\ref{item:signed_measure}). 
	
	\medskip
	
	We now construct a special version of the state-dependent utility, which enjoys the desired properties.
	
	\begin{definition}\label{def:radon_nykodim}
		For any $q \in \PWq$, we let $u(\cdot, q)\in  \Lcal^1(\Om, \Fcal, \PW)$ be a version of the Radon-Nikodym derivative $\frac{\mathrm{d}\mu_q}{\mathrm{d}\PW}$. In particular, we set $u(\om, 0) = 0$ for all $\om \in \Om$ by virtue of (C\ref{item:signed_measure}). 
	\end{definition}
	For any $A \in \Fcal$ and $q \in \QW$, Definition \ref{def:radon_nykodim} motivates the following notation
	\begin{equation}\label{derivonrationals}
		V_A(q) = \int_A \mathrm{d}\mu_q(\om) =  \int_\Om u(\omega,q) \ind_A(\omega) \dP = \int_\Om u(\omega,q\ind_A(\omega))  \dP   
	\end{equation}
	
	\begin{lemma}\label{lem:AqBq}
		Denote by 
		$$
		A^\PWq := \bigl\{\omega \in \Om: u(\omega, q_1) < u(\omega, q_2), \; \forall \; q_1 <  q_2 \in \PWq \bigr\}$$ and 
		$$
		B^\PWq := \bigl\{ \omega \in \Om: u(\omega, q) = \inf_{\PWq \ni \tilde{q} > q} u(\omega, \tilde{q}), \; \forall \; q \in \PWq\bigl\}.
		$$ 
		Then, $A^\PWq, B^\PWq \in \Fcal$ and $\PW(A^\PWq) = \PW(B^\PWq) = 1$.
	\end{lemma}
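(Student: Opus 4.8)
The plan is to reduce both full-measure statements to one rational (or one pair of rationals) at a time and then intersect over the countable index set $\PWq$. Measurability is immediate: $A^\PWq$ is the countable intersection over pairs $q_1<q_2$ in $\PWq$ of the sets $\{u(\cdot,q_1)<u(\cdot,q_2)\}$, and $B^\PWq$ is the countable intersection over $q\in\PWq$ of the sets $\{u(\cdot,q)=\inf_{\PWq\ni\tilde q>q}u(\cdot,\tilde q)\}$; each of these is $\Fcal$-measurable because the $u(\cdot,q)$ are $\Fcal$-measurable and a countable infimum of measurable functions is measurable. So the content lies entirely in the two full-probability claims.

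For $A^\PWq$ I would fix a single pair $q_1<q_2$ and show $N:=\{u(\cdot,q_1)\ge u(\cdot,q_2)\}$ is $\PW$-null. Suppose $\PW(N)>0$. Integrating the pointwise inequality $u(\cdot,q_1)\ge u(\cdot,q_2)$ over $N$ and invoking the representation $V_N(q)=\int_N u(\om,q)\,\dP$ from \eqref{derivonrationals} gives $V_N(q_1)\ge V_N(q_2)$. But strict monotonicity (C\ref{item:Vmon}), applied to the set $N$ with $\PW(N)>0$ and the constants $q_1<q_2$, forces $V_N(q_1)<V_N(q_2)$, a contradiction. Hence each such $N$ is null, and since there are only countably many rational pairs, $A^\PWq=\Om\setminus\bigcup_{q_1<q_2}N$ has full measure.

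For $B^\PWq$ I would work on the full-measure set $A^\PWq$, where $u(\om,\cdot)$ is strictly increasing along $\PWq$, and fix a single $q\in\PWq$. Choosing rationals $\tilde q_n\downarrow q$, the monotonicity just established shows that $u(\om,\tilde q_n)$ decreases, for a.e.\ $\om$, to $\psi(\om):=\inf_{\PWq\ni\tilde q>q}u(\om,\tilde q)$, since any rational $\tilde q>q$ eventually dominates some $\tilde q_n$. Pointwise continuity (C\ref{item:PC}), applied to the uniformly bounded sequence of constants $\tilde q_n\to q$, gives $V_A(\tilde q_n)\to V_A(q)$ for every $A\in\Fcal$; meanwhile the sequence $u(\cdot,\tilde q_n)$ is squeezed between the integrable functions $u(\cdot,q)$ and $u(\cdot,\tilde q_1)$ of Definition \ref{def:radon_nykodim}, so dominated convergence yields $V_A(\tilde q_n)=\int_A u(\cdot,\tilde q_n)\,\dP\to\int_A\psi\,\dP$. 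Comparing the two limits gives $\int_A\psi\,\dP=\int_A u(\cdot,q)\,\dP$ for every $A\in\Fcal$; since $\psi\ge u(\cdot,q)$ a.s., testing on $A=\{\psi>u(\cdot,q)\}$ forces $\psi=u(\cdot,q)$ a.s. Intersecting the resulting full-measure sets over the countably many $q\in\PWq$ gives $\PW(B^\PWq)=1$.

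I expect the $B^\PWq$ half to be the delicate step, as it is where the two distinct continuity notions meet: one must correctly identify the a.e.\ pointwise limit of $u(\om,\tilde q_n)$ with the infimum over \emph{all} rationals exceeding $q$ (not merely along the chosen subsequence), and then justify interchanging the limit with the integral, which is exactly where integrability of $u(\cdot,q)$ and $u(\cdot,\tilde q_1)$ is needed. The $A^\PWq$ half, by contrast, is a direct application of strict monotonicity.
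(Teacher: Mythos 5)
Your proof is correct, and the $A^\PWq$ half is exactly the paper's argument: integrate the pointwise inequality over the exceptional set and contradict strict monotonicity (C\ref{item:Vmon}) through the representation \eqref{derivonrationals}. For $B^\PWq$ the paper takes the same overall strategy but closes the reverse inequality by contradiction: it extracts a set $A_n$ on which the gap is at least $\frac{1}{n}$, uses the elementary inequality $\int_{A_n}\inf_{\tilde{q}>q}u(\omega,\tilde{q})\,\dP\le\inf_{\tilde{q}>q}V_{A_n}(\tilde{q})$, and then invokes the identity $\inf_{\tilde{q}>q}V_{A_n}(\tilde{q})=V_{A_n}(q)$, which --- exactly as in your argument --- comes from (C\ref{item:PC}) applied to constant sequences $\tilde{q}_k\downarrow q$ together with monotonicity. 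Your direct route instead identifies the limit on the integrand side by dominated convergence (with the a.s.\ integrable envelope $u(\cdot,q)\le u(\cdot,\tilde{q}_n)\le u(\cdot,\tilde{q}_1)$) and then tests the resulting identity $\int_A\psi\,\dP=\int_A u(\cdot,q)\,\dP$ on $A=\{\psi>u(\cdot,q)\}$; this trades the paper's quantitative $\frac{1}{n}$-gap contradiction for the DCT, avoids splitting the claim into the two separate inclusions \eqref{eq:less_or_eq} and \eqref{eq:greater_or_eq}, and makes explicit where integrability of the Radon--Nikodym versions is needed. Both arguments rest on the same three ingredients --- \eqref{derivonrationals}, (C\ref{item:Vmon}) and (C\ref{item:PC}) --- so the difference is one of technical execution rather than of substance.
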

	\begin{proof}
		We start fixing $q_1 < q_2 \in \PWq$ and denoting by $C=\{\omega\in\Omega\mid u(\omega, q_1) \geq u(\omega, q_2)\}$. By \eqref{derivonrationals} we can write 
		\begin{equation*}
			V_C(q_1) = \int_C u(\om, q_1) \dP \geq \int_C u(\om, q_2) \dP = V_C(q_2),
		\end{equation*}
		which contradicts (C\ref{item:Vmon}) unless $\PW(C)= 0$. We conclude that $\PW(\{\omega \in \Om: u(\omega, q_1) < u(\omega, q_2)\}) = 1$. 
		By (countable) intersection over the rational numbers we obtain
		\begin{equation*}
			A^\PWq = \bigcap_{\substack{q_1, q_2 \in \PWq,\\ q_1 < q_2}} \bigl\{\omega \in \Om: u(\omega, q_1) < u(\omega, q_2)\bigr\}
		\end{equation*}
		and clearly $A^\PWq \in \Fcal$, with $\PW(A^\PWq)=1$.
		\\ We now turn our attention to $B^\PWq$. We show that 
		\begin{equation}\label{eq:less_or_eq}
			\PW(\{\omega \in \Omega: u(\omega, q) \leq \inf_{\tilde{q} > q} u(\omega, \tilde{q}), \; \forall \; q \in \PWq\}) = 1
		\end{equation} and that 
		\begin{equation}\label{eq:greater_or_eq}
			\PW(\{\omega \in \Omega: u(\omega, q) \geq \inf_{\tilde{q} > q} u(\omega, \tilde{q}), \; \forall \; q \in \PWq\}) = 1,
		\end{equation} so that \eqref{eq:less_or_eq} and \eqref{eq:greater_or_eq} together imply our claim. 
		First of all we show that \eqref{eq:less_or_eq} holds. Fix $q, \tilde{q}\in \PWq$ such that $\tilde{q} > q$, the previous argument shows that the event $\{\omega \in \Omega: u(\omega, q) \leq u(\omega, \tilde{q})\}$ has probability $1$. Notice that for $q \in \PWq$ fixed, we have 
		$$
		\bigcap_{\tilde{q} > q} \{\omega \in \Omega: u(\omega, q) \leq u(\omega, \tilde{q})\} = \{\om \in \Om: u(\omega, q) \leq \inf_{\tilde{q} > q} u(\omega, \tilde{q})\}.
		$$ We denote by $E_q$ the event $\bigcap_{\tilde{q} > q} \{\omega \in \Omega: u(\omega, q) \leq u(\omega, \tilde{q})\}$. Naturally, the countable intersection is still measurable and in particular $\PW(E_q)=1$. Finally, since 
		$$
		\bigcap_{q \in \PWq}E_q = \{\omega \in \Omega: u(\omega, q) \leq \inf_{\tilde{q} > q} u(\omega, \tilde{q}), \; \forall \; q \in \PWq\},
		$$
		we find $\PW\left(\cap_{q \in \PWq}E_q\right) = 1$ and \eqref{eq:less_or_eq} holds.
		\\We proceed proving \eqref{eq:greater_or_eq} by contradiction and set 
		$$A= \{\omega \in \Omega: u(\omega, q) < \inf_{\tilde{q} > q} u(\omega, \tilde{q})\}.$$ 
		Suppose that $\PW\left(A\right) > 0$, then, for some $n \in \mathbb{N}$, $A_n := \{\omega \in \Omega: u(\omega, q) + \frac{1}{n} < \inf_{\tilde{q} > q} u(\omega, \tilde{q})\}$ is such that $\PW(A_n) >0$. Then it holds that 
		\begin{equation}\label{eq:int_onBn}
			\int_{A_n}\left(u(\omega, q) + \frac{1}{n}\right)\dP \leq \int_{A_n} \inf_{\tilde{q} > q} u(\omega, \tilde{q})\dP \leq \inf_{\tilde{q} > q} \int_{A_n} u(\omega, \tilde{q})\dP.
		\end{equation}
		It follows from \eqref{eq:int_onBn} that 
		\begin{equation}
			V_{A_n}(q) + \frac{1}{n}\PW(A_n) \leq \inf_{\tilde{q} > q}V_{A_n}(\tilde{q}).
		\end{equation}
		Since by (C\ref{item:Vmon}) and (C\ref{item:PC}) $V$ is strictly monotone and pointwise continuous we have that $\inf_{\tilde{q} > q}V_{A_n}(\tilde{q}) = V_{A_n}(q)$, which in turn yields the desired contradiction. To conclude, since both \eqref{eq:less_or_eq} and \eqref{eq:greater_or_eq} hold with probability $1$, we have 
		$$\PW(B^\PWq) = \PW(\{\omega \in \Omega: u(\omega, q) = \inf_{\tilde{q} > q} u(\omega, \tilde{q}), \; \forall \; q \in \PWq\}) = 1.$$ 
	\end{proof}
	
	We define the set 
	\begin{align}\label{eq:theta_set}
		\Theta = A^\PWq \cap B^\PWq\in \Fcal \text{ with } \PW(\Theta) = 1,
	\end{align}
	and recall that for any $\omega\in\Theta$ we have $u(\omega, q_1) < u(\omega, q_2)$ for all $q_1 < q_2$, $q_1,q_2\in \PWq$ and  $u(\omega, q) = \inf_{\tilde{q} > q} u(\omega, \tilde{q})$ for every $q\in \PWq$. 
	
	\begin{definition}
		\label{defu+}
		For any $x \in \R$ and $\omega \in \Omega$ we define 
		\begin{equation*}
			u^+(\omega, x) = \inf_{\substack{q \in\PWq,\\ q\geq x}} \{u(\omega, q) \ind_\Theta(\omega) + x \ind_{\Om \setminus \Theta}(\omega)\}.
		\end{equation*} 
	\end{definition}
	We observe that $u^+(\omega, x)$ is $\Fcal$-measurable for any $x \in \R$, as it is the countable pointwise infimum of measurable functions. Notice that $u^+(\om, 0) = 0$ for all $\om \in \Om$.
	
	\begin{lemma}\label{claimpropertiesuplus} Let $u^+:\Omega \times \R \to \R$ be as in Definition \ref{defu+}. Then the following hold:
		\begin{enumerate}[(i)]
			\item \label{item:rationals} $u^+(\omega, q) = u(\omega, q)$ for any $\omega \in \Theta$ (as defined in \eqref{eq:theta_set}) and $q \in \PWq$;
			\item \label{item:mon} if $x < y$, $u^+(\omega, x) < u^+(\omega, y)$ for every $\omega \in \Om$;
			\item \label{item:integrability} $u^+(\cdot, x) \in \Lcal^1(\Om, \Fcal, \PW)$ for every $x \in \R$;
			\item \label{item:RCLL} for every $\omega \in \Om$ the function $u^+(\omega, \cdot)$ is RCLL\footnote{namely right continuous with left limits} on $\R$;
		\end{enumerate}
	\end{lemma}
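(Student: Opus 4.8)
The plan is to treat the two regimes $\omega \in \Theta$ and $\omega \in \Om \setminus \Theta$ separately, since the bracket in Definition \ref{defu+} is constant in $q$ off $\Theta$, so that $u^+(\omega, x) = x$ there, while on $\Theta$ the definition reads $u^+(\omega, x) = \inf_{q \in \PWq,\, q \geq x} u(\omega, q)$. On $\Om \setminus \Theta$ every claim is immediate for the identity map $x \mapsto x$: it is strictly increasing and continuous, and since $\PW(\Om \setminus \Theta) = 0$ it poses no obstruction to the $\Lcal^1$-statement. Thus I would fix $\omega \in \Theta$ throughout and exploit the two features recorded just after \eqref{eq:theta_set}: strict monotonicity of $q \mapsto u(\omega, q)$ on $\PWq$ (call it property (A)) and right-continuity at rationals $u(\omega, q) = \inf_{\tilde q > q} u(\omega, \tilde q)$ (property (B)).

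For (i), fix $q \in \PWq$ and $\omega \in \Theta$: since $q$ is itself admissible in the infimum, $u^+(\omega, q) \leq u(\omega, q)$, and conversely (A) gives $u(\omega, \tilde q) \geq u(\omega, q)$ for every rational $\tilde q \geq q$, so the infimum is attained at $\tilde q = q$. For (ii), given $x < y$ I would insert rationals $x < r_1 < r_2 < y$; then $u^+(\omega, x) \leq u(\omega, r_1)$ because $r_1 \geq x$, while every rational $q \geq y$ satisfies $q > r_2$, so by (A) $u(\omega, q) > u(\omega, r_2)$ and hence $u^+(\omega, y) \geq u(\omega, r_2)$. Strict monotonicity on $r_1 < r_2$ then yields $u^+(\omega, x) \leq u(\omega, r_1) < u(\omega, r_2) \leq u^+(\omega, y)$.

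Property (iii) I would handle by sandwiching. Fixing $x$ and rationals $q_1 < x \leq q_2$, the same monotonicity argument on $\Theta$ gives $u(\omega, q_1) \leq u^+(\omega, x) \leq u(\omega, q_2)$, so $|u^+(\cdot, x)|$ is dominated $\PW$-a.s. by $|u(\cdot, q_1)| + |u(\cdot, q_2)|$, which lies in $\Lcal^1(\Om, \Fcal, \PW)$ by Definition \ref{def:radon_nykodim}; the null set $\Om \setminus \Theta$, where $u^+(\omega, x) = x$, contributes nothing to the integral. Measurability was already observed after Definition \ref{defu+}.

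The RCLL statement (iv) is where the real work lies and is the main obstacle. Monotonicity from (ii) guarantees that one-sided limits of $x \mapsto u^+(\omega, x)$ exist everywhere, so left limits come for free and only right-continuity needs proof. For fixed $\omega \in \Theta$ and $\eps > 0$ I would produce a rational $q_\eps > x$ with $u(\omega, q_\eps) < u^+(\omega, x) + \eps$: when $x$ is irrational every admissible rational already exceeds $x$ and this is merely the definition of the infimum, whereas when $x$ is rational it is precisely property (B) that furnishes such a $q_\eps$ strictly to the right of $x$. Then for every $x'$ with $x < x' < q_\eps$ the point $q_\eps$ remains admissible in the infimum defining $u^+(\omega, x')$, whence $u^+(\omega, x') \leq u(\omega, q_\eps) < u^+(\omega, x) + \eps$; letting $x' \downarrow x$ and then $\eps \downarrow 0$ gives right-continuity. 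The crucial subtlety to keep in mind is that property (B) is exactly the ingredient ruling out an upward jump at the rationals, which is why $\Theta$ was intersected with $B^\PWq$ in \eqref{eq:theta_set}.
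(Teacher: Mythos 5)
Your proposal is correct and follows essentially the same route as the paper: the infimum attained at $q$ for (i), interpolating two rationals between $x$ and $y$ for (ii), the rational sandwich plus $\PW(\Theta)=1$ for (iii), and right-continuity via rationals approaching from the right (with left limits free from monotonicity) for (iv). Your explicit split of the rational versus irrational case in (iv), isolating where the $B^\PWq$ property is genuinely needed, is a slightly more careful write-up of the same argument the paper gives via a sequence $q_n \downarrow x$.
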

	
	\begin{proof}
		Item (\ref{item:rationals}) follows from the definition of $\Theta$. 
		Item (\ref{item:mon}) holds trivially on $\Om \setminus \Theta$. Take now any $\omega \in \Theta$: provided that $x<y$, there exist $q_1, q_2 \in \PWq$ such that $x \leq q_1 < q_2 \leq y$. By definition of $\Theta$ we have $u(\omega, q_1) < u(\omega, q_2)$ and moreover by monotonicity of $u$ and definition of $u^+$
		\[u(\omega,q)\leq u^+(\omega,y),\quad \text{for every }q\in\PWq,y\in\R\text{ s.t. }q\leq y.\] Then, we see
		\begin{equation*}
			u^+(\omega, x) = \inf_{\substack{q \in \PWq, \\ q \geq x}} u(\omega, q) \leq u(\omega, q_1) < u(\omega, q_2) \leq \inf_{\substack{q \in \PWq, \\ q \geq y}} u(\omega, q) = u^+(\omega, y).
		\end{equation*}
		As to item (\ref{item:integrability}), fix any $x \in \R$ and consider $q_1, q_2 \in \PWq$ such that $q_1 \leq x \leq q_2$. Then by definition and item (\ref{item:rationals}) $u(\omega, q_1) \leq u^+(\omega, x) \leq u(\omega, q_2)$ for any $\omega \in \Theta$. Since $\PW(\Theta) = 1$ and $u(\cdot, q_1), u(\cdot, q_2)$ are versions of the Radon-Nikodym derivatives $\frac{d\mu_{q_1}}{\mathrm{d}\PW}, \frac{d\mu_{q_2}}{\mathrm{d}\PW} \in \Lcal^1(\Om, \Fcal, \PW)$, integrability follows. 
		
		Finally, observe that item (\ref{item:RCLL}) holds trivially for $\omega \in \Omega \backslash \Theta$. Fix now $\omega \in \Theta$, we wish to show that for any $\varepsilon >0$ there exists $\delta$ such that $|u^+(\omega,x) - u^+(\omega,y)| < \varepsilon$ for any $y \in (x, x+\delta)$. Fix $\varepsilon > 0 $ and let $(q_n)_{n} \subset \PWq$ be such that $q_n \downarrow_n x$. Then there exists $\bar{n} \in \N$ such that, for all $ n > \bar{n}$ we have $|u(\omega, q_n) - \inf_{q > x} u(\omega, q)|  = u(\omega, q_n) - \inf_{q > x} u(\omega, q) < \varepsilon $. Recalling that, from item (\ref{item:rationals}), $u(\omega, q_n) =  u^+(\omega, q_n)$ and $\inf_{q > x} u(\omega, q) = u^+(\omega, x)$ by definition, monotonicity of $u^+$ implies that any $y \in (x, q_{\bar{n}}) \subset \R$ satisfies $|u^+(\omega, x) - u^+(\omega, y)| < \varepsilon$. Existence of the left limit follows again from monotonicity proved in item (\ref{item:mon}).
	\end{proof}
	
	\begin{lemma}\label{claim:pointwise_limit}
		Let $u^+:\Omega \times \R \to \R$ be as in Definition \ref{defu+}. Then the map 
		\begin{equation}
			f \mapsto T_{u^+}(f) := \int_{\Om} u^+(\omega, f(\omega)) \dP
		\end{equation}
		is well defined, monotone with respect to the pointwise order in $\brv$, pointwise continuous\footnote{As previously mentioned: for any uniformly bounded sequence $(f_n)_n\subset \brv$, $f_n(\omega)\to f(\omega)$ for any $\omega\in\Omega$, then $T_{u^+}(f_n)\to T_{u^+}(f)$} and it satisfies $V_{A}(f) = T_{u^+}(f\ind_A)$ for every $f \in \Bd$ and $A\in \Fcal$.
	\end{lemma}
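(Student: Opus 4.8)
The plan is to prove the four assertions in an order that postpones pointwise continuity to the very end, where it will be \emph{inherited} from $V$ rather than established by hand. First I would settle well-definedness. For fixed $x$ the map $\omega \mapsto u^+(\omega, x)$ is $\Fcal$-measurable, while for fixed $\omega$ the map $x \mapsto u^+(\omega, x)$ is RCLL (Lemma \ref{claimpropertiesuplus}, item (\ref{item:RCLL})) and hence Borel; thus $u^+$ is a Carath\'eodory function and $\omega \mapsto u^+(\omega, f(\omega))$ is $\Fcal$-measurable for any $f \in \brv$. Concretely, this measurability can be read off from the very approximation used below, since $u^+(\omega, f_n(\omega))$ is a finite sum of measurable functions converging pointwise to $u^+(\omega, f(\omega))$. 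Integrability then follows by sandwiching: if $|f| \leq k$, monotonicity in the second variable (Lemma \ref{claimpropertiesuplus}, item (\ref{item:mon})) gives $u^+(\cdot, -k) \leq u^+(\cdot, f) \leq u^+(\cdot, k)$, and both bounds lie in $\Lcal^1(\Om, \Fcal, \PW)$ by item (\ref{item:integrability}). Monotonicity of $T_{u^+}$ is then immediate from the monotonicity of $u^+(\omega, \cdot)$ and of the integral.

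The core of the argument is the identity $V_A(f) = T_{u^+}(f \ind_A)$, which by (C\ref{item:ind_A}) reduces to $V_\Omega(f) = T_{u^+}(f)$. I would first verify this for simple functions $s = \sum_i q_i \ind_{B_i}$ with $q_i \in \QW$ and $\{B_i\}$ a finite measurable partition of $\Omega$. On such $s$, using $u^+ = u$ on $\Theta$ (Lemma \ref{claimpropertiesuplus}, item (\ref{item:rationals}), recalling $\PW(\Theta) = 1$) together with the formula \eqref{derivonrationals}, one gets $T_{u^+}(s) = \sum_i \int_{B_i} u(\omega, q_i)\dP = \sum_i V_{B_i}(q_i)$; on the other hand, the additivity of $A \mapsto V_A(\cdot)$ from (C\ref{item:signed_measure}) together with (C\ref{item:ind_A}) yields $V_\Omega(s) = \sum_i V_{B_i}(s) = \sum_i V_{B_i}(q_i)$, so the two sides agree.

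To pass from rational simple functions to a general $f \in \brv$ I would approximate \emph{from the right}. Setting $f_n := \lceil 2^n f\rceil / 2^n$ produces dyadic-rational simple functions with $f_n \geq f$, $f_n(\omega) \downarrow f(\omega)$ for every $\omega$, and $\sup_n \norm{f_n}_\infty < \infty$. On the $V$ side, pointwise continuity (C\ref{item:PC}) gives $V_\Omega(f_n) \to V_\Omega(f)$. On the $T_{u^+}$ side, right-continuity of $u^+(\omega, \cdot)$ (Lemma \ref{claimpropertiesuplus}, item (\ref{item:RCLL})) gives $u^+(\omega, f_n(\omega)) \to u^+(\omega, f(\omega))$ for every $\omega$, and since the $f_n$ are uniformly bounded the integrands are dominated by an $\Lcal^1$ function exactly as in the sandwich above, so dominated convergence yields $T_{u^+}(f_n) \to T_{u^+}(f)$. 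Combining with the simple-function identity gives $V_\Omega(f) = T_{u^+}(f)$, hence $V_A(f) = T_{u^+}(f\ind_A)$ in general. Finally, pointwise continuity of $T_{u^+}$ is not proved directly but read off from this identity: $T_{u^+}(f) = V_\Omega(f)$ and $f \mapsto V_\Omega(f)$ is pointwise continuous by (C\ref{item:PC}).

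The main obstacle is precisely this last point. Because $u^+(\omega, \cdot)$ is only RCLL and not continuous, a naive dominated-convergence argument for pointwise continuity of $T_{u^+}$ fails: if $f_n(\omega) \to f(\omega)$ from the left at a jump of $u^+(\omega, \cdot)$, the integrands need not converge to $u^+(\omega, f(\omega))$. The decisive choice is therefore to approximate monotonically from the right, so that right-continuity \emph{alone} suffices to identify $T_{u^+}$ with $V_\Omega$ on all of $\brv$; pointwise continuity of $T_{u^+}$ is then a free consequence of the corresponding property of $V$, circumventing the jumps entirely.
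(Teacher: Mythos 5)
Your proof is correct, and its overall skeleton matches the paper's: reduce to rational-valued simple functions, where $u^+=u$ a.s.\ makes the identity $T_{u^+}=V_\Omega$ elementary via \eqref{derivonrationals} and (C\ref{item:signed_measure})--(C\ref{item:ind_A}), then transfer to general $f$ using (C\ref{item:PC}), and finally read off pointwise continuity of $T_{u^+}$ from the identity rather than proving it directly. The limiting mechanism, however, is genuinely different. The paper approximates \emph{two-sidedly}, taking rational simple $\underline{f}_n \uparrow f$ and $\overline{f}_n \downarrow f$, and never passes to the limit inside the $T_{u^+}$ integral at all: monotonicity of $u^+(\omega,\cdot)$ alone gives the sandwich $V_\Omega(\underline{f}_n) \le T_{u^+}(f) \le V_\Omega(\overline{f}_n)$, and (C\ref{item:PC}) applied to both outer sequences squeezes out the equality. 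Your argument is one-sided (right approximation $f_n = \lceil 2^n f\rceil/2^n \downarrow f$) but pays for it by invoking item (\ref{item:RCLL}) of Lemma \ref{claimpropertiesuplus} plus dominated convergence on the $T_{u^+}$ side. What each buys: the paper's squeeze makes the identity independent of any one-sided continuity of $u^+$ (RCLL enters its proof only through Lemma \ref{lemma:wellposed}, for well-posedness and measurability), which is structurally cleaner; your version showcases that right-continuity and right-approximation fit together, at the cost of one extra convergence theorem. One small caution: your passing remark that $u^+$ is ``a Carath\'eodory function'' because it is RCLL in $x$ is imprecise --- Carath\'eodory requires continuity in the second variable, and measurability in $\omega$ together with mere Borel measurability in $x$ does \emph{not} in general yield measurability of $\omega \mapsto u^+(\omega,f(\omega))$. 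Your fallback justification (the composition is a pointwise limit of the measurable functions $u^+(\cdot,f_n(\cdot))$, each a finite sum over the simple partition, with convergence by right-continuity) is the correct argument and is exactly what the paper's Lemma \ref{lemma:wellposed} formalizes, so this is a slip of terminology rather than a gap.
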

	\begin{proof}
		Observe that, since $u^+(\omega, \cdot)$ is RCLL for every $\omega \in \Om$, by Lemma \ref{lemma:wellposed} we have that $T_{u^+}$ is well defined on $\Bd$ and monotone as $f(\omega)\leq g(\omega)$ implies $u^+(\omega, f(\omega))\leq u^+(\omega, g(\omega))$. We now show that $V_\Om(f) = T_{u^+}(f)$ for every $f \in \Bd$, then pointwise continuity of $T_{u^+}$ will follow from  (C\ref{item:PC}). Fix any $f \in \Bd$, we can find two sequences of simple functions $\underline{f_n}=\sum_{A\in\underline{\pi}_n}\underline{x}^n_A\ind_A$, $\overline{f_n}=\sum_{B\in\overline{\pi}_n}\overline{x}^n_B\ind_B$, with $\overline{\pi}_n,\underline{\pi}_n\subseteq \Fcal$ finite partitions of $\Omega$,  such that $\underline{f_n} \uparrow_n f, \overline{f_n} \downarrow_n f$. Moreover by standard arguments we can assume for any $n$ that $\underline{x}^n_A,\overline{x}^n_B\in \mathbb{Q}$ for any $A\in\underline{\pi}_n$ and $B\in\overline{\pi}_n$. We have 
		as a consequence of \eqref{derivonrationals} and Lemma \ref{claimpropertiesuplus}
		\begin{align*}
			V_{\Omega}(\underline{f}_n)&= \sum_{A\in \underline{\pi}_n}\int_\Omega u(\omega,\underline{x}_n^A)\ind_A\dP = \sum_{A\in \underline{\pi}_n}\int_\Omega u^+(\omega,\underline{x}_n^A)\ind_A\dP
			\\&= \int_\Omega u^+(\omega,\underline{f}_n(\omega))\dP
			\leq\int_\Omega u^+(\omega,f(\omega))\dP
		\end{align*}
		and by a similar argument we can get
		\[\int_\Omega u^+(\omega,f(\omega))\dP\leq \int_\Omega u(\omega,\overline{f}_n(\omega))\dP = V_{\Omega}(\overline{f}_n)\]
		
		By (C\ref{item:PC}) we get the desired equality $V_\Omega(f)=T_{u^+}(f)$ for every $f\in\brv$. Finally, we see that (C\ref{item:ind_A}) yields  $V_A(f)=V_\Omega(f\ind_A)=T_{u^+}(f\ind_A)$, which concludes the proof. 
	\end{proof}
	\begin{lemma}
		\label{claimvisint}
		There exists a function $\mbu: \Om \times \R \to \R$ such that $\mbu(\omega, \cdot)$ is continuous and strictly increasing on $\R$ for every $\omega \in \Om$, $\mbu(\cdot, x) \in \Lcal^1(\Om, \Fcal, \PW)$ for every $x \in \R$ and
		\begin{equation}
			V_A(f) = \int_{A} \mbu(\omega, f(\omega)) \dP \quad \forall \; f \in \Bd \text{ and } A\in\Fcal.
		\end{equation}
	\end{lemma}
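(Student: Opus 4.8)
The plan is to upgrade the integrand $u^+$ constructed so far, which already satisfies every requirement of the statement \emph{except} continuity in its second argument (it is only RCLL, by Lemma \ref{claimpropertiesuplus}), to a genuinely continuous integrand $\mbu$ without altering the integral functional. The guiding observation is that continuity can fail only at jumps of $x \mapsto u^+(\om, x)$, and I claim that for $\PW$-almost every $\om$ this map has no jumps at all. Granting this, I would fix a full-measure set $\Om^\ast \subseteq \Theta$ on which $u^+(\om, \cdot)$ is continuous and strictly increasing, set $\mbu(\om, \cdot) := u^+(\om, \cdot)$ there, and put $\mbu(\om, x) := x$ for $\om \in \Om \setminus \Om^\ast$; the latter is a cosmetic choice on a null set ensuring that $\mbu(\om,\cdot)$ is continuous and strictly increasing for \emph{every} $\om$, with $\mbu(\om,0)=0$.

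The heart of the matter is the a.e.\ absence of jumps. Introduce the left-limit integrand
\[ u^-(\om, x) := \sup_{\PWq \ni q < x} u^+(\om, q), \]
which for $\om \in \Theta$ equals $\lim_{q \uparrow x} u^+(\om, q)$, together with the jump function $j(\om, x) := u^+(\om, x) - u^-(\om, x) \geq 0$ (note $j \equiv 0$ off $\Theta$). Since $u^+(\om, \cdot)$ is right-continuous and $u^-(\om, \cdot)$ is left-continuous while both are measurable in $\om$ for each fixed $x$, each is jointly $\Fcal \otimes \Borel_\R$-measurable (being an $x$-monotone pointwise limit of functions that are $\om$-measurable and piecewise constant in $x$ along dyadics), hence so is $j$. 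Fix $M \in \N$ and $\eps > 0$ and suppose, for contradiction, that $B := \{\om : \exists\, x \in [-M,M],\ j(\om, x) \geq \eps\}$ has $\PW(B) > 0$. The set $B$ is the projection onto $\Om$ of $\{(\om, x) : x \in [-M,M],\ j(\om, x) \geq \eps\} \in \Fcal \otimes \Borel_\R$, so, working with the $\PW$-completion, the measurable projection theorem makes $B$ measurable and a measurable selection theorem provides a measurable $x^\ast : B \to [-M,M]$ with $j(\om, x^\ast(\om)) \geq \eps$ on $B$.

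I would then extract the contradiction from pointwise continuity. Put $g := x^\ast \ind_B$ and $g_n := g - \tfrac{1}{n}\ind_B$; these are uniformly bounded and $g_n \to g$ pointwise. On one hand, Lemma \ref{claim:pointwise_limit} gives $T_{u^+}(g_n) \to T_{u^+}(g)$. On the other hand, for $\om \in B$ we have $u^+(\om, g_n(\om)) = u^+(\om, x^\ast(\om) - \tfrac1n) \to u^-(\om, x^\ast(\om)) = u^+(\om, x^\ast(\om)) - j(\om, x^\ast(\om)) \leq u^+(\om, g(\om)) - \eps$, while $u^+(\om, g_n(\om)) = u^+(\om, 0) = u^+(\om, g(\om))$ off $B$; dominated convergence (the integrands are squeezed between $u^+(\cdot, -M-1)$ and $u^+(\cdot, M)$, both in $\Lcal^1(\Om,\Fcal,\PW)$) then yields $\lim_n T_{u^+}(g_n) \leq T_{u^+}(g) - \eps\,\PW(B) < T_{u^+}(g)$, a contradiction. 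Hence $\PW(B) = 0$ for every $M, \eps$, and taking the union over $\eps = 1/k$ and $M \in \N$ shows that $x \mapsto u^+(\om, x)$ is continuous for $\PW$-a.e.\ $\om$.

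Finally, with $\mbu$ defined as above I would verify the three conclusions: $\mbu(\om, \cdot)$ is continuous and strictly increasing for every $\om$ by construction (using Lemma \ref{claimpropertiesuplus}(\ref{item:mon}) on $\Om^\ast$); $\mbu(\cdot, x) = u^+(\cdot, x)\ind_{\Om^\ast} + x\,\ind_{\Om \setminus \Om^\ast}$ is $\Fcal$-measurable and integrable by Lemma \ref{claimpropertiesuplus}(\ref{item:integrability}); and, being a Carath\'eodory function, $\mbu$ is jointly measurable, so $\om \mapsto \mbu(\om, f(\om))$ is measurable and, coinciding with $u^+(\om, f(\om))$ on the full-measure set $\Om^\ast$, satisfies $\int_A \mbu(\om, f(\om))\dP = \int_A u^+(\om, f(\om))\dP = V_A(f)$ by Lemma \ref{claim:pointwise_limit}. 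The main obstacle is precisely the jump-selection step: jumps may sit at $\om$-dependent (``moving'') locations, so discarding a single null set per rational level is not enough, and one must instead select a jump location \emph{measurably} and then exploit pointwise continuity along an $\om$-dependent approximating sequence.
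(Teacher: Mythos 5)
Your proposal is correct in substance and, at the structural level, mirrors the paper's own argument, which is packaged as Proposition \ref{propcontofuplus} plus Corollary \ref{corexistscontversion}: one shows that the RCLL function $u^+$ has $\PW$-a.e.\ no jumps by selecting, measurably in $\om$, a jump location of size at least $\eps$ on a set $B$ of positive measure, approaching that location from the left along a uniformly bounded pointwise convergent sequence, and contradicting pointwise continuity of $T_{u^+}$ (Lemma \ref{claim:pointwise_limit}) via the gap $\eps\,\PW(B)$; your final redefinition $\mbu := u^+\ind_{\Om^\ast} + x\,\ind_{\Om\setminus\Om^\ast}$ is exactly the paper's Corollary \ref{corexistscontversion}. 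Where you genuinely diverge is the step you yourself single out as the main obstacle: making the (state-dependent, ``moving'') jump location measurable. The paper reads $X_t(\om) := u^+(\om, t-M)$ as an RCLL process adapted to the constant filtration $\Fcal_t \equiv \Fcal$ and uses Sokol's elementary result \cite{So13} that the first jump time $\tau^{\eps}$ of size exceeding $\eps$ is a stopping time, hence $\Fcal$-measurable: in other words, it selects the \emph{first} jump rather than an arbitrary one, and this canonical choice is what keeps the argument elementary and entirely inside $\Fcal$. You instead call on the measurable projection theorem and a measurable selection theorem over the $\PW$-completion; this avoids any stochastic-process dressing but imports heavier descriptive-set-theoretic machinery.

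The one point you should patch is a consequence of that choice: projection and selection only give $B$ and $x^\ast$ measurable with respect to the completion $\Fcal^{\PW}$, whereas $T_{u^+}$, property (C\ref{item:PC}) and Lemma \ref{claim:pointwise_limit} are formulated on $\brv$, i.e.\ for $\Fcal$-measurable functions, so applying Lemma \ref{claim:pointwise_limit} to your $g_n, g$ is not literally licensed. The fix is routine: replace $B$ by some $\tilde B \in \Fcal$ with $\PW(B \,\triangle\, \tilde B) = 0$ and $x^\ast\ind_B$ by an $\Fcal$-measurable function agreeing with it $\PW$-a.e.\ (truncated to $[-M,M]$); then $j(\om, x^\ast(\om)) \geq \eps$ holds only for $\PW$-a.e.\ $\om \in \tilde B$, which suffices because the contradiction is an integral estimate, and likewise $\Om^\ast$ can be taken in $\Fcal$. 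With that modification your argument is complete and delivers the same conclusion as the paper's.
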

	\begin{proof}
		We observe that, by Lemma \ref{claimpropertiesuplus}, $u^+(\omega, \cdot)$ is strictly increasing, RCLL on $\R$ for all $\om \in \Om$ and $u^+(\cdot, x) \in \Lcal^1(\Om, \Fcal, \PW)$ for every $x \in \R$.  Moreover, as argued in Lemma \ref{claim:pointwise_limit}, $T_{u^+}$ is pointwise continuous and in particular continuous from below. Consequently the assumptions of Corollary \ref{corexistscontversion} are verified and its application yields the function $\widehat{u^+}$ which we rename $\mbu := \widehat{u^+}$. Finally, $\mbu$ satisfies all the requirements since
		\begin{equation*}
			T_{\mbu}\stackrel{\text{Def.}}{=}T_{\widehat{u^+}}\stackrel{\text{Cor.}\ref{corexistscontversion}}{=}T_{u^+}\stackrel{\text{Lem.} \ref{claim:pointwise_limit}}{=}V.
		\end{equation*}
	\end{proof}
	
	\noindent\textbf{Conclusions.} We conclude the section by proving Proposition \ref{lem:from_v_to_up} and we show that Theorem \ref{thm:general_thm} follows. 
	
	Let $V:\Fcal \times \brv \to \R$ be a functional satisfying (C\ref{item:signed_measure}), (C\ref{item:ind_A}), (C\ref{item:Vmon}) and (C\ref{item:PC}). Lemma \ref{lem:AqBq} and \ref{claimpropertiesuplus} provide the construction of a suitable integrand function $u^+$ and Lemma \ref{claim:pointwise_limit} ensures that the map $f \mapsto \int_{\Om} u^+(\omega, f(\omega)) \dP$ is well defined. Lemma \ref{claimvisint} concludes the proof by showing the existence of the desired version $\mbu$ with the required properties.
	
	To prove Theorem \ref{thm:general_thm}, we exploit Proposition \ref{lem:from_v_to_up}. Consider a preference $\succeq$ that is (SM), (PC) and (ST) and assume that $\Fcal$ contains at least three disjoint events $A_1, A_2, A_3$ such that $A_i \notin \Ncal_{\succeq}$ for $ i=1,2,3$.  As argued in the beginning of Section \ref{sec:proofthm1}, the preference relation $\succeq$ induces a functional  
	$V:\Fcal \times \brv \to \R$ that satisfies (C\ref{item:signed_measure}), (C\ref{item:ind_A}), (C\ref{item:Vmon}) and (C\ref{item:PC})\, with $V_\Omega(\cdot)$ acting as numerical representation of the preference $\succeq$. Application of Proposition \ref{lem:from_v_to_up} provides a pair $(\mbu, \PW)$ such that $V_{\Omega}(f) = \int_\Om \mbu(\omega, f(\omega)) \dP$ for any $f \in \brv$. 
	
	
	\subsection{Proof of Theorem \ref{CCE}}\label{proof:CCE}
	We begin by recalling that we work under the hypotheses of Theorem \ref{thm:general_thm}. In analogy to the previous section, it is possible to construct a functional $V:\Fcal \times \brv \to \R$ that satisfies (C\ref{item:signed_measure}), (C\ref{item:ind_A}), (C\ref{item:Vmon}) and (C\ref{item:PC}). We also point out that properties (C1-4) are in fact stronger than those required in \cite{DM23}, Theorem 3.2\footnote{Indeed, using now the properties defined in \cite{DM23}, the fact that $V_{\Omega}:\brv\to \R$ is ($\Gcal$-Mo), ($\Gcal$-PC) and ($\Gcal$-NB) follows trivially from (C\ref{item:Vmon}) and (C\ref{item:PC}), for any sub-$\sigma$-algebra $\Gcal$ of $\Fcal$. Furthermore, recalling that, by (C\ref{item:signed_measure}), $V_A(0) = 0$ we are able to recover also ($\Gcal$-QL) and ($\Gcal$-PS). Notice that, for $g_1, g_2 \in \Lcal^\infty(\Om, \Gcal), A \in \Gcal$, it is possible to write $V_{\Omega}(g_1\ind_{A} + g_2 \ind_{A^c}) = V_A(g_1) + V_{A^c}(g_2)$.}. Therefore, $\cm{f \st \Gcal}\neq \emptyset$ and for any $g_1,g_2\in \cm{f\st\Gcal}$ we have $\{g_1\neq g_2\}\in\Ncal_\succeq$.
	
	\medskip
	
	\noindent \textbf{Step 1:} We first show the existence of a state-dependent utility on $\brvG$ for any $\sigma$-algebra $\Gcal\subseteq \Fcal$. 
	Consider the functional $V:\Fcal \times \brv \to \R$ defined above. By Proposition \ref{lem:from_v_to_up} it follows that $V$ admits a representation of the kind $V_A(f) = \int_A \mbu(\om, f(\om))\dP$. Let $\Gcal \subseteq \Fcal$ be an arbitrary sub-\sigmalg on $\Om$ and denote by $V^\Gcal: \Gcal \times \brvG \to \R$ the restriction of the functional $V$ to $\Gcal \subseteq \Fcal$ and $\brvG$. We stress that, differently from $\Fcal$, the \sigmalg $\Gcal$ need not contain at least three disjoint events of positive probability.  Analogously to $V$, $V^\Gcal$ defines a probability measure  $\PW_\Gcal$ on $\Gcal$, which is the restriction of the probability measure $\PW$ induced by $V$ to the sub-\sigmalg $\Gcal$. We will hence use only $\PW$ in the following, to avoid impractical notations. 
	\\ Applying Proposition \ref{lem:from_v_to_up} to $V^\Gcal$ it is possible to construct a $\Gcal \otimes \Borel_\R$-measurable function $\ug: \Om \times \R \to \R$ such that
	\begin{enumerate}[(i)]
		\item $\ug(\om, \cdot)$ is continuous and strictly increasing on $\R$ for all $\om \in \Om$,
		\item  $\ug(\cdot, x) \in \Lcal^1(\Om, \Gcal, \PW)$ for all $x \in \R$,
		\item  $V_A^\Gcal(g) = \int_A \ug(\om, g(\om)) \dP, \, \forall A \in \Gcal, g \in \brvG$.
	\end{enumerate}
	Notice that for all  $A \in \Gcal$ and  $g \in \brvG$
	\begin{equation*}
		V_A^\Gcal(g) = \int_A \ug(\om, g(\om)) \dP = \int_A \mathbb{u}(\om, g(\om)) \dP = V_A(g).
	\end{equation*}
	It is then immediate to verify by definition of conditional expectation that the random variable $\ug(\cdot, x)$ is an element of the class $\Ep{\mathbb{u}(\cdot, x)|\Gcal}$. 
	
	\medskip
	
	To avoid overburdening notation, in the following step 2 we shall write $\Ep{\mbu(f)\st\Gcal}$ in place of $\Ep{\mathbb{u}(\cdot, f(\cdot))|\Gcal}$ and $\ug(g)$ instead of $\ug(\cdot, g(\cdot))$. Moreover, we shall often make the following standard abuse of notation: for any $f\in\brv$ we write $g=\Ep{\mathbb{u}(f)|\Gcal}$ to indicate  that $g \in \Ep{\mathbb{u}(f)|\Gcal}$ i.e. $g$ is a version of the conditional expectation $\Ep{\mathbb{u}(f)|\Gcal}$.
	
	\medskip
	
	\noindent \textbf{Step 2:} We show that:
	\begin{equation}
		\label{almosttakeout}
		\ug(g) \text{ belongs to }\Ep{\mbu(g)\middle|\Gcal}\quad \forall \,g\in\brvG.
	\end{equation}
	Consider indeed a simple function $g = \sum_{A \in \pi}y_A \ind_A$, with $\pi$ being a finite $\Gcal$-measurable partition of $\Om$. Then we have
	\begin{align*}
		\Ep{\mbu(g)\middle|\Gcal}
		&=\sum_{A \in \pi} \Ep{\mbu(g)\ind_{A}\middle|\Gcal} 
		= \sum_{A \in \pi}\Ep{\mbu(y_A)\ind_{A}\middle|\Gcal}\\
		&=\sum_{A \in \pi}\Ep{\mbu(y_A)\middle|\Gcal}\ind_{A}
		=\sum_{A \in \pi} \ug(y_A)\ind_{A}  =  \ug(\sum_{A \in \pi} y_A \ind_{A} ),
	\end{align*}
	and therefore \eqref{almosttakeout} is verified for $\Gcal$-measurable simple functions. We proceed by showing that the property also holds for general functions $g \in \brvG$. Take any sequence of $\Gcal$-measurable simple functions $(\underline{g}_n)_n \in \brvG$ such that $\underline{g}_n \uparrow_n g$ (again, the existence of such functions is guaranteed by standard arguments). Monotonicity of $\ug(\om, \cdot)$ ensures that $(\ug(\omega, \underline{g}_n(\om)))_n$ is an increasing sequence for all $\om \in \Om$. Finally, pointwise continuity of $\mathbb{u}$ gives
	\begin{equation*}
		\ug(g)=\lim_n \ug(\underline{g}_n),
	\end{equation*}
	and monotone convergence theorem yields \eqref{almosttakeout} for general $g \in \brvG$ as
	\begin{equation*}
		\lim_n \Ep{\mbu(\underline{g}_n) \st \Gcal} = \Ep{\mbu(g)\middle|\Gcal}.
	\end{equation*}

	\medskip
	
	\noindent\textbf{Step 3:} we now study the generalized inverse of the state-dependent utility $\ug$. If on the one hand  $(\omega,x)\mapsto \ug(\omega,x)$ is $\Gcal \otimes \Borel_\R$ measurable (by Proposition \ref{lem:from_v_to_up}), on the other we would like the generalized inverse $\Phi_\Gcal$ to enjoy the same measurability as $\ug$. We first extend the domain of  $\ug:\Om \times \R \to \R$ to $\Om \times \overline{\R}$ by denoting
	\begin{equation*}
		\ug(\om, +\infty):= \lim_{x\to+\infty}\ug(\om, x), \quad\quad \ug(\om, -\infty):= \lim_{x\to-\infty}\ug(\om, x).
	\end{equation*}
	Monotonicity and continuity of $\ug(\om, \cdot)$ for all $\om \in \Om$ ensure that the limits exist. Fixed $\om \in \Om$, we denote with $\ima{\ug} := \text{Im}(\ug(\om, \cdot))$ the image of $\ug$, that is the interval $(\ug(\om, -\infty), \ug(\om, +\infty))$, and we stress that it is $\om$-dependent and open by strict monotonicity. \\
	Let $\Phi_\Gcal:\Om \times \R \to \overline{\R}$ be the $\om$-wise right-continuous generalized inverse:
	$$
	\Phi_\Gcal(\omega,x):=\inf\{y\in\R\mid \ug(\omega,y)>x\}.
	$$
	Observe that strict monotonicity and continuity of $\ug(\om, \cdot)$ imply  $\Phi_\Gcal(\om, \cdot)$ is strictly increasing and continuous over $\ima{\ug}$ for all $\om \in \Om$. Then, in particular, $\ug(\om, \cdot): \R \to \ima{\ug}$ is a bijection and we can write 
	\begin{equation}\label{eq:phi_explicit}
		\Phi_\Gcal(\om, x) = 
		\begin{cases}
			+\infty &\quad \text{for } x \geq \ug(\om, +\infty),\\
			\ug^{-1}(\om, x) &\quad \text{for } x \in \ima{\ug}, \\
			-\infty &\quad \text{for } x \leq \ug(\om, -\infty).
		\end{cases}
	\end{equation}
	Notice that, fixed $\om \in \Om$, whenever $x \in \ima{\ug}$ then $\Phi_\Gcal(\om, x)$ is finite.
	
	We claim that $\Phi_\Gcal(\cdot, x): \Om \to \overline{\R}$ is $\Gcal$-measurable for all $x \in \R$. To see this, let $x \in \R$ be fixed and consider the set $A = (a, +\infty] \in \Borel_{\overline{\R}}$ with $a > -\infty$. We show that $\left\{\om \in \Om : \Phi_\Gcal(\om, x) \in A \right\} \in \Gcal$.
	We observe that, by strict monotonicity of $\ug$ the latter coincides with $\left\{\om \in \Om : x > \ug(\om, a) \right\}$ which is an element of $\Gcal$ as $\ug$ is $\Gcal$-measurable.
	\\ Additionally, we wish to emphasize that $\Phi_\Gcal(\om, \cdot): \R \to \extR$ is continuous for all $\om \in \Om$, and therefore $\Phi_\Gcal:\Om \times \R \to \extR$ is a Carath\'eodory function (see \cite{Aliprantis}, Definition 4.50). In particular, we observe that $\R$ with the usual euclidean topology is a separable metrizable space and $\overline{\R} = \extR$ with the standard topology on the extended real numbers is metrizable.
	Then, by virtue of \cite{Aliprantis}, Lemma 4.51, $\Phi_\Gcal$ is $\Gcal \otimes \Borel_\R$ measurable.
	
	\begin{remark}
		By \eqref{eq:phi_explicit}, for any $\om \in \Om$ and  $x \in \mathrm{Im}_\om(\ug)$ we have 
		\begin{equation}\label{inverseGmeas}
			\ug\Big(\om, \Phi_\Gcal(\om, x)\Big) = x.
		\end{equation}
	\end{remark}
	
	\noindent\textbf{Step 4:} we conclude proving the desired representation of the conditional Chisini mean. Chosen $f\in \brv$,  by strict monotonicity of $\mbu(\om, \cdot)$ we have that 
	\begin{equation*}
		\mbu(\om, -\snorm{f}) \leq \mbu(\om, f(\om)) \leq \mbu(\om, -\snorm{f}), \quad \quad \forall \; \om \in \Om.
	\end{equation*}
	Taking a version $g$ of $\Ep{\mbu(\cdot, f)\middle|\Gcal}$ we can find $A \in \Gcal$ such that $\PW(A) = 1$ and exploiting \eqref{almosttakeout} we may write
	\begin{equation}\label{eq:range_g}
		\ug(\om, -\snorm{f}) < g(\om) < \ug(\om, \snorm{f}) \quad \text{ for all } \om \in A.
	\end{equation}
	Let us define $\tilde{g} = g\ind_{A} + 0 \ind_{A^c}$ and observe that now $\tilde{g}(\omega) \in \ima{\ug}$ for all $\om \in \Om$. Moreover, $\tilde{g}$  is still a version of the conditional expectation of $f$. Recalling that the mapping $\Phi_\Gcal(\om, \cdot): \R \to \overline{\R}$ is strictly increasing on $\ima{\ug} \subseteq \R$ for all $\om \in \Om$, an application to  \eqref{eq:range_g} yields 
	$$
	\Phi_\Gcal(\om, \ug(\om, -\snorm{f})) < \Phi_\Gcal(\om, \tilde{g}(\om)) < \Phi_\Gcal(\om, \ug(\om, \snorm{f})).
	$$
	Consequently, by \eqref{inverseGmeas} $\Phi_\Gcal(\om, \ug(\om, x)) = x$ for all $\om \in \Om$ and all $x \in \R$ and it follows that that $-\snorm{f} < \Phi_\Gcal(\om, \tilde{g}) < \snorm{f}$, for every $\om \in \Omega$. Hence $\Phi_\Gcal\left(\cdot, \tilde{g}\right) \in \brvG$. In particular we have
	\begin{equation}
		\label{uphigisg}
		\ug\Big(\om, \Phi_\Gcal\big(\om, \tilde{g}(\omega)\big)\Big) = \tilde{g}(\omega)\quad\text{ for every }\om\in\Omega.   
	\end{equation}
	
	We prove that $\Phi_\Gcal(\om, \tilde{g}(\omega))$ is the conditional Chisini mean in that it solves the infinite dimensional system of equations given in \eqref{eq:sol_chisinimean},  for the preference representing functional $V_{\Omega}(f)=\int_{\Omega} \mbu(\omega, f(\omega)) \dP$ defined in Theorem \ref{thm:general_thm}. Considering an arbitrary set $A \in \Gcal \subseteq \Fcal$, we have
	\begin{align*}
		V_{\Omega}\big(\Phi_\Gcal(\cdot, \tilde{g})\ind_A\big)&=\Ep{\mbu\big(\cdot, \Phi_\Gcal\left(\cdot,\tilde{g}\right)\ind_A\big)}=\Ep{\Ep{\mbu\big(\cdot, \Phi_\Gcal\left(\cdot, \tilde{g}\right)\big)\ind_A\middle|\Gcal}}\\
		&=\Ep{\Ep{\mbu\big(\cdot, \Phi_\Gcal\left(\cdot, \tilde{g}\right)\big)\middle|\Gcal}\ind_A}
		\stackrel{\eqref{almosttakeout}}{=}\Ep{\ug\big(\cdot,\Phi_\Gcal\left(\cdot, \tilde{g}\right)\big)\ind_A}\\
		&\stackrel{\eqref{inverseGmeas}}{=}\Ep{\tilde{g}\ind_A}=\Ep{\Ep{\mbu(\cdot, f)\ind_A\middle|\Gcal}}=\Ep{\mbu(\cdot, f)\ind_A}\\
		&=\Ep{\mbu(\cdot, f\ind_A)}=V_{\Omega}(f\ind_A),
	\end{align*}
	observing that since $\mbu(\omega,0)=0$ for all $\om \in \Om$ by construction (whence also $\Phi_\Gcal(\om, 0) = 0$ for all $\om \in \Om$) we have 
	\[\begin{cases}
		\mbu(\omega,f(\omega))\ind_A(\omega)=\mbu\big(\omega,f(\omega)\ind_A(\omega)\big)\\\Phi_\Gcal\left(\omega,\tilde{g}(\omega)\right)\ind_A(\omega)=\Phi_\Gcal\big(\omega,\tilde{g}(\omega)\ind_A(\omega) \big)   
	\end{cases}\quad\text{ for every }\om\in\Omega.\]
	
	\subsection{Proof of Corollary \ref{cor:representability}}
	The proof of the corollary is merely an application of the results in the following sections, we briefly outline here the main steps. To show that (\ref{prop:ST}) implies (\ref{prop:CCE}), consider a preference order $\succeq$ that satisfies (SM), (PC) and (ST) and suppose $\Fcal$ contains at least three disjoint non-null events. The existence of a conditional Chisini mean under these assumptions is guaranteed by \cite{DM23}, Theorem 3.2 (See the proof of Proposition \ref{prop:exist_unique} for more details). Representability of conditional Chisini means is then obtained from Theorem \ref{thm:general_thm} and Theorem \ref{CCE} as follows. The former provides a pair ($\mbu, \PW$) such that $\mbu:\Om \times \R \to \R$ is $\Fcal$-regular and $\PW$-integrable. The latter allows the construction for any $\Gcal \subseteq \Fcal$ of a $\Gcal$-regular function $\ug:\Om \times \R \to \R$ that is $\PW$-integrable and such that, for all $f \in \brv$ and for a suitable version $h$ of $\Ep{\mbu(\cdot, f)|\Gcal}$, we have $\ug^{-1}\left(\cdot, h(\cdot)\right) \in \cm{f\middle|\Gcal}$.\\
	Regarding the converse implication, observe that Item (\ref{prop:CCE}) actually implies that any representing functional $T$ for $\succeq$ is conditionable for every choice of $\Gcal \subseteq \Fcal$. Therefore, the same argument used in the proof of Theorem \ref{characterization} (in particular Item (\ref{prop:condA}) $\Rightarrow$ Item (\ref{prop:ST})) shows that $\succeq$ satisfies (ST).

	\subsection{Proof of Theorem \ref{repr:nonlinear}}\label{proof:thm2}
	In this section we deal with equivalence classes of random variables, and therefore for any $f\in \brv$ we denote by $X=[f]_{\PW}$ the element in $\Linf$ such that $f\in X$.
	\medskip
	
	\noindent We first prove the converse implication of the statement. To this aim consider an $\Fcal$-regular function $\mbu:\Om \times \R \to \R$ such that $\mbu(\cdot, x)$ is integrable for all $x \in \R$ and let $\Sigma$ be a collection of sub-$\sigma$-algebras of $\Fcal$. We define a preference relation on $\brv$ by 
	\[f_1\succeq f_2 \text{ if and only if } \int_{\Omega}\mbu(\omega,f_1(\omega))\dP\geq \int_{\Omega}\mbu(\omega,f_2(\omega))\dP. \]
	For any $\Gcal \in \Sigma$ we can directly apply Theorem \ref{CCE} and find
	a $\Gcal$-regular function $\ug:\Om \times \R \to \R$ such that $\ug(\cdot, x)$ is a version of $\Ep{\mbu(x)\middle|\Gcal}$. Now fix $X\in\Linf$ and let $f$ be any representative element of the equivalence class $X$. Analogously to the argument in Section \ref{proof:CCE}, steps 3 and 4, we take a suitable version $h\in \Ep{\mbu(f)\middle| \Gcal}$ and compute
	$\Phi_\Gcal(\cdot, h(\cdot))$ where $\Phi_\Gcal:\Omega\times \R\rightarrow \overline{\R}$ is the right inverse of $\ug$, that is 
	$$\Phi_\Gcal(\omega,x):=\inf\{y\in\R\mid \ug(\omega,y)>x\}.$$
	Observe that $[\Phi_\Gcal(\cdot, h(\cdot))]_{\PW}$ does not depend on the choice of the representative $f\in X$, hence the definition $\ug^{-1}\left(\Ep{\mbu(X)\middle|\Gcal}\right) := [\Phi_\Gcal(\cdot, h(\cdot))]_{\PW}$ is well posed. In particular, the previous argument formalizes the definition of the mapping $\Ecal_\Gcal:X \mapsto \ug^{-1}\left(\Ep{\mbu(X)\middle|\Gcal}\right)$ provided in the statement of Theorem \ref{repr:nonlinear}. It remains to show that $\left\{\Ecal_\Gcal\right\}_{\Gcal \in \Sigma}$ defined above is a family of time consistent conditional nonlinear expectations such that $\Ecal_0$ satisfies Assumption \ref{ass:E0}. It is just a simple checking that for any $X\in\Linf$, $A\in\Gcal$ we have $\Ecal_{\Gcal}(X\ind_A)=\Ecal_{\Gcal}(X)\ind_A$. Similarly, $\Ecal_0(\Ecal_{\Gcal}(X))=\Ecal_0(X)$ for any  $\Gcal\in\Sigma$  and  $X\in\Linf$. Finally we observe that $\Ecal_0(X)=\mbu_0^{-1}\left( \int_{\Omega}\mbu(\omega,f(\omega))\dP\right)$
	for $\mbu_0: x \mapsto \Ep{\mbu(x)}$ and some $f\in X$. Therefore from strict monotonicity and continuity of $\mbu$ we get: 
	(a) for any $x,y\in\R$, $Z\in\Linf$ and $A\in \Fcal$ with $\PW(A)>0$, if $x<y$ then $\Ecal_{0}(x\ind_A+Z\ind_{A^c})<\Ecal_{0}(y\ind_A+Z\ind_{A^c})$; (b) for any bounded sequence $X_n$ converging $\PW$-almost surely to $X$ we have $\lim_{n\to\infty}\Ecal_0(X_n)=\Ecal_0(X)$. This concludes the first implication.


	\noindent Now suppose $\left\{\Ecal_\Gcal\right\}_{\Gcal \in \Sigma}$ is a family of time consistent nonlinear conditional expectations such that $\Ecal_0$ satisfies Assumption \ref{ass:E0}. We show that there exists an $\Fcal$-regular function $\mbu:\Om \times \R \to \R$ such that $\mbu(\cdot,x)$ is integrable for all $x \in \R$ and, for any $X \in \Linf$ and $\Gcal \in \Sigma$, $\Ecal_\Gcal$ is of the form
	$$
	\Ecal_{\Gcal}(X)=\ug^{-1}(\Ep{\mbu(X)\st\Gcal}),
	$$ with $\ug(\cdot, x) = \Ep{\mbu(x)\middle|\Gcal}$. For any $f_1,f_2\in\brv$ we define the preference order $\succeq$ as follows
	\[f_1\succeq f_2 \text{ if and only if } \Ecal_0(X_1)\geq \Ecal_0(X_2),\]
	where $X_1=[f_1]_{\PW},X_2=[f_2]_{\PW} \in\Linf$ are the equivalence classes generated by $f_1,f_2$, and notice that the preference relation $\succeq$ is well defined.
	\\ We start by showing that the null sets of the preference as defined in \eqref{defnulls} correspond to the null sets of the reference probability measure $\PW$ i.e. $\Ncal_{\succeq}=\{A\in\Fcal\mid \PW(A)=0\}$. Let $A\in \Fcal$ be such that $\PW(A)=0$ and consider $f \in \brv$ with $f$ being a representative element in $X \in \Linf$. Take any $g\in\brv$ being a representative of any $Y\in \Linf$. Then since $\PW(A)=0$, we have that $f\ind_{A^c}+g\ind_{A}$ is still a representative of $X$, thus $ \Ecal_0(X\ind_{A^c}+Y\ind_{A}) = \Ecal_0(X)$. This yields $A\in\Ncal_{\succeq}$.
	Now suppose $A \in \Fcal$ is such that $\PW(A) > 0$ and consider $x,y \in \R$ with $x < y$. For any choice of $Z \in \Linf$, monotonicity of $\Ecal_0$ implies that 
	$$
	\Ecal_0(x\ind_A + Z \ind_{A^c}) < \Ecal_0(y\ind_A + Z \ind_{A^c}),
	$$
	which in turn yields $A \notin \Ncal_\succeq$.
	\\ Secondly, we observe that $\succeq$ satisfies (SM) and (PC) as an immediate consequence of Assumption \ref{def:nonlinear}. Finally, notice that for any $f\in\brv$ the functional $T(f)=\Ecal_0(X)$, with $X=[f]_{\PW} \in \Linf$, represents $\succeq$. By the properties of the family $\left\{\Ecal_\Gcal\right\}_{\Gcal \in \Sigma}$, for any sub-$\sigma$-algebra $\Gcal$ we have $\Ecal_0(X\ind_A)=\Ecal_0(\Ecal_{\Gcal}(X)\ind_A)$ so that $T(f\ind_A)=T(g\ind_A)$ for any $A\in \Gcal$ and $g\in \Ecal_{\Gcal}(X)$. Hence $T$ is conditionable and in particular, by Theorem \ref{characterization}, $\succeq$ satisfies (ST).
	By Theorem \ref{thm:general_thm}  we can find a measure $\QW$ and a state-dependent utility $\tilde{\mbu}$ such that Eq. \eqref{repr:functional} holds, adopting the pair $(\tilde{\mbu},\QW)$. We observe that $\QW\sim \PW$ and hence we can  redefine $\mbu(\omega,x)= \frac{d\QW(\omega)}{d\PW}\tilde{\mbu}(\omega,x)$, where $\frac{d\QW}{d\PW}$ is a version of the Radon Nikodym derivative, and observe that Eq. \eqref{repr:functional} still holds true for ($\mbu,\PW$), with $\mbu$ being $\Fcal$-regular.  Applying Theorem \ref{CCE} we indeed can define 
	a $\Gcal$-regular function $\ug:\Om \times \R \to \R$, with $\ug(\cdot, x) \in \Lcal^1(\Om, \Gcal, \PW)$ for every $x \in \R$, as a version of $\Ep{\mbu(x)\middle|\Gcal}$ for every $x\in\R$. In this way, for an appropriate choice of $h\in \Ep{\mbu(f)\middle| \Gcal}$, $\Phi_\Gcal(\cdot, h(\cdot))\in\cm{f\middle|\Gcal}$ where $\Phi_\Gcal:\Omega\times \R\rightarrow \extR$ defined by
	\[\Phi_\Gcal(\omega,x):=\inf\left\{y\in\R \middle| \ug(\omega,y)>x\right\}\]
	is $\Gcal\otimes\Borel_\R$ measurable. 
	Note that the map $X\mapsto [\Phi_\Gcal(\cdot, h(\cdot))]_{\PW}$ is well defined, since the procedure above does not depend (up to $\PW$-null events) on the particular choice of a representative of $X$, and we shall write $\ug^{-1}\left(\Ep{\mbu(X)\middle| \Gcal}\right)=[\Phi_\Gcal(\cdot, h(\cdot))]_{\PW}$. By the previous argument any element $g\in \Ecal_{\Gcal}(X)$ is also an element of $\cm{f\middle|\Gcal}$ and therefore $\ug^{-1}\left(\Ep{\mbu(X)\middle| \Gcal} \right)= \Ecal_{\Gcal}(X)$.
	
	\begin{remark}
		It is worthy of note that the ``taking out what is known'' property in Definition \ref{def:nonlinear} jointly with Assumption \ref{ass:E0} imply that for any $Y \in \LinfG$ we have $\Ecal_\Gcal(Y) = Y$. To see this, first observe that by strict monotonicity and time consistency one can show $\Ecal_\Gcal(x) = x$ for $x \in \R$. Consequently, applying the ``taking out what is known'' property, it follows that $\Ecal_\Gcal(Z) = Z$ for any $\Gcal$-measurable simple function $Z$. To conclude, for a general $Y \in \LinfG$ we can choose a sequence of $\Gcal$-measurable simple functions $(\hat{Y}_N)_N$ such that $\hat{Y}_N \to_N Y$ with respect to the sup norm, then it can be shown that $||\Ecal_\Gcal(Y) - \hat{Y}_N||_\infty \to_N 0$ which in turn yields $\Ecal_\Gcal(Y) = Y$.
	\end{remark}
	
	\appendix
	
	\renewcommand{\thesection}{\Alph{section}}
	
	\section{Appendix: auxiliary results}
	
	\begin{lemma}\label{lem:ptwise_mon}
		Suppose $\succeq$ satisfies (SM) and (PC), then $\succeq$ is monotone on $\brv$, that is, for $f,g \in \brv$, $f(\om) \geq g(\om)$ for all $\om \in \Om$ implies $f \succeq g$.
	\end{lemma}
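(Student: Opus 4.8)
The plan is to build up monotonicity in three stages, passing from simple functions to arbitrary bounded ones, with (SM) doing the work at the level of atoms and (PC) doing the work in every limiting argument.

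First I would establish the claim for simple functions: if $s,t \in \brv$ are simple with $s \geq t$ pointwise, then $s \succeq t$. Refining to a common finite partition $\{A_i\}$ on which both are constant, say $s = x_i$ and $t = y_i$ on $A_i$ with $x_i \geq y_i$, I would pass from $t$ to $s$ by changing the value on one atom at a time and telescoping. Each single-atom change either leaves the act unchanged (when $x_i = y_i$), or is handled by (SM) when $A_i$ is non-null and $x_i > y_i$ (yielding $\succ$), or is handled by the very definition of $\Ncal_\succeq$ when $A_i$ is null (yielding $\sim$). Transitivity then gives $s \succeq t$.

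Second, I would upgrade to comparisons between a general act and a simple one by means of (PC). For $s$ simple with $s \leq h$, I would choose simple functions $s_n \uparrow h$ with $s_n \geq s$ (for instance $s_n = \max(\sigma_n, s)$ for standard lower approximations $\sigma_n \uparrow h$); the first stage gives $s_n \succeq s$. If $h \succeq s$ failed, then by completeness $s \succ h$, and (PC) applied to $s_n \to h$ would force $s \succ s_n$ for large $n$, contradicting $s_n \succeq s$. Hence $h \succeq s$; a symmetric argument, approximating $g$ from above by simple $\tau_n \downarrow g$ with $\tau_n \leq t$, shows $t \succeq g$ whenever $t$ is simple with $t \geq g$.

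Finally, for arbitrary $f \geq g$ I would argue by contradiction, assuming $g \succ f$. The crux is that where $f = g$ there is no room to insert a simple function, so I first manufacture a uniform gap: since $f + 1/n \to f$ pointwise and uniformly bounded, (PC) yields $g \succ f + \eps$ for some $\eps = 1/n > 0$. Now $h := f + \eps$ satisfies $h \geq g + \eps$, and I can sandwich: pick simple $s$ with $h - \eps/3 \leq s \leq h$ and simple $t$ with $g \leq t \leq g + \eps/3$, so that $s \geq t$ pointwise. The first stage gives $s \succeq t$, and the second stage gives $h \succeq s$ and $t \succeq g$; chaining yields $f + \eps \succeq g$, contradicting $g \succ f + \eps$, so $f \succeq g$. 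I expect this last step to be the main obstacle: the strictness of (SM) together with the absence of any gap at coincidence points is exactly what forces the combination of an $\eps$-perturbation with (PC), reducing the general case to the uniform-gap case that a simple-function sandwich can handle.
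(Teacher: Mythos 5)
Your proposal is correct, and its first stage (telescoping atom by atom over a common refinement, with (SM) on non-null atoms and the definition of $\Ncal_\succeq$ on null ones) coincides with the paper's; in fact you are slightly more careful than the paper, which attributes the strict ranking to (SM) even on atoms that could be null. Where you genuinely diverge is the passage from simple to general acts. The paper does this in one stroke: it takes simple $f_n \downarrow f$ and $g_n \uparrow g$, notes $f_n \geq g_n$ hence $f_n \succeq g_n$, and then derives a contradiction from $g \succ f$ by two successive applications of (PC) (first along $g_n \to g$ to get $g_{n_1} \succ f$, then along $f_n \to f$ to get $g_{n_1} \succ f_{n_2}$), closing the loop with monotonicity on simple functions. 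You instead prove an intermediate comparison between a general act and a simple act lying on one side of it, and then, to compare two general acts, you manufacture a uniform gap via (PC) applied to $f + 1/n \to f$ so that simple functions can be sandwiched strictly between $g$ and $f + \eps$. Both arguments are sound; the paper's is shorter and avoids the perturbation entirely, because approximating $f$ from \emph{above} and $g$ from \emph{below} automatically keeps the approximants ordered ($f_n \geq f \geq g \geq g_n$) even where $f = g$, which is precisely the degenerate situation your sandwich cannot handle without the $\eps$-shift. What your route buys is a reusable ``general versus simple'' comparison step and an explicit identification of the coincidence-set obstruction; what it costs is the extra limiting argument and the three-function sandwich, i.e.\ three applications of (PC) in total against the paper's two.
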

	
	\begin{proof}
		We begin by showing the result for simple functions. Consider two simple functions $f, g \in \brv$ such that $f(\om) \geq g(\om)$ for all $\om \in \Om$. It is possible to find a common partition $\pi := \left\{A_1, A_2, ..., A_N\right\} \subseteq \Fcal$ for some $N \in \N$ such that we can write $f = \sum_{i = 1}^N x_i \ind_{A_i}$, $g = \sum_{i=1}^N y_i \ind_{A_i}$, with $x_i, y_i \in \R$ for $i=1, ..., N$. We have
		\begin{equation*}
			f = \sum_{i = 1}^N x_i \ind_{A_i}
			\succeq y_1 \ind_{A_1} + \sum_{i = 2}^N x_i \ind_{A_i} 
			\succeq y_1 \ind_{A_1} +  y_2 \ind_{A_2} + \sum_{i = 3}^N x_i \ind_{A_i} \succeq \ldots \succeq \sum_{i = 1}^N y_i \ind_{A_i} = g,
		\end{equation*}
		where $\succeq$ will be $\sim$ if $x_i=y_i$ and $\succ$ if $x_i > y_i$ by (SM). We conclude $f \succeq g$ for any pair of simple functions. \\
		Now consider $f,g \in \brv$ such that $f(\om) \geq g(\om)$ for all $\om \in \Om$. Let $\left(f_n\right)_n, \left(g_n\right)_n \subseteq \brv$ be sequences of simple functions such that $f_n \downarrow_n f$ and $g_n \uparrow_n g$ so that $f_n\geq g_n$ for every $n\in \N$. Therefore $f_n\succeq g_n$ from the previous point, for every $n\in\N$. Now assume by contradiction that $g\succ f$, then applying (PC) we find $n_1$ such that $g_{n_1}\succ f$ and applying again (PC) we find  $n_2$ such that $g_{n_1}\succ f_{n_2}$.  By monotonicity on simple functions we have $g_n\succeq g_{n_1}\succ f_{n_2}\succeq f_n$ for every $n\geq n_1\vee n_2$ and hence a contradiction. 
	\end{proof}
	
	
	\begin{lemma}\label{lemma:wellposed}
		Let $u:\Om \times \R \to \R$ and assume that for every $\omega \in \Om$ the function $x\mapsto u(\omega,x)$ is right continuous with left limit (RCLL), nondecreasing and $u(\omega, 0) = 0$. Suppose additionally that $u(\omega, x) \in \Lcal^1(\Omega, \Fcal, \PW)$ for every $x \in \R$. Then:
		\begin{enumerate}[(i)]
			\item \label{item:joint_measurability} $u$ is $\Fcal \otimes \Borel_\R - \Borel_\R$ measurable;
			\item \label{item:okonf} for every $f \in \Bd$, $\omega \mapsto u(\omega, f(\omega)) \in \Lcal^1(\Om, \Fcal, \PW)$
		\end{enumerate}
	\end{lemma}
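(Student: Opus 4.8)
The plan is to establish the two claims in order, handling joint measurability first and then deducing integrability as a routine consequence. For item (\ref{item:joint_measurability}) the idea is to exploit right-continuity in the second variable to approximate $u$ from the right by functions that are manifestly jointly measurable. Concretely, for each $n \in \N$ I would set
$$u_n(\om, x) := u\Big(\om, \frac{\lceil nx\rceil}{n}\Big) = \sum_{k \in \ZZ} u\Big(\om, \frac{k}{n}\Big)\, \ind_{\left(\frac{k-1}{n},\, \frac{k}{n}\right]}(x).$$
Each summand is the product of the $\Fcal$-measurable map $\om \mapsto u(\om, k/n)$ (whose measurability is exactly the hypothesis $u(\cdot, k/n) \in \Lcal^1(\Om, \Fcal, \PW)$) and the $\Borel_\R$-measurable indicator of a half-open interval in $x$; hence each $u_n$ is $\Fcal \otimes \Borel_\R$-measurable. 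Since $\lceil nx\rceil / n \geq x$ for all $n$ and $\lceil nx\rceil / n \to x$ as $n \to \infty$, right-continuity of $u(\om, \cdot)$ gives $u_n(\om, x) \to u(\om, x)$ for every $(\om, x) \in \Om \times \R$. A pointwise limit of jointly measurable functions is jointly measurable, which yields (\ref{item:joint_measurability}).

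For item (\ref{item:okonf}) I would first observe that for any $f \in \Bd$ the map $\om \mapsto (\om, f(\om))$ is $\Fcal$–$\Fcal \otimes \Borel_\R$ measurable, so $\om \mapsto u(\om, f(\om))$ is $\Fcal$-measurable by composition with the jointly measurable $u$ from the previous item. For integrability, fix $k \geq 0$ with $|f(\om)| \leq k$ for every $\om \in \Om$; monotonicity of $u(\om, \cdot)$ then produces the pointwise sandwich $u(\om, -k) \leq u(\om, f(\om)) \leq u(\om, k)$, whence $|u(\om, f(\om))| \leq |u(\om, -k)| + |u(\om, k)|$. As both dominating functions belong to $\Lcal^1(\Om, \Fcal, \PW)$ by hypothesis, a domination argument gives $\om \mapsto u(\om, f(\om)) \in \Lcal^1(\Om, \Fcal, \PW)$.

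The only genuinely delicate point is the joint measurability, and within it the essential structural input is right-continuity: the discretized approximants $u_n$ approach each argument \emph{from above}, so without right-continuity they would not recover $u$ in the limit. Thus the RCLL hypothesis is precisely what makes the approximation scheme close, while monotonicity is what drives the domination in (\ref{item:okonf}). Everything else — measurability of the composition and integrability by comparison — is routine. I would take care to use the half-open intervals $\left(\frac{k-1}{n}, \frac{k}{n}\right]$ so that $\lceil nx\rceil / n$ is always the right endpoint, guaranteeing approach from above and hence compatibility with right-continuity.
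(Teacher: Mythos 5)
Your proof is correct and follows essentially the same route as the paper: discretize the second variable on a grid, take values at right endpoints so that right-continuity yields pointwise convergence of the jointly measurable approximants $u_n$ to $u$, then obtain item (\ref{item:okonf}) by composing with the graph map $\om \mapsto (\om, f(\om))$ and dominating via monotonicity between the integrable functions $u(\cdot, -k)$ and $u(\cdot, k)$. The only cosmetic difference is that the paper truncates to $[-m,m]$ with dyadic grids and lets $m \to \infty$, whereas you use a single global grid $\{k/n : k \in \ZZ\}$; both are equally valid.
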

	\begin{proof}
		We prove Item (\ref{item:joint_measurability}) in the spirit of \cite{KS91} (Proposition 1.1.13 and Remark 1.1.14) as follows. Let $m > 0$. For $n \geq 1$, $k = 0,1, ..., 2\cdot2^n -1$, we define
		\begin{equation*}
			u^{(n)}(\omega, s) := u\left(\omega, \frac{\left(k + 1\right)m}{2^{n}} - m\right) \;\;\;\;\; \text{ for } \;\;\; k\frac{m}{2^{n}} - m \leq s < \frac{\left(k + 1\right)m}{2^{n}} - m,
		\end{equation*} and fix 
		\begin{equation*}
			\begin{cases}
				u^{(n)}(\omega, s) := u\left(\omega, - m\right), \;\;\;\;\; &\text{ for } s < -m, \\
				u^{(n)}(\omega, s) := u\left(\omega, m\right), \;\;\;\;\; &\text{ for } s \geq m.
			\end{cases}
		\end{equation*} 
		The newly constructed process is piecewise constant over intervals of length $\frac{m}{2^{n}}$ and right continuous. The map $(\omega,s) \mapsto u^{(n)}(\omega, s)$ is $\Fcal \otimes \Borel_\R$-measurable. By right continuity of $u$ we have that $u^{(n)}(\omega, s) \to u(\omega, s)$  for every pair $(\omega, s) \in \Omega \times [-m, m]$ as $n \to \infty$. Therefore we obtain that the map $(\omega,s) \mapsto u(\omega, s)$ is itself $\Fcal \otimes \Borel_{[-m, m]}$-measurable as pointwise limit of a sequence of measurable functions. Finally, by letting $m \to \infty$ we have that $(\omega,s) \mapsto u(\omega, s)$ is $\Fcal \otimes \Borel_\R$-measurable.
		By the latter argument we have that $\omega \mapsto u(\omega, x)$ is $\Fcal \otimes \Borel_\R - \Borel_\R$ measurable. Moreover, for any $\Fcal - \Borel_\R$ measurable and bounded $f$, the map $\varphi: \omega \mapsto (\omega, f(\omega))$ is $\Fcal - \Fcal \otimes \Borel_\R$  measurable (see \cite{Aliprantis}, Lemma 4.49). Hence, the composition $(u \circ \varphi): \omega \mapsto u(\omega, f(\omega))$ is $\Fcal - \Borel_\R$ measurable. To conclude, since $u$ is nondecreasing, $\Lcal^1(\Om, \Fcal, \PW) \ni u(\omega, -\norm{f}_\infty) \leq u(\omega, f(\omega)) \leq u(\omega, \norm{f}_\infty) \in \Lcal^1(\Om, \Fcal, \PW)$ thus concluding the proof of Item (\ref{item:okonf}). 
	\end{proof}
	
	\begin{proposition}
		\label{propcontofuplus}
		Let $u:\Omega\times \R\rightarrow \R$ satisfying
		\begin{enumerate}[(i)]
			\item \label{i} for every $\omega\in \Omega$ $u(\omega,\cdot)$ is RCLL, nondecreasing  and $u(\omega,0)=0$;
			\item \label{ii} $u(\cdot,x)\in \Lcal^1(\Om,\Fcal, \PW)$ for every $x\in\R$;
			\item \label{iii} the functional $T_u: \brv \rightarrow \R$ defined by $T_u(f):=\int_\Omega u(\omega, f(\omega))\dP$ is continuous from below \footnote{For any $(f_n)_{n}\subset \mathcal{L}^\infty(\Omega,\Fcal)$  such that $f_n(\omega)\uparrow_n f(\omega)$ for any $\omega\in\Om$ we have $T_u(f_n)\rightarrow T_u(f)$}.
		\end{enumerate}
		Then $u$ is continuous in the following sense:  $$A_{\text{cont}}:=\{\omega\in \Omega\mid x\mapsto u(\omega,x)\text{ is continuous}\}\in\Fcal$$ and $\PW(A_{\text{cont}})=1$. 
	\end{proposition}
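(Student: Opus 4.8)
The plan is to argue by contradiction, after reducing the statement to a countable family of events. Writing $J(\om,x):=u(\om,x)-u(\om,x^-)\ge 0$ for the jump of the nondecreasing RCLL map $u(\om,\cdot)$ at $x$ (the left limit exists by (\ref{i})), continuity of $u(\om,\cdot)$ at $x$ is equivalent to $J(\om,x)=0$. Since a discontinuity occurs precisely when $J(\om,x)>0$ for some $x$, and any such $x$ lies in a rational interval $(a,b]$ with $J(\om,x)\ge 1/m$ for some $m$, it suffices to prove $\PW\bigl(\{\om:\sup_{x\in(a,b]}J(\om,x)\ge c\}\bigr)=0$ for all rationals $a<b$ and all $c>0$; measurability of $A_{\text{cont}}$ and $\PW(A_{\text{cont}})=1$ will then follow by a countable union over rationals. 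A one-point version of this—$\Ep{J(\cdot,x)}=0$ for each fixed $x$, obtained by applying (\ref{iii}) to the constant functions $x-\tfrac1n$—is immediate but insufficient, because the discontinuity set is an \emph{uncountable} union of such events; the real task is to treat all jump locations at once by selecting one measurably.

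The hard part, and the first thing I would establish, is exactly this measurable selection. The key observation is that the maximal jump can be recovered from the values of $u$ on a countable grid: for the dyadic partitions $a=s_0^n<\dots<s_{2^n}^n=t$ of $(a,t]$ I would show
\[
M_t(\om):=\sup_{x\in(a,t]}J(\om,x)=\lim_{n\to\infty}\ \max_{1\le i\le 2^n}\bigl(u(\om,s_i^n)-u(\om,s_{i-1}^n)\bigr),
\]
so that each $M_t$ is measurable as a limit of finite maxima of the (jointly measurable, by Lemma \ref{lemma:wellposed}) increments of $u$. The identity rests on a deterministic squeeze lemma for a finite measure $\nu$ on an interval: $\max_i\nu((s_{i-1}^n,s_i^n])\to\max_x\nu(\{x\})$ as the mesh tends to $0$. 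The lower bound is immediate, since the maximal atom sits in some subinterval. For the upper bound I would suppose that subintervals of vanishing length carry mass exceeding $\max_x\nu(\{x\})$, extract a convergent subsequence of their centres, and use continuity of $\nu$ from above at the limiting point to force an atom larger than the maximal atom—a contradiction. Applying this to the Lebesgue--Stieltjes measure $\nu_\om$ of $u(\om,\cdot)$, whose atoms are exactly its jumps, yields the displayed formula. With $M_t$ in hand I would set $\tau(\om):=\inf\{t\in[a,b]:M_t(\om)\ge c\}$ (with default value $b$), which is measurable because $\{\tau\le t\}=\{M_t\ge c\}$; on $\Omega_c:=\{M_b\ge c\}$ the infimum is attained with $J(\om,\tau(\om))\ge c$, as jumps of size $\ge c$ in $(a,b]$ are finite in number.

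Once measurability is settled the contradiction is short. Assume $\PW(\Omega_c)>0$ and put $f:=\tau\in\brv$ (bounded, $[a,b]$-valued) and $f_n:=f-\tfrac1n$, so $f_n\uparrow f$ pointwise with $f_n\in\brv$. On the one hand, continuity from below (\ref{iii}) gives $T_u(f_n)\to T_u(f)$. On the other hand $u(\om,f(\om)-\tfrac1n)\uparrow u(\om,f(\om)^-)$ by the existence of left limits, and since the integrands are dominated by $u(\om,\snorm{f})\in\Lcal^1(\Om,\Fcal,\PW)$, monotone convergence yields $T_u(f_n)=\Ep{u(\cdot,f-\tfrac1n)}\to\Ep{u(\cdot,f^-)}$, where $u(\cdot,f^-)$ abbreviates $\om\mapsto u(\om,f(\om)^-)$. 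Comparing the two limits gives $\Ep{u(\cdot,f)}=\Ep{u(\cdot,f^-)}$, that is $\Ep{J(\cdot,f)}=0$; but $J(\cdot,f)\ge 0$ everywhere while $J(\om,f(\om))\ge c$ on $\Omega_c$, so $\Ep{J(\cdot,f)}\ge c\,\PW(\Omega_c)>0$, a contradiction.

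Finally I would collect the pieces. The previous step shows $\PW(\Omega_c)=0$ for every rational $a<b$ and every $c=1/m$, so the discontinuity set $A_{\text{cont}}^c=\bigcup_{m\ge1}\bigcup_{a<b\in\QW}\{\om:\sup_{x\in(a,b]}J(\om,x)\ge 1/m\}$ is a countable union of null sets, hence null, while $A_{\text{cont}}$ is its measurable complement (each governing functional being measurable by the squeeze formula above). I expect the measurable selection of a jump location, resolved through that squeeze lemma, to be the only genuinely delicate point; everything else reduces to monotone convergence and the hypothesis (\ref{iii}).
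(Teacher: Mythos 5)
Your proof is correct, and its skeleton coincides with the paper's: both arguments reduce the statement to a measurable selection of the first location of a jump of size at least $c$, and then derive a contradiction by applying continuity from below (iii) to the left approximations $\tau-\tfrac1n\uparrow\tau$, the jump forcing a gap of at least $c\,\PW(\Omega_c)$ between the two limits of $T_u(\tau-\tfrac1n)$. Where you genuinely diverge is in how measurability of that selection is obtained. The paper views $t\mapsto u(\om,t-M)$ as an RCLL process adapted to the constant filtration $\Fcal_t\equiv\Fcal$ and invokes Sokol's result \cite{So13} that the first jump time of size exceeding $\varepsilon$ is a stopping time, hence $\Fcal$-measurable; the contradiction is then run on $B=\{\tau^{1/n}\le N\}$, integrating over $B$ and exploiting $u(\om,0)=0$. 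You instead prove measurability from scratch through the dyadic squeeze lemma: $\sup_{x\in(a,t]}J(\om,x)$ equals the limit of the maxima of increments of $u(\om,\cdot)$ over nested dyadic partitions (the maxima are nonincreasing under refinement, and a cluster point of short intervals carrying large mass would create an oversized atom by continuity from above of the Lebesgue--Stieltjes measure), which makes $M_t$, hence $\tau$, measurable with no stochastic-process input. Your route buys a self-contained, purely measure-theoretic argument and a clean bookkeeping of discontinuities via rational intervals and thresholds $1/m$; the paper's buys brevity by citation. Two minor points to polish if you write this up: in the convergence step the integrands $u(\cdot,f-\tfrac1n)$ should also be bounded below by an integrable function, e.g.\ $u(\cdot,a-1)$, before invoking monotone (or dominated) convergence; and the identity $\{\tau\le t\}=\{M_t\ge c\}$ holds for $t<b$ only, since the default value makes $\{\tau\le b\}=\Omega$ --- harmless for measurability, but worth stating precisely.
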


	\begin{proof}
		In the present proof we shall rely on few concepts related to the theory of stochastic processes and in particular on stopping times. We refer the reader to \cite{KS91}, page 6, for an exhaustive treatment of the notions involved.

		Lemma \ref{lemma:wellposed} guarantees that $T_u:\mathcal{L}^\infty(\Omega,\Fcal)\to \R$ is well defined and finite valued.  
		Fix  $M>0$ and define $X_t(\omega):=u(\omega, t-M)$ for $t\geq 0$, omitting the dependence on $M$ for simplicity. Take also the filtration $\Fcal_t:=\Fcal, t\geq 0$, which is trivially right continuous by definition: $\bigcap_{s>t}\Fcal_s=\Fcal=\Fcal_t$. Then $(X_t)_t$ is an RCLL process, adapted to $(\Fcal_t)_t$. Set $\Delta X_0=0$ and for $t>0$ set $\Delta X_t:=X_t-\sup_{0\leq s<t}X_s=X_t-\lim_{s\uparrow t}X_s$. 
		
		Following the idea in \cite{So13}, we define $\tau^\varepsilon$ as the first jump time for the size $\varepsilon>0$, namely $$\tau^\varepsilon:=\inf\{t\geq 0\mid \Delta X_t>\varepsilon\}\stackrel{(\star)}{=}\inf\{t> 0\mid X_t-\sup_{0\leq s<t}X_s>\varepsilon\},$$ 
		where equality $(\star)$ follows from posing $\Delta X_0=0$. Looking back at $u(\omega,\cdot)$, $\tau^\varepsilon(\omega)$ is the  first time $u(\omega,\cdot)$ jumps, in the interval $(-M,+\infty)$, with a jump size strictly greater than $\varepsilon$, with the usual convention that such time is set to be $+\infty$ if no such jump actually occurs. By \cite{So13}, $\tau^\varepsilon:\Omega\rightarrow [0,+\infty]$ is a $(\Fcal_t)_t$-stopping time, hence in particular it is $\Fcal$-measurable by definition of such a filtration. 
		Setting 
		$$A_{cont}^M:=\{\omega\in \Omega\mid [0,+\infty)\ni t\mapsto X_t(\omega)\text{ is continuous}\},$$ 
		we observe that 
		$$\Omega\setminus A_{cont}^M=\bigcup_{n\in\mathbb{N}}\{\tau^{\frac{1}{n}}<+\infty\},$$ 
		hence the measurability of $A^M_{cont}$ follows. Suppose now that $\PW(A^M_{cont})<1$, then for some $n$ we would have $\PW(\tau^{\frac1n}<+\infty)>0$ and by $\{\tau^{\frac1n}<+\infty\}=\bigcup_{N\in \mathbb{N}}\{\tau^{\frac1n}<N\}$ we could find $N$ big enough such that  $\PW(\tau^{\frac1n}\leq N)>0$. 
		\\Set, for a fixed pair  $n,N$ satisfying $\PW(\tau^{\frac1n}\leq N)>0$,  $B:=\{\tau^{\frac1n}\leq N\}$ and $Z:=\tau^{\frac1n} \ind_B\in \mathcal{L}^\infty(\Omega,\Fcal)$.
		For any fixed $\omega\in B$ we know that $\tau^{\frac1n}(\omega)=\inf\{t\geq 0\mid \Delta X_t(\omega)>1/n\}= \lim_{k\to \infty} t_k(\omega)$ for some $\omega$-dependent, non-increasing sequence $t_k(\omega)\downarrow_k\tau^{\frac1n}(\omega)$ which satisfies $\Delta X_{t_k(\omega)}(\omega)>1/n$ for every $k$. This means that, for any $k$,
		\[\frac{1}{n}\leq X_{t_k(\omega)}(\omega)-\sup_{0\leq s<t_k(\omega)}X_s(\omega) \leq X_{t_k(\omega)}-\sup_{0\leq s<\tau^{\frac1n}(\omega)}X_s (\omega)\]
		as $\sup_{0\leq s<t_k(\omega)}X_s(\omega)\geq \sup_{0\leq s<\tau^{\frac1n}(\omega)}X_s (\omega)$.
		Letting $k\to\infty$ this implies by right continuity of $(X_t)_t$
		\begin{equation}
			\label{ineqstopping}
			\frac{1}{n}\leq X_{\tau^{\frac1n}(\omega)}(\omega)-\sup_{0\leq s<\tau^{\frac1n}(\omega)}X_s (\omega).
		\end{equation}

		Take now $s_k:=(Z-\frac{1}{k}\ind_B)^+$ for $k\in\mathbb{N}$. $s_k$ is $\Fcal-\Borel_{[0,+\infty)}$ measurable and bounded. Observe that since $(X_t)_t$ is RCLL, by \cite{KS91} Remark 1.1.14 we have $(t,\omega)\mapsto X_t(\omega)$ is $\Borel_{[0,+\infty)}\otimes \Fcal-\Borel_{[0,+\infty)}$ measurable, and $\omega\mapsto (\omega, Y(\omega))$ is $\Fcal- \Fcal\otimes\Borel_{[0,+\infty)}$  measurable for every $\Fcal-\Borel_{[0,+\infty)}$ measurable bounded $Y:\Omega\rightarrow [0,+\infty)$. Hence $\omega\mapsto X_{s_k(\omega)}(\omega)$ is $\Fcal-\Borel_{[0,+\infty)}$ measurable.  By continuity from below, argued at the beginning of this proof, 
		\begin{equation}
			\label{limitsuk}T_u\big((s_k-M)\ind_B\big)\rightarrow_kT_u\big((Z-M)\ind_B\big).
		\end{equation}
		At the same time, $s_k(\omega)=Z(\omega)=0$ for every $\omega\in\Omega\setminus B$ and using \eqref{ineqstopping}  $X_{s_k(\omega)}(\omega)\leq \sup_{0\leq s<\tau^{\frac1n}(\omega)}X_s (\omega)\leq -\frac{1}{n}+ X_{\tau^{\frac1n}(\omega)}(\omega)$, for any $\omega\in B$. Recalling that $u(\omega,0)=0$ for every $\omega\in\Omega$, we can then write 
		\begin{align*}
			T_u\big((s_k-M)\ind_B\big)&= \int_B u(\omega, s_k(\omega)-M)\dP= \int_B X_{s_k(\omega)}(\omega)\dP \\
			&\stackrel{\eqref{ineqstopping}}{\leq} -\frac{1}{n}\PWp(B)+\int_B X_{\tau^{\frac1n}(\omega)}(\omega)\dP
			\\& =-\frac{1}{n} \PW(B)+\int_\Omega u(\omega, Z(\omega)-M)\dP
		\end{align*}
		so that $T_u((s_k-M)\ind_B)+\frac{1}{n}\PW(B)\leq T_u((Z-M)\ind_B)$ reaching a contradiction with \eqref{limitsuk}, as $\frac{1}{n}\PW(B)>0$ does not depend on $k$. Now, we conclude that $\PWp(A^M_{cont})=1$ for every arbitrarily fixed $M>0$. Since $A_{cont}=\bigcup_{M\in\mathbb{N}}A^M_{cont}$ we conclude that both $A_{cont}\in\Fcal$ and $\PWp(A)=1$.

	\end{proof}

	\begin{corollary}
		\label{corexistscontversion}
		Under the same assumptions of Proposition \ref{propcontofuplus}, there exists a function $\widehat{u}$ such that $\widehat{u}(\omega,\cdot)$ is nondecreasing and continuous for every $\omega\in\Omega$, $E=\{\omega\in\Omega\mid u(\omega, x)=\widehat{u}(\omega,x)\,\forall x\in\R\}\in\Fcal$ and $\PWp(E)=1$.  In particular, $T_u(f)=T_{\widehat{u}}(f),\,\forall f\in \mathcal{L}^\infty(\Omega,\Fcal)$. Finally, if $u(\omega,\cdot)$ is strictly increasing on $\mathbb{R}$ for every $\omega\in\Omega$, $\widehat{u}(\omega,\cdot)$ can be taken strictly increasing on $\mathbb{R}$ for every $\omega\in\Omega$.
	\end{corollary}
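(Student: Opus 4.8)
The plan is to exploit Proposition \ref{propcontofuplus} directly, since it already isolates the ``bad'' set where $u(\omega,\cdot)$ fails to be continuous and shows that it is $\PW$-negligible. On the measurable complement $A_{\text{cont}}$ the map $u(\omega,\cdot)$ is continuous and nondecreasing, so I would keep $u$ there and overwrite it with a harmless continuous strictly increasing profile elsewhere. Concretely, I would set
\[\widehat{u}(\omega,x) := u(\omega,x)\,\ind_{A_{\text{cont}}}(\omega) + x\,\ind_{\Omega\setminus A_{\text{cont}}}(\omega).\]

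First I would check the pointwise regularity in $x$. For $\omega\in A_{\text{cont}}$ the section $\widehat{u}(\omega,\cdot)=u(\omega,\cdot)$ is continuous by the very definition of $A_{\text{cont}}$ and nondecreasing by hypothesis (\ref{i}); for $\omega\notin A_{\text{cont}}$ the section is the identity, which is continuous and strictly increasing. Hence $\widehat{u}(\omega,\cdot)$ is continuous and nondecreasing for every $\omega\in\Omega$, with $\widehat{u}(\omega,0)=0$ in both cases. The final, strictly increasing, clause is then immediate: if each $u(\omega,\cdot)$ is strictly increasing, then on $A_{\text{cont}}$ the section $\widehat{u}(\omega,\cdot)=u(\omega,\cdot)$ is strictly increasing, while off $A_{\text{cont}}$ it is the identity, so $\widehat{u}(\omega,\cdot)$ is strictly increasing for every $\omega$.

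Next I would address joint measurability and the set $E$. Since $A_{\text{cont}}\in\Fcal$ by Proposition \ref{propcontofuplus} and $u$ is $\Fcal\otimes\Borel_\R$-measurable by Lemma \ref{lemma:wellposed}, the displayed formula exhibits $\widehat{u}$ as a sum of products of $\Fcal\otimes\Borel_\R$-measurable functions, hence $\widehat{u}$ is jointly measurable; combined with the bound $|\widehat{u}(\omega,f(\omega))|\le |u(\omega,f(\omega))|+\|f\|_\infty$ this makes $T_{\widehat{u}}$ well defined on $\brv$. For the agreement set, I would note that $A_{\text{cont}}\subseteq E$ by construction and that
\[E = A_{\text{cont}}\cup\Big((\Omega\setminus A_{\text{cont}})\cap\{\omega: u(\omega,x)=x\ \forall x\in\R\}\Big).\]
The delicate point is the a priori uncountable condition ``$u(\omega,x)=x$ for all $x$'': here I would use right continuity of $u(\omega,\cdot)$ to see that this condition is equivalent to $u(\omega,q)=q$ for all $q\in\QW$, turning it into the countable intersection $\bigcap_{q\in\QW}\{\omega: u(\omega,q)=q\}\in\Fcal$. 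This yields $E\in\Fcal$, and since $\PW(A_{\text{cont}})=1$ with $A_{\text{cont}}\subseteq E$ we obtain $\PW(E)=1$.

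Finally, because $\widehat{u}=u$ on the full-measure set $E$ (indeed already on $A_{\text{cont}}$), for every $f\in\brv$ we have $\widehat{u}(\omega,f(\omega))=u(\omega,f(\omega))$ for $\PW$-almost every $\omega$, and integrating gives $T_{\widehat{u}}(f)=T_u(f)$. The only genuinely nontrivial step is the reduction of the uncountable agreement condition to rationals via right continuity; every remaining step is bookkeeping, since Proposition \ref{propcontofuplus} has already carried out the analytic work of establishing $\PW(A_{\text{cont}})=1$.
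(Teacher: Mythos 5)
Your proof is correct and follows essentially the same route as the paper's: the identical definition $\widehat{u}(\omega,x)=u(\omega,x)\ind_{A_{\text{cont}}}(\omega)+x\ind_{\Omega\setminus A_{\text{cont}}}(\omega)$, the same pointwise regularity checks, the same measurability considerations, and the same almost-everywhere agreement argument giving $T_u=T_{\widehat{u}}$. You are in fact slightly more thorough than the paper, whose proof never explicitly verifies that $E\in\Fcal$; your reduction of the uncountable agreement condition to rationals via right continuity supplies that detail cleanly.
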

	\begin{proof}
		As a consequence of Proposition \ref{propcontofuplus}, $A_{cont}\in\Fcal$. Set $A:=A_{cont}$ and define now $\widehat{u}(\omega,x):=u(\omega,x)\ind_A(\omega)+x\ind_{\Omega\setminus A}$. Then clearly $\widehat{u}(\omega,\cdot)$ is continuous and nondecreasing either $\omega\in A$ or $\omega\in \Omega\setminus A$. Further for any fixed $x\in\R$ the map $\omega\mapsto \widehat{u}(\omega,x)$ is $\Fcal$ measurable. As $u(\omega, x) \in \Lcal^1(\Omega, \Fcal, \PW)$ implies $\widehat{u}(\omega, x) \in \Lcal^1(\Omega, \Fcal, \PW)$ for every $x \in \R$ we can invoke Lemma \ref{lemma:wellposed} to obtain that $\widehat{u}$ is $\Fcal \otimes \Borel_\R - \Borel_\R$ measurable, $T_{\widehat{u}}$ is well defined in $\Lcal^{\infty}(\Omega, \Fcal)$ and $T_u \equiv T_{\widehat{u}}$.
		
	\end{proof}

	\bibliographystyle{abbrv}

\end{document}